\numberwithin{equation}{section}
\theoremstyle{plain}
\newtheorem*{thm*}{Theorem}
\theoremstyle{plain}
\newtheorem{thm}{Theorem}[section]
\newtheorem{lem}[thm]{Lemma}
\newtheorem{prop}[thm]{Proposition}
\theoremstyle{definition}
\newtheorem{defn}[thm]{Definition}
	\newtheorem{exmp}[thm]{Example}
\newtheorem{rem}[thm]{Remark}
\newcommand*{\medcap}{\mathbin{\scalebox{1.0}[1.5]{\ensuremath{\cap}}}}%
\numberwithin{equation}{section}
\tikzset{
  big arrow/.style={
    decoration={markings,mark=at position 1 with {\arrow[scale=1.5,#1]{>}}},
    postaction={decorate},
    shorten >=0.4pt},
  big arrow/.default=black}
\newtheorem*{assumption*}{\assumptionnumber}
\providecommand{\assumptionnumber}{}
\begin{document}

\begin{titlepage}
\begin{center}
\vspace{2cm}
{\Huge\bfseries    The Geometry of $\text{G$_2$},$ Spin($7$), and Spin($8$)-models \\  }
\vspace{2cm}
{\Large
Mboyo Esole$^{1}$, Ravi Jagadeesan$^{2,3,4}$, and Monica Jinwoo Kang$^{5,6}$\\}
\vspace{.6cm}
{\large $^1$ Department of Mathematics, Northeastern University}\par
{\large  360 Huntington Avenue, Boston, MA 02115, USA.}\par
\vspace{.3cm}
{\large $^2$ Department of Mathematics, Harvard University}\par
1 Oxford Street, Cambridge, MA 02138, USA. \par
currently at \par
 $^3$ Harvard Business School\par 
Wyss Hall, Soldiers Field, Boston, MA 02163, USA.\par 
$^4$ Department of Economics, Harvard University\par 
1805 Cambridge St, Cambridge, MA 02138, USA.\par 
\vspace{.3cm}
{\large $^5$ Department of Physics,  Harvard University}\par
17 Oxford Street, Cambridge, MA 02138, USA.\par
currently at \par 
$^6$ Walter Burke Institute for Theoretical Physics\par
 California Institute of Technology \par
 1200 East California Boulevard,   Pasadena, CA 91125, USA.
  \par 
\vspace{2cm}
{ \bf{Abstract}}\\
\end{center}

We study the geometry of elliptic fibrations given by  Weierstrass models resulting from Step 6 of Tate's algorithm. Such elliptic fibrations have a discriminant locus containing  an irreducible component $S$, over which the generic fiber is of  Kodaira  type  I$^*_0$. 
In string geometry, these geometries are used to geometrically engineer 
G$_2$, Spin($7$), and Spin($8$) gauge  theories. 
We give sufficient conditions for the existence of crepant resolutions. 
When they exist, we give a complete description of all crepant resolutions and show explicitly how the network of flops matches the Coulomb branch of the associated gauge theories. 
We also compute the triple intersection numbers in each chamber.  Physically, they  correspond to the  Chern--Simons levels  of the gauge theory and depend on the choice of a Coulomb branch. 
We determine the representations associated with these elliptic fibrations by computing intersection numbers with fibral divisors and then interpreting them as weights of a representation. 
For a five-dimensional gauge theory, we compute the number of hypermultiplets in each representation by matching the triple intersection numbers with the superpotential of the theory. 
 We also  discuss anomaly cancellations of a six-dimensional supergravity theory obtained by a compactification of F-theory on an elliptically fibered Calabi--Yau threefold corresponding to a G$_2$, Spin($7$), or Spin($8$) gauge theory. 

\vfill 

\end{titlepage}

\tableofcontents
\newpage

\section{Introduction}
 The existence and geometric properties of  crepant resolutions for Weierstrass models over higher-dimensional bases are an important  but still  incomplete aspects 
 of the theory of elliptic fibrations. This study has many applications, such as in the theory of  Calabi--Yau varieties,  string geometry, and the classification of superconformal field theories.  For examples of crepant resolutions of Weierstrass models,  see \cite{Bershadsky:1996nh,Braun:2013cb,EY,ES,SU2U1,SO6, ESY1,ESY2,Euler, F4,Lawrie:2012gg,Marsano,Morrison:2011mb}. 

The goal of this paper is to study the geometry of crepant resolutions of Weierstrass models corresponding to Step 6 of Tate's algorithm \cite{Tate.Alg}.  
 Such a Weierstrass model has a discriminant locus $\Delta$ containing a nonsingular and irreducible divisor $S$ of the base $B$ such that the  geometric
fiber over the generic point of $S$ 
is of Kodaira type  I$^*_0$ (whose dual graph is the affine Dynkin diagram of type $\widetilde{\text{D}}_4$).
The divisor $S$ appears with multiplicity six in the discriminant $\Delta$ and the remainder of the discriminant $\Delta'= \Delta S^{-6}$  is typically singular. 
The Kodaira fibers over generic points of $\Delta'$ are of type I$_1$ and the divisors $S$ and $\Delta'$  do not intersect transversally. 
  At the intersection of $S$ and $\Delta'$, we have a ``collision of singularities''\footnote{When two irreducible components $\Delta_1$ and $\Delta_2$ of the discriminant locus intersect, if we denote by $T_1$ and $T_2$ the Kodaira type of their respective generic fibers, their intersection is called a collision of singularity of type $T_1+T_2$. 
The collision is said to be transverse when $\Delta_1$ and $\Delta_2$ intersect transversally.  The type of the  generic fiber over their intersection is usually not one of the Kodaira type. 
The possible types are classified for Miranda's models \cite{Miranda.smooth,Szydlo.Thesis}, which are regularizations of Weierstrass models that gives  flat elliptic fibrations  such that the $j$-invariant is a morphism. 
 Miranda's models only allow transverse collisions.
} of type  I$_1$+I$^*_0$
 yielding non-Kodaira fibers whose structures are explained in detail in later sections.
   Such a collision is not allowed in Miranda's models since the fiber I$_1$ has an infinite $j$-invariant while I$^*_0$ can only take finite values for its  $j$-invariant.

\subsection{I$_0^*$ as the most versatile Kodaira type}

The fiber    I$^*_0$  has at least three unique properties that distinguish it from all   other Kodaira fibers.
Firstly, like smooth elliptic curves,  the I$^*_0$  fiber can have a $j$-invariant of any value in the ground field of the elliptic fibration.
On the other hand, all other singular
Kodaira fibers have a $j$-invariant taking the values $0$ or $1728$, with the exceptions of the Kodaira fiber I$_{n>0}^*$ and I$_{n>0}$, whose $j$-invariant has a pole and takes an infinite value.

 Secondly, while all the other Kodaira fibers have at most two splitting types---split and non-split---the fiber I$^*_0$ distinguishes itself by having three splitting types---split, semi-split, and non-split--- corresponding to three distinct Lie algebras---namely 
G$_2$, B$_3$, and D$_4$.
As we assume that the Mordell-Weil group is trivial, these three Lie algebras correspond to the simply-connected groups G$_2$, Spin($7$), and Spin($8$). 
 It is also more natural to distinguish two different versions of the non-split case: there are two possible Galois groups, which give rise to distinct fiber degenerations, as we shall explain  below.

Lastly, the fiber of type I$^*_0$ plays a central role in Miranda's models \cite{Miranda.smooth} and in the classification of ``non-Higgsable clusters'' \cite{Morrison:2012np}.  The fiber  I$^*_0$  appears in collisions of the types $j=0$ (namely  II+I$_0^*$ and  IV+I$_0^*$)  and $j=1728$  (namely  III+I$_0^*$) \cite{Miranda.smooth}. 
 These collisions are building blocks for the non-Higgsable clusters of elliptically fibered Calabi--Yau threefolds \cite{Morrison:2012np}. 
For example, an isolated curve of self-intersection $-4$ can support a Kodaira fiber of type I$_0^*$ and an associated Lie algebra $\mathfrak{so}_8$ with a trivial representation \cite{Morrison:2012np}.  
There are also three building blocks consisting of two or three rational curves of negative self-intersection intersecting transversally.  We write $(-n_1,-n_2,-n_3,\ldots ,-n_r)$ for a chain of $r$  rational curves $C_i$, with $1\leq i \leq r$, where $C_i^2=-n_i$ and  two consecutive curves in the chain  intersect transversally. 
 A $(-2,-3)$-chain corresponds to the collision of two Kodaira fibers of type III and I$_0^*$, yielding a semi-simple Lie algebra  $\mathfrak{sp}_1\oplus \mathfrak{g}_2$ \cite{Morrison:2012np}. 
 A $(-2,-2,-3)$-chain corresponds to the chain of Kodaira fibers  II+IV+I$_0^*$, supporting the Lie algebra 
$\mathfrak{su}_2\oplus \mathfrak{g}_2$ \cite{Morrison:2012np}; finally, a  $(-2,-3,-2)$-chain corresponds to the chain of Kodaira fibers  III+I$_0^*$+III,  yielding a Lie algebra $\mathfrak{su}_2\oplus \mathfrak{so}_7\oplus \mathfrak{su}_2$ \cite{Morrison:2012np}.

\begin{table}[hbt]
\begin{center}
\scalebox{.95}{$			
\begin{array}{|c | c | c |c|} 
\hline
\scalebox{1.2}{\text{ Fiber type}} &\scalebox{1.2}{ \text{Dual graph}} & \scalebox{1.2}{\text{Splitting type}} & \scalebox{1.2}{\text{Field extension $[\kappa':\kappa]$}}
\\  \hline 
			\begin{array}{c}
			\text{I}^{*\text{ns}}_{0}
			\\
			\widetilde{\text{G}}_2^t
\end{array}
 &		
	\scalebox{.8}{$\begin{array}{c}\begin{tikzpicture}
				\node[draw,circle,thick,scale=1.25,label=above:{1}, fill=black] (1) at (0,0){};
				\node[draw,circle,thick,scale=1.25,label=above:{2}] (2) at (1.3,0){};
				\node[draw,circle,thick,scale=1.25,label=above:{1}] (3) at (2.6,0){};
				\draw[thick] (1) to (2);
				\draw[thick] (1.5,0.09) --++ (.9,0);
				\draw[thick] (1.5,-0.09) --++ (.9,0);
				\draw[thick] (1.5,0) --++ (.9,0);
				\draw[thick]
					(1.9,0) --++ (60:.25)
					(1.9,0) --++ (-60:.25);
			\end{tikzpicture}\end{array}
		$} 
& 
		\scalebox{.8}{$\begin{array}{c}\begin{tikzpicture}
				\node[draw,circle,thick,scale=1.25,label=30:{2}] (0) at (0,0){};
				\node[draw,circle,thick,scale=1.25,label=above:{1}, fill=black] (1) at (-1,0){};
				\node[draw,circle,thick,scale=1.25,label=right:{1}] (2) at (1,0){};
				\node[draw,circle,thick,scale=1.25,label=above:{1}] (3) at (90:1){};
								\node[draw,circle,thick,scale=1.25,label=below:{1}] (4) at (90:-1){};
				\draw[thick] (0) to (1);
				\draw[thick] (0) to (2);
				\draw[thick] (0) to (3);
				\draw[thick] (0) to (4);
				\draw[<->,>=stealth',semithick,dashed]  (1.1,0.3) arc (25:65:1.7cm);
				\draw[<->,>=stealth',semithick,dashed]  (1.1,-0.3) arc (-25:-65:1.7cm);
			\end{tikzpicture}
			\end{array}$}
			& 
			3 \quad \text{or} \quad 6
\\
\hline 
\begin{array}{c}
			\text{I}^{*\text{ss}}_{0}
			\\
			\widetilde{\text{B}}_3^t
\end{array}
 &
 		\scalebox{.8}{$\begin{array}{c}\begin{tikzpicture}
				\node[draw,circle,thick,scale=1.25,label=below:{1}] (1) at (0,-.5){};
				\node[draw,circle,thick,scale=1.25,label=above:{2}] (2) at (1.3,0){};
				\node[draw,circle,thick,scale=1.25,label=above:{1}] (3) at (2.6,0){};
								\node[draw,circle,thick,scale=1.25,label=above:{1}, fill=black] (4) at (0,.5){};
				\draw[thick] (1) to (2);\draw[thick] (2) to (4);
				\draw[thick] (1.5,0.09) --++ (.9,0);
				\draw[thick] (1.5,-0.09) --++ (.9,0);
				\draw[thick]
					(1.9,0) --++ (60:.25)
					(1.9,0) --++ (-60:.25);
			\end{tikzpicture}\end{array}
		$} 
& 
		\scalebox{.8}{$\begin{array}{c}\begin{tikzpicture}
				\node[draw,circle,thick,scale=1.25,label=30:{2}] (0) at (0,0){};
				\node[draw,circle,thick,scale=1.25,label=below:{1}, fill=black] (1) at (-1,0){};
				\node[draw,circle,thick,scale=1.25,label=below:{1}] (2) at (1,0){};
				\node[draw,circle,thick,scale=1.25,label=above:{1}] (3) at (90:1){};      								\node[draw,circle,thick,scale=1.25,label=below:{1}] (4) at (90:-1){};
				\draw[thick] (0) to (1);
				\draw[thick] (0) to (2);
				\draw[thick] (0) to (3);
				\draw[thick] (0) to (4);
								\draw[<->,>=stealth',semithick,dashed]  (1.1,0.3) arc (25:65:1.7cm);
			\end{tikzpicture}\end{array}
		$} 
& 2 
				\\\hline

			\begin{array}{c}
			\text{I}^{*\text{s}}_{0}
			\\
			\widetilde{\text{D}}_4
\end{array}
 &			\scalebox{.8}{$\begin{array}{c}\begin{tikzpicture}
				\node[draw,circle,thick,scale=1.25,label=30:{2}] (0) at (0,0){};
				\node[draw,circle,thick,scale=1.25,label=below:{1}, fill=black] (1) at (-1,0){};
				\node[draw,circle,thick,scale=1.25,label=below:{1}] (2) at (1,0){};
				\node[draw,circle,thick,scale=1.25,label=above:{1}] (3) at (90:1){};      								\node[draw,circle,thick,scale=1.25,label=below:{1}] (4) at (90:-1){};
				\draw[thick] (0) to (1);
				\draw[thick] (0) to (2);
				\draw[thick] (0) to (3);
				\draw[thick] (0) to (4);
			\end{tikzpicture}\end{array}
		$} 

		& 
		\scalebox{.8}{$\begin{array}{c}\begin{tikzpicture}
				\node[draw,circle,thick,scale=1.25,label=30:{2}] (0) at (0,0){};
				\node[draw,circle,thick,scale=1.25,label=below:{1}, fill=black] (1) at (-1,0){};
				\node[draw,circle,thick,scale=1.25,label=below:{1}] (2) at (1,0){};
				\node[draw,circle,thick,scale=1.25,label=above:{1}] (3) at (90:1){};      								\node[draw,circle,thick,scale=1.25,label=below:{1}] (4) at (90:-1){};
				\draw[thick] (0) to (1);
				\draw[thick] (0) to (2);
				\draw[thick] (0) to (3);
				\draw[thick] (0) to (4);
			\end{tikzpicture}\end{array}
		$} 
		& \displaystyle{1}
	
				\\\hline

\end{array}
		$} 	
	\end{center}
	\caption{
{Fibers of type  I$_0^*$. }  \label{Table:DualGraph}.
}
\end{table}

\subsection{Geometric fibers, ground fields, and field extensions} The geometry and topology of an  I$^*_0$-model  is  subtle.  
To fully appreciate the geometry of a Weierstrass model describing the fiber  I$^*_0$, one has to take into account the scheme structure of the generic fiber. This scheme structure  contains more detailed information than can be seen by the type of the geometric fiber and impacts the values of the  topological invariants. 
Fiber types, generic fibers, and geometric irreducibility play a central role in this paper. See Appendix C of \cite{Euler} for a review.

The type of a singular fiber depends on the ground field used to define its scheme structure. 
For the  fiber over the generic point of an irreducible component of the discriminant locus, the natural ground field is the residue field of the generic point \cite[\S 3.1.2, Remark 1.17]{Liu}. 
This is because in scheme theory, the fiber over a point $p$ is by definition $Y\times_B \mathop{Spec}\,\kappa(p)$\, \cite[\S 3.1.2, Definition 1.13]{Liu} and the second projection $Y_p\to \mathop{Spec} \, \kappa(p)$  gives the fiber $Y_p$  the structure of a scheme defined with  the residue field $\kappa(p)$ as its ground field.  
As the residue field $\kappa(p)$ is not necessarily  an algebraically closed field, the fiber type of $Y_p$ as a scheme over $\kappa(p)$ does not always match its Kodaira type. 
We recall that a Kodaira fiber is by definition a geometric fiber.\footnote{ A geometric fiber is such that its  components are all geometrically irreducible, i.e.  they do  not  factor into more components even after a  field extension \cite[\S 3.2, Definition 2.8]{Liu}.} 
The Kodaira type of $Y_p$ is seen only after a field extension causing all components of the fiber to become geometrically irreducible. 
 In the case of Weierstrass models coming from Tate's algorithm, the required field extension is carefully described  in Tate's algorithm to be the splitting field of an appropriate cubic or quadratic polynomial defined from the Weierstrass coefficients.
For elliptic fibrations, the type of a generic fiber $Y_p$ that is not geometric is either an affine Dynkin diagram of  type $\widetilde{\text{A}}_1$, or a twisted affine Dynkin diagram of type 
$\widetilde{\text{G}}_2^t$, $\widetilde{\text{B}}_{3+n}^t$, $\widetilde{\text{F}}_4^t$, or $\widetilde{\text{C}}_{2+n}^t$.

   \subsection{G$_2$,  Spin($7$), and Spin($8$)-models}
   
   A fiber  I$^*_0$ consists of a rational curve of multiplicity two intersecting transversally with four other rational curves. Its dual graph is the affine Dynkin diagram $\widetilde{\text{D}}_4$.
One of the four is the component touching the section of the elliptic fibration. The points of intersection of the other three nodes with the node of multiplicity two can be described by a cubic polynomial that is essentially  
  the auxiliary polynomial $P(T)$.  
The elliptic fibration  is called the G$_2$, Spin($7$), or  Spin($8$)-model, respectively, if   $P(T)$ has no $\kappa$-rational roots,  a unique $\kappa$-rational root, or three distinct $\kappa$-rational roots. 
The generic fibers over $S$ are then  denoted by
 $$\text{I$_0^{*\text{ns}}$, I$_0^{*\text{ss}}$, and I$_0^{*\text{s}}$},$$
 where ``ns'', ``ss'', and ``s'' stand  for ``non-split'', ``semi-split'', and ``split'' \cite{Bershadsky:1996nh}. 
 The fibers I$_0^{*\text{ns}}$, I$_0^{*\text{ss}}$, and I$_0^{*\text{s}}$  defined with respect to the residue field $\kappa$ are called {\it arithmetic fibers}; this distinguishes them from their geometric fiber  I$_0^{*}$
 defined in the splitting field of $P(T)$. 
 The Galois group $\textnormal{Gal}(\kappa'/ \kappa)$  is  trivial  for Spin($8$)-models, $\mathbb{Z}/2\mathbb{Z}$ for Spin($7$)-models, and 
 can be either  the symmetric group $S_3$ or the cyclic group $\mathbb{Z}/3\mathbb{Z}$ for G$_2$-models. Thus, the Galois group provides a finer invariant than the number of rational solutions of  $P(T)$. We introduce the notion of  G$_2^{S_3}$-models and G$_2^{\mathbb{Z}/3\mathbb{Z}}$-models to distinguish between the two  cases  of a G$_2$-model as they have different fiber structures. One can think of the  G$_2^{\mathbb{Z}/3\mathbb{Z}}$-model as a specialization of the G$_2^{S_3}$-model in which the discriminant $\Delta(P)$ of the auxiliary polynomial $P(T)$ is a perfect square in the residue field $\kappa$.

\subsection{Crepant resolutions, flops, hyperplane arrangements, and Coulomb branches}

Each crepant resolution of a singular Weierstrass model is a relative minimal model (in the sense of the Minimal Model Program) over the Weierstrass model \cite{Matsuki.Weyl}. 
When the base of the fibration is a curve, the Weierstrass model has a unique crepant resolution. 
 When the base is of dimension two or higher, a crepant resolution does not always exist; furthermore, when it does, it is not necessarily unique.  
 Different crepant resolutions of the same Weierstrass model are connected by a finite sequence of  flops. 
 Following F-theory, we attach to a given elliptic fibration a Lie algebra $\mathfrak{g}$, a representation $\mathbf{R}$ of $\mathfrak{g}$, and a hyperplane arrangement I($\mathfrak{g},\mathbf{R}$). 
The Lie algebra $\mathfrak{g}$ and the representation $\mathbf{R}$  are determined by the fibers   over codimension-one  and codimension-two points, respectively, of the base in the discriminant locus.  
The hyperplane arrangement I($\mathfrak{g},\mathbf{R}$) is defined inside the dual fundamental Weyl chamber of $\mathfrak{g}$ (i.e. the  dual cone of the fundamental Weyl chamber of $\mathfrak{g}$), and its hyperplanes are the set of kernels of  the weights of $\mathbf{R}$.

The network of flops is studied using the hyperplane arrangement I($\mathfrak{g}, \mathbf{R}$) inspired from  the theory of Coulomb branches of five-dimensional supersymmetric gauge theories with eight supercharges \cite{IMS}. 
The network of crepant resolutions  is isomorphic  to 
the network of chambers of the hyperplane arrangement I$(\mathfrak{g}, \mathbf{R})$ defined by splitting the dual fundamental Weyl chamber of the Lie algebra $\mathfrak{g}$ by the hyperplanes dual to the weights of $\mathbf{R}$. 
  The hyperplane arrangement  I$(\mathfrak{g}, \mathbf{R})$, its relation to the Coulomb branches of supersymmetric gauge theories and the network of crepant resolutions are studied in 
 \cite{IMS,ESY1,ESY2,Diaconescu:1998cn,Hayashi:2014kca,EJJN1,EJJN2}. 

The algorithm that we use to determine the representation $\mathbf{R}$ works for any base of dimension two or higher and does not require us to impose the Calabi--Yau condition.
The algorithm consists of three steps. 
We start by identifying those vertical curves that are relative extremal curves appearing over divisors of $S$, over which the irreducible components of the generic fiber of $S$  degenerate. 
We then associate to each of these curves a weight  computed geometrically as minus of the intersections  of the  curve  with the fibral  divisors.  
Lastly, the representation $\mathbf{R}$ is then determined from these weights using the notion of saturated set of weights  introduced by Bourbaki \cite{Bourbaki.GLA79}.

The irreducible representations associated with the G$_2$-model are the adjoint representation $\bf{14}$, and the fundamental representation $\bf{7}$. The irreducible representations associated with the Spin($7$)-model are the adjoint representation $\bf{21}$, the vector representation $\bf{7}$, and the spinor representation $\bf{8}$. The irreducible representations for the Spin($8$)-model are the adjoint representation $\bf{28}$, the vector representation $\bf{8}_v$ and the two irreducible spinor representations $\bf{8}_s$ and $\bf{8}_c$. 
We note that all the representations are real. Moreover,  in each case we get the full list of fundamental representations.  
Moreover,  the $\bf{7}$ of G$_2$ and Spin($7$) are quasi-minuscule; the $\bf{8}$ of Spin($7$) is minuscule; the $\bf{8}_v$, $\bf{8}_s$, and $\bf{8}_c$ of Spin($8$) are minuscule while  the adjoint $\bf{28}$ of Spin($8$) is quasi-minuscule.

We show  that each  G$_2$-model has  a unique crepant resolution, each Spin($7$)-model has two crepant resolutions connected by a simple flop, and each 
Spin($8$)-model has six distinct crepant resolutions forming a cycle--- under conditions that ensure the existence of crepant resolutions. Figures \ref{fig:G2Phases}, \ref{fig:Spin7Phases}, and  \ref{fig:Spin8Phases} on page \pageref{fig:G2Phases} depicts the structure of the set of crepant resolutions in relationship to the geometry of the Coulomb branch.

When the Weierstrass coefficients vanish to high order along the component $S$ of the discriminant locus, $\mathbb{Q}$-factorial  terminal singularities are obstructions to  the existence of a crepant resolution depending on the dimension of the base $B$.  See section \ref{Sec:G2Z3Terminal} for the case of the G$_2$-model and section \ref{Sec:Spin7First} for the case of terminal singularities with the Spin($7$)-model.
In contrast, crepant resolutions were recently shown to exist in \cite{F4} for F$_4$-models for generic coefficients of arbitrary valuations, as long as the restrictions of Step 7 of Tate's algorithm are satisfied.
We also show by direct inspection that each of the crepant resolutions we obtained determines a flat fibration. 
 $\mathbb{Q}$-factorial  terminal singularities  have been discussed recently in F-theory in \cite{Arras:2016evy}.

In this paper, we consider elliptic fibrations over bases of arbitrary dimensions. For F-theory and M-theory applications, we focus mostly on compactifications yielding five and six-dimensional gauge theories with eight supercharges. In particular, we do not discuss Sen's weak coupling limit of these theories. However, 
we point out that  the weak coupling limit of G$_2$, Spin($7$), and Spin($8$)-models gives a local  $\mathfrak{so}$($8$)-gauge theory  realized by eight D7 branes on top of an O7 orientifold  \cite[Table 4]{Esole:2012tf}. Such an $\mathfrak{so}$($8$)-gauge theory can also be constructed using K-theory 
as in \cite[\S 4.2.4]{CDE}.

\subsection{Non-Kodaira fibers and Tate's algorithm in higher codimension}

The study of the fiber structure of G$_2$, Spin($7$), and Spin($8$)-models is surprisingly rich in non-Kodaira fibers. We get eight types of non-Kodaira fibers in the fiber degenerations of these elliptic fibrations. They are listed in Figure \ref{Fig:NonKodaira}.
To put this in perspective, we recall that in the theory of Miranda's models, there are only seven non-Kodaira fibers across all possible collisions, while here, 
the Spin($7$)-model on its own already produces six non-Kodaira fibers.  Miranda has observed that non-Kodaira fibers of  Miranda's models of elliptic threefolds \cite{Miranda.smooth}  are contractions of Kodaira fibers (see also  \cite{Cattaneo}). 
Here we see that this is also true in higher codimension for the crepant resolutions of the Weierstrass model resulting from Step 6 of Tate's algorithm. 

With a careful study of the crepant resolutions of the Weierstrass models resulting from Step 6 of Tate's algorithm, it is possible to predict (for Step 6) the possible higher codimension fibers from the valuations of the coefficients of the 
Weierstrass equation. 
In many cases, these valuations do not completely determine the fiber type: different crepant resolutions of the same Weierstrass model can have distinct fiber types over points of codimension two or higher.  

\subsection{Five and six dimensional theories with eight supercharges}

M-theory compactified on elliptically-fibered Calabi--Yau threefolds corresponding to I$_0^*$-models yields five-dimensional $\mathcal{N}=1$ supergravity theories with gauge groups G$_2$, Spin($7$), or Spin($8$). 
In each case, we determine the number of charged hypermuliplets transforming in each irreducible representation of the gauge group  
by comparing the polynomials of triple intersection of fibral divisors with the one-loop prepotential in the appropriate chamber of the Coulomb branch. 

F-theory compactified on the same threefold yields a six-dimensional $\mathcal{N}=(1,0)$ theory with the same gauge group. 
 We consider local anomalies of G$_2$, Spin($7$), and Spin($8$)-models to constrain  the number of charged hypermultiplets in each irreducible representations of the gauge group.

We see that the number of representations found for five-dimensional theories are compatible with an anomaly-free six-dimensional theory. 
The number of charged hypermultiplets we find  matches the expectation from Witten's quantization method generalized by Aspinwall, Katz and Morrison \cite{Aspinwall:2000kf}.
In the G$_2$ and Spin($7$)-model, the anomaly cancellation conditions of the six-dimensional theory reproduce exactly the counting of charged hypermultiplets derived in the five-dimensional theory using triple intersection numbers. 
However, in the Spin($8$)-model, we cannot completely determine the number of charged hypermultiplets transforming in the representation $\mathbf{8_c}$ and $\mathbf{8_s}$, but their sum is determined.

\subsection{Road map to the rest of the paper}

The rest of the paper is organized as follows. In section \ref{Sec:preliminaries}, we introduce  some basic definitions and fix our conventions in section \ref{Sec:DefConv} and review Step 6 of Tate's algorithm in section \ref{Sec:Tate}.  
In  section \ref{Sec:summary} we summarize the main mathematical results of the paper. 
We first derive  canonical forms for G$_2$, Spin($7$), and Spin($8$)-models in section \ref{Sec:Canonical}  by scrutinizing  Step 6 of  Tate's algorithm.
 We distinguish between two types of G$_2$-models using the Galois group of the splitting field of the associated polynomial $P(T)$ used in Step 6 of Tate's algorithm. 
They have  distinct fiber structures and $j$-invariants over $S$. 
 We also distinguish between two  Weierstrass models for Spin($7$)-models  by their fiber degenerations and their $j$-invariants over $S$.  
  We then study the existence of crepant resolutions for these canonical forms and determine the  fiber structure for each resolution. 
We compute the Chern--Simons coefficients as the  triple intersection numbers of the fibral divisors in section \ref{Sec:Triple}. 
We study G$_2$-models in section \ref{Sec:G2}, Spin($7$)-models in section \ref{Sec:Spin7}, and Spin($8$)-models in section \ref{Sec:Spin8}.
 In section \ref{Sec:Phys}, we apply the results collected in the previous sections to  describe the five-dimensional $\mathcal{N}=1$ theories regarding the I$^*_0$ models. We also count the number of representations of each model by comparing the triple intersection numbers and the cubic prepotential for five-dimensional theories. In contrast, 
in section \ref{sec:anomaly6d}, 
we derive from six-dimensional $\mathcal{N}=(1,0)$ theories with gague groups G$_2$, Spin($7$), and Spin($8$) the number of multiplets that cancels anomalies. We review the generalities of 6d anomaly cancellations and find the countings of matter contents. We consider global anomalies for G$_2$-models in section \ref{sec:6dgravG2}.

\section{Preliminaries} \label{Sec:preliminaries}

In this section, we introduce our conventions and some basic definitions   in  section \ref{Sec:DefConv} and review Step 6 of Tate's algorithm in section \ref{Sec:Tate}.  

\subsection{Definitions and conventions}\label{Sec:DefConv}

By a variety, we mean an irreducible algebraic variety over the complex numbers.
Given a morphism $Y \to B,$ we denote by $Y_p$ the fiber of $Y$ over a point $p$.
Let $S=V(s)\subset B$ be an effective Cartier divisor of  $B$ given by the  zero scheme of a section $s$ of a line bundle $\mathscr{S}$. 
We assume that $S$ is a smooth irreducible variety and denote its generic point by $\eta$. 
The function field at $\eta$ is a local field $\mathscr{O}_{B,\eta}$ with maximal ideal  $\mathfrak{m}_\eta$ and  residue field  $\kappa_\eta=\mathscr{O}_{B,\eta}/ \mathfrak{m}_\eta$. 
When there is  no possibility for confusion, we simply write $\kappa_\eta$ as $\kappa$. 
The triplet ($\mathscr{O}_{B,\eta}, \mathfrak{m}_\eta, \kappa_\eta$) defines a discrete valuation ring induced by $S$, and we denote its valuation by  $v_S$ and take $s$ as a uniformizing parameter. 
A function $f\in \mathscr{O}_{B,\eta}$ has valuation $n$ (i.e. $v_S(f)=n$) if and only if $n$ is the order of the zero of $f$ at $s=0$; if $f$ has a pole at $s=0$, the valuation is negative. Furthermore, by definition,  $v_S(0)=\infty$ and $v_S(1)=0$. 

A genus-one fibration is a surjective proper morphism $\varphi: Y\to B$  between algebraic varieties  such that the generic fiber is a smooth curve of genus one. 
An elliptic fibration is  a genus-one fibration  endowed with  a  rational section, which is a rational map $\sigma: B \dashrightarrow Y$ such that $\varphi\circ \sigma$  is the identity when restricted to the domain of $\sigma$.
We will also assume that the base $B$  is a smooth  (quasi)-projective variety, and that the Mordell-Weil group of the fibration (i.e. the group of rational sections) is trivial. 
The locus $\Delta$ of points $p$ such that the fiber $Y_p$ is singular is called the discriminant locus of the fibration. Note that a smooth fiber over a closed point is characterized (up to isomorphism) by its $j$-invariant.

Any elliptic fibration over  a smooth variety defined with an algebraically closed field is birational to a (possibly singular) Weierstrass model \cite{Deligne.Formulaire,MumfordSuominen,Nakayama.OWM}. 
For Weierstrass models, we use the notation of Tate as presented in Deligne's formulaire \cite{Deligne.Formulaire}. 
Tate's algorithm determines the type of the geometric fiber over the special point of a Weierstrass model over a discrete valuation ring by manipulating the coefficients of the Weierstrass model \cite{Tate.Alg}. 
There are well-known formulas to compute the discriminant locus $\Delta$ and  the $j$-invariant of a Weierstrass model \cite{Deligne.Formulaire}.  
The type of singular fibers over a generic point of a divisor $S$ of the base are classified by Kodaira and N\'eron. 
 We denote the type of the  geometric generic fiber over an irreducible component of the discriminant locus by one of the Kodaira symbols  I$_{n}$, IV, III,  II, I$_{n}^*$,  IV$^*$, III$^*$, II$^*$ \cite{Kodaira}.

  An algebraic $d$-cycle of a Noetherian scheme $X$ is an element of the free group $Z_d(X)$ generated by irreducible algebraic subvarieites of  $X$ of dimension $d$. 
 Following Kodaira \cite{Kodaira}, we introduce the following definition. 

 \begin{defn}[Fiber type]
The {\em type} of an algebraic 1-cycle $\Theta\in Z_1(X)$  with irreducible decomposition $\Theta=\sum_i m_i \Theta_i$ consists of the isomorphism class of each irreducible curve $\Theta_i$, together with the topological structure of the reduced polyhedron $\sum_i \Theta_{i}$. 
\end{defn}

The topological structure of the polyhedron $\sum_i \Theta_{i}$ is characterized by the underlying set of the scheme-theoretic intersection $\Theta_i  \medcap \Theta_j$ ($i\neq j$) of the irreducible components $\Theta_i$.

\begin{defn}[Dual graph]
Given an algebraic one-cycle $\Theta$ with irreducible decomposition  
$\Theta=\sum_i m_i \Theta_{i}$,  we associate a  weighted graph (called the {\em dual graph} of $\Theta$) such that
\begin{itemize}
\item 
 the vertices are the irreducible components $\Theta_i$ of the fiber,
\item
 the weight of the vertex  corresponding to an irreducible component $\Theta_i$ is its multiplicity  $m_i$
\item 
 the vertices corresponding to the irreducible components $\Theta_i$ and $\Theta_j$ ($i\neq j$) are connected by $\hat{\Theta}_{i,j}=\mathrm{deg}[\Theta_i \medcap \Theta_j]$ edges.\footnote{The degree $\mathrm{deg}(\alpha_0)$ of a zero-cycle $\alpha_0$ is defined by  passing to the  Chow ring  and using the degree defined in  \cite[Definition 1.4]{Fulton}.}
\end{itemize}
\end{defn}

The type of the geometric fiber over a codimension one point of a minimal elliptic fibration is called the {\em Kodaira type} of the fiber. 
 As shown by Kodaira \cite{Kodaira} and N\'eron \cite{Neron}, there are 10 Kodaira types, and we denote them with the notation of Kodaira.
The dual graph of a Kodaira fiber is always an affine Dynkin diagram of type ADE (see Table \ref{Table.KodairaNeron}), while  the dual graph of a generic fiber itself can be the affine Dynkin diagram of a non-simply-laced simple Lie algebra. 

Let $G$ be a simply-connected  simple Lie group with Dynkin diagram $\mathfrak{g}$. 
 We denote by $\widetilde{\mathfrak{g}}$ the affine Dynkin diagram that reduces, upon removal of its extra node, to the Dynkin diagram $\mathfrak{g}$. Following  Carter \cite{Carter},  we write  $\widetilde{\mathfrak{g}}^t$ for its Langlands dual, namely,  the twisted Dynkin diagram whose Cartan matrix is the transpose of the Cartan matrix of $\widetilde{\mathfrak{g}}$.
 In particular, $\widetilde{\mathfrak{g}}$ and $\widetilde{\mathfrak{g}}^t$ are distinct only when $\mathfrak{g}$ is not simply laced (that is, when $\mathfrak{g}$=B$_k$, C$_k$, G$_2$, or F$_4$).

\begin{table}[tbh]
\begin{center}
$
\begin{array}{|c|c|c|c|c|c|c|c|c|c|c|c|}
\hline 
\text{Tate's Step} & 1 & 2  & 3 & 4 & 5  & 6 & 7 & 8 & 9 & 10 \\
\hline
\text{Kodaira's symbol} & \text{I}_0 & \text{I}_{n>0} & \text{II} & \text{III} & \text{IV} &  \text{I}_0^* & \text{I}_{n>0}^* & \text{IV}^* & \text{III}^* & \text{II}^* \\
\hline
\text{N\'eron's type} & \text{A} & \text{B}_n & \text{C}_1 & \text{C}_2 & \text{C}_3 &  \text{C}_4 & \text{C}_{5,n} & \text{C}_6 & \text{C}_7 & \text{C}_8 \\
\hline 
\vrule width 0pt height 3ex 
\text{Dual graph} & \text{--} & \widetilde{\text{A}}_{n-1} &  \widetilde{\text{A}}_1 &  \widetilde{\text{A}}_1 &  \widetilde{\text{A}}_2 &   \widetilde{\text{D}}_4 &  \widetilde{\text{D}}_{4+n} &  \widetilde{\text{E}}_6 &  \widetilde{\text{E}}_7 &  \widetilde{\text{E}}_8 \\
\hline
v(\Delta)& 0 & n & 2& 3 & 4 & 6& 6+n &   8  &  9 & 10\\
\hline 
v(j)&  0 & -n & 0 & 0 & 0 & 0 & -n & 0 & 0 & 0 \\
\hline
j &  j\in \mathbb{C} &j=\infty & j=0 & j=1728 & j=0 & j\in \mathbb{C} & j=\infty & j=0 & j=1728 & j=0 \\
\hline 
\end{array}
$
\end{center}
\caption{ Kodaira-N\'eron classification. When working over an algebraically closed field, the fiber type is determined uniquely by the valuation of the discriminant  $\Delta$ and the $j$-invariant. \label{Table.KodairaNeron} }
\end{table}

 \begin{defn}[$\mathcal{K}$-model]
 Let $\mathcal{ K}$ be a fiber type.  Let $S\subset B$ be a smooth divisor of a projective variety $B$. 
 An elliptic fibration  $\varphi:Y\longrightarrow B$ over $B$  is said to be  a  \emph{$\mathcal{K}$-model} if
the discriminant locus $\Delta(\varphi)$ contains as an irreducible component a divisor  $S\subset B$ such that the generic fiber over $S$ is of type $\mathcal{ K}$ and any other  fiber away from $S$ is irreducible.
 \end{defn}

 The locus of  points in the base that lie below singular fibers of a non-trivial elliptic fibration is a Cartier divisor $\Delta$ called the discriminant locus of the elliptic fibration. 
We denote the irreducible components of the reduced discriminant by $\Delta_i$. 
If the elliptic fibration is minimal, the type of the  fiber over the generic point of  $\Delta_i$ of $\Delta$ has a dual graph that is  an affine Dynkin diagram $\widetilde{\mathfrak{g}}_i^t$.  If the generic fiber over $\Delta_i$ is irreducible, $\mathfrak{g}_i$ is the trivial Lie algebra since $\widetilde{\mathfrak{g}}_i^t=\widetilde{A}_0$.
The Lie algebra   $\mathfrak{g}$ associated with the elliptic fibration $\varphi:Y\to B$ is then  the direct sum 
$\mathfrak{g}=\bigoplus_i \mathfrak{g}_i, $
where the Lie algebra $\mathfrak{g}_i$ is such that the affine Dynkin diagram  $\widetilde{\mathfrak{g}}^t_i$ is the dual graph of the fiber over the generic point of $\Delta_i$.

When an elliptic $\varphi: Y \to B$ has trivial Mordell-Weil group, the compact Lie group $G$ associated with the elliptic fibration  $\varphi$ is semisimple and simply connected and is given by the formula $
G:= \exp (\bigoplus_i \mathfrak{g}_i)$,
where the index $i$ runs over all the irreducible components of the reduced discriminant locus, 
the Lie algebra $\mathfrak{g}_i$ such that the affine Dynkin diagram  $\widetilde{\mathfrak{g}}^t_i$ is the dual graph of the fiber over the generic point of the irreducible component $\Delta_i$ of the reduced discriminant 
of the elliptic fibration, and
$\exp(\bigoplus_i \mathfrak{g}_i)$ is the compact simply connected Lie group whose Lie algebra is $\bigoplus_i \mathfrak{g}_i$.

\begin{defn}[$G$-model]\label{Defn:Gmodel}
Let $G$ be a compact, simply connected Lie group.
An elliptic fibration  $\varphi:Y\to B$ with an associated Lie group $G$ and trivial Mordell-Weil group is called a \emph{$G$-model}.
\end{defn}

Given a $G$-model, the generic fiber over $S$ is an affine Dynkin diagram. The generic fiber degenerates into a fiber of different type  over points of codimension two in the base.

Given an elliptic fibration $\varphi:Y\to B$, if $S$ is an irreducible component of the discriminant locus,
the generic fiber over $S$ can degenerate further over subvarieties of $S$. We distinguish between two
types of degenerations. A degeneration is said to be {\em  arithmetic} if it modifies the type of the
fiber without changing the type of the geometric fiber. A degeneration is said to be {\em geometric} if it
modifies the geometric type of the fiber.

The irreducible curves of the degenerations over codimension-two loci give weights of a representation $\mathbf{R}$. However, they only give a subset of weights. Hence, we need an algorithm that retrieves the full representation $\mathbf{R}$ given only a few of its weights.  
This problem can be addressed systematically using the notion of a saturated set of weights introduced by  Bourbaki  \cite[Chap.VIII.\S 7. Sect. 2]{Bourbaki.GLA79}.

\begin{defn}[Saturated set of weights]
A set $\Pi$ of integral weights is {\em saturated} if for any weight $\varpi\in\Pi$ and any simple root $\alpha$,  the weight $\varpi-i\alpha$ is also in $\Pi$ for any $i$ such that $0\leq i\leq \langle \varpi,\alpha\rangle$. 
A saturated set has {\em highest weight} $\lambda$ if $\lambda\in\Lambda^+$ and $\mu\prec\lambda$ for any $\mu\in \Pi$. 
\end{defn}
\begin{defn}[Saturation of a subset]
Any subsets $\Pi$ of weights is contained in a unique smallest saturated subset. We call it the saturation of $\Pi$. 
\end{defn}
\begin{prop}\quad 
\begin{enumerate}[label=(\alph*)]
\item A saturated set of weights is invariant under the action of the Weyl group.  
\item The saturation of a set of weights $\Pi$ is finite if and only if  the set $\Pi$ is finite.
\item A saturated set with highest weight $\lambda$  consists of all dominant weights lower than or equal to $\lambda$ and their conjugates under the Weyl group.  
\end{enumerate}
\end{prop}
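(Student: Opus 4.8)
The plan is to prove the three parts in order, using throughout that $W$ is generated by the reflections $s_\alpha$ and that saturation is a closure operator compatible with the dominance order $\preceq$ (where $\mu\preceq\nu$ means $\nu-\mu$ is a nonnegative integral combination of simple roots). I read the saturation condition as quantifying over all roots $\alpha$ --- equivalently, over $\pm$ each simple root --- since this is what makes (a) hold. For (a) it then suffices to check $s_\alpha\Pi=\Pi$ for each root $\alpha$. Fix $\varpi\in\Pi$ and set $m=\langle\varpi,\alpha\rangle$, so $s_\alpha\varpi=\varpi-m\alpha$. If $m\ge 0$, the defining condition for $\varpi$ and $\alpha$ at $i=m$ gives $s_\alpha\varpi\in\Pi$; if $m<0$, the condition for the root $-\alpha$ at $i=-m$ yields $\varpi-m\alpha=s_\alpha\varpi\in\Pi$. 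Hence $s_\alpha\Pi\subseteq\Pi$, and applying the involution $s_\alpha$ once more gives equality. Since these reflections generate $W$, the set $\Pi$ is $W$-invariant.

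For (b) only the implication ``$\Pi$ finite $\Rightarrow$ its saturation finite'' requires work. A union of saturated sets is saturated, so the saturation of a finite $\Pi$ is $\bigcup_{\varpi\in\Pi}\overline{\{\varpi\}}$, and it suffices to bound a single $\overline{\{\varpi\}}$. By (a) this set is $W$-invariant, so it contains the dominant conjugate $\lambda$ of $\varpi$ and in fact equals $\overline{\{\lambda\}}$. I then claim $\overline{\{\lambda\}}\subseteq\mathrm{conv}(W\lambda)\cap(\lambda+Q)$, where $Q$ is the root lattice: the lattice condition is clear since each saturation step adds an integer multiple of a root, and the convexity condition is preserved because a single string $\varpi,\varpi-\alpha,\dots,\varpi-m\alpha$ is exactly the set of lattice points on the segment from $\varpi$ to $s_\alpha\varpi=\varpi-m\alpha$, both of which lie in the $W$-invariant convex set $\mathrm{conv}(W\lambda)$. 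A bounded region intersected with a lattice coset is finite, so $\overline{\{\lambda\}}$ is finite.

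For (c), write $\Pi(\lambda)=\bigcup_\mu W\mu$, the union over dominant $\mu\preceq\lambda$. The inclusion $\Pi\subseteq\Pi(\lambda)$ is immediate: for $\varpi\in\Pi$ its dominant conjugate $\varpi^{+}$ lies in $\Pi$ by (a), hence $\varpi^{+}\preceq\lambda$ by the highest-weight hypothesis, so $\varpi\in W\varpi^{+}\subseteq\Pi(\lambda)$. For $\Pi(\lambda)\subseteq\Pi$, since both sides are $W$-invariant it suffices to show each dominant $\mu\preceq\lambda$ lies in $\Pi$. I would argue by induction on the height of $\lambda-\mu$: the base case $\mu=\lambda\in\Pi$ is given, and for $\mu\prec\lambda$ a positivity argument (expanding $0<(\lambda-\mu,\lambda-\mu)=\sum_\beta k_\beta(\lambda-\mu,\beta)$ over the simple roots $\beta$, with $\lambda-\mu=\sum_\beta k_\beta\beta$) produces a simple root $\alpha$ with $k_\alpha>0$ and $\langle\lambda,\alpha\rangle>\langle\mu,\alpha\rangle\ge 0$. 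Then $\nu:=\mu+\alpha$ satisfies $\nu\preceq\lambda$ and $\langle\nu,\alpha\rangle\ge 2$; its dominant conjugate $\nu^{+}$ is dominant with $\nu^{+}\preceq\lambda$ and strictly smaller height defect, so $\nu^{+}\in\Pi$ by induction. Then $W$-invariance gives $\nu\in\Pi$, and the single saturation step $\nu-\alpha=\mu$ (legal since $\langle\nu,\alpha\rangle\ge 1$) gives $\mu\in\Pi$.

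The main obstacle is this inductive descent in (c), and specifically the standard but nontrivial fact on which it rests: for dominant $\lambda$, any weight $\nu\preceq\lambda$ has dominant conjugate $\nu^{+}\preceq\lambda$. This is exactly what guarantees that passing to dominant conjugates during the induction neither leaves the region below $\lambda$ nor increases the height defect; once it is available the remaining bookkeeping (that each intermediate weight is dominant and $\preceq\lambda$) is routine. By contrast, parts (a) and (b) are comparatively formal, the only delicate point being the convexity argument confining the saturation to $\mathrm{conv}(W\lambda)$.
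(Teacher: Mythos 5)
The paper does not actually prove this proposition --- its ``proof'' is the citation to \cite[Chap.~III \S 13.4]{Humphreys} --- so the relevant comparison is with that standard argument. Your parts (a) and (b) are correct, and they coincide with the standard proofs: your reading of the definition (imposing the string condition for all roots, equivalently for $\pm$ each simple root) is the necessary repair of the paper's statement, since with positive simple roots only the set $\{-\omega\}$ in type $A_1$ would be vacuously ``saturated'' and (a) would fail; your reflection argument for (a) and the confinement $\overline{\{\lambda\}}\subseteq\mathrm{conv}(W\lambda)\cap(\lambda+Q)$ for (b) are exactly right. Part (c), however, has a genuine gap, located precisely at what you call the ``standard but nontrivial fact'': the claim that $\nu\preceq\lambda$ with $\lambda$ dominant forces $\nu^{+}\preceq\lambda$ is \emph{false}. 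Already in type $A_1$, $\nu=-\alpha\preceq 0=\lambda$ but $\nu^{+}=\alpha\not\preceq 0$. Worse, the failure occurs inside your induction for a representation central to this very paper: take $\mathfrak{g}=B_3$ and $\Pi$ the weight system of the vector representation $\mathbf{7}$, so $\lambda=\omega_1=\alpha_1+\alpha_2+\alpha_3=e_1$, and let $\mu=0$ be the dominant weight to be reached. Your positivity rule selects the unique simple root with $(\lambda-\mu,\beta)>0$, namely $\alpha=\alpha_1$ (since $(e_1,\alpha_2)=(e_1,\alpha_3)=0$). Then $\nu=\mu+\alpha=e_1-e_2$, whose dominant conjugate is the highest root $\nu^{+}=e_1+e_2$, and $\lambda-\nu^{+}=-e_2\notin Q^{+}$. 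So the inductive hypothesis cannot be applied to $\nu^{+}$; indeed $\nu\notin\Pi$ and $\nu^{+}\notin\Pi$, i.e.\ climbing up from $\mu$ by adding a simple root can exit $\Pi$ altogether, and no bookkeeping fixes this.

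The repair is to run the induction \emph{downward inside} $\Pi$ rather than upward from $\mu$, which is how Humphreys argues. Given dominant $\mu\preceq\lambda$, consider any $\mu'=\mu+\sum_\beta k_\beta\beta\in\Pi$ with all $k_\beta\in\mathbb{Z}_{\geq 0}$ (for instance $\mu'=\lambda$). If $\mu'\neq\mu$, then $0<(\mu'-\mu,\mu'-\mu)=\sum_\beta k_\beta\,(\mu'-\mu,\beta)$ produces a simple root $\alpha$ with $k_\alpha>0$ and $(\mu'-\mu,\alpha)>0$; since $\mu$ is dominant, $\langle\mu',\alpha\rangle=\langle\mu,\alpha\rangle+\langle\mu'-\mu,\alpha\rangle\geq 1$, so the saturation string for $(\mu',\alpha)$ at $i=1$ gives $\mu'-\alpha\in\Pi$, which is again of the form $\mu+\sum_\beta k'_\beta\beta$ with $k'_\beta\geq 0$ and smaller height. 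Iterating descends to $\mu\in\Pi$. All intermediate weights stay in $\Pi$ by construction (they need not be dominant, and one never passes to dominant conjugates), so the false lemma is never invoked; combined with your correct argument for $\Pi\subseteq\Pi(\lambda)$ and $W$-invariance, this completes (c).
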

\begin{proof}
See   \cite[Chap. III \S 13.4]{Humphreys}.
\end{proof}

\begin{thm}\label{Thm:R-Saturation}
Let $\Pi$ be a finite saturated  set of weights. Then there exists a finite dimensional  $\mathfrak{g}$-module  whose set of weights is $\Pi$. 
\end{thm}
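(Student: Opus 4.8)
The plan is to reduce the statement to the theorem of the highest weight, using the structural facts about saturated sets recorded in the Proposition above; the point is that we are only asked to realize $\Pi$ as the \emph{support} of the weights of some finite-dimensional module, so multiplicities are irrelevant and we are free to build the module as a direct sum of irreducibles and ignore any overlap between their weight sets. First I would record that a finite saturated set is a finite union of Weyl orbits of dominant weights. By part (a) of the Proposition, $\Pi$ is invariant under the Weyl group $W$. Let $D := \Pi \cap \Lambda^+$ be its set of dominant weights; since $\Pi$ is finite, so is $D$. Because every weight is $W$-conjugate to a unique dominant weight and $\Pi$ is $W$-stable, each element of $\Pi$ lies in the $W$-orbit of some $\lambda \in D$, and conversely every such orbit is contained in $\Pi$; thus $\Pi = \bigcup_{\lambda \in D} W\cdot\lambda$. (If $\Pi = \emptyset$ we simply take the zero module.)

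Next, for each $\lambda \in D$ I would invoke the theorem of the highest weight (Humphreys, \S 21): there is a finite-dimensional irreducible $\mathfrak{g}$-module $V(\lambda)$ with highest weight $\lambda$, and the set of weights of any finite-dimensional module is saturated. Hence the weight set of $V(\lambda)$ is a saturated set with highest weight $\lambda$, which by part (c) of the Proposition equals the saturation of $\{\lambda\}$, namely all dominant weights $\preceq \lambda$ together with their $W$-conjugates. Since $\lambda \in \Pi$ and $\Pi$ is saturated, the saturation of $\{\lambda\}$ is contained in $\Pi$; therefore the weight set of $V(\lambda)$ is contained in $\Pi$.

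Finally I would set $V := \bigoplus_{\lambda \in D} V(\lambda)$, a finite-dimensional $\mathfrak{g}$-module, and check that its weight set is exactly $\Pi$. The inclusion $\subseteq$ is the previous step applied to each summand. For $\supseteq$, any $\nu \in \Pi$ is $W$-conjugate to some $\mu \in D$; as $\mu$ is a weight of $V(\mu)$ and weight sets are $W$-invariant, $\nu$ is a weight of $V(\mu)$, hence of $V$. This produces the desired module.

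The substantive input is the theorem of the highest weight---the existence of $V(\lambda)$ and the identification of its weight set with the saturation of $\{\lambda\}$---which I would simply quote; everything else is bookkeeping with $W$-invariance and the defining property of saturation. The only point requiring genuine care is conceptual rather than computational: because a single saturated set need not possess a highest weight (it may have several maximal dominant weights), $\Pi$ cannot in general be realized by one irreducible module, and it is precisely the decomposition into the orbits $W\cdot\lambda$, $\lambda \in D$, that licenses the passage to a direct sum.
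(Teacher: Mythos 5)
Your proof is correct, but it is genuinely different from what the paper does: the paper offers no argument at all, disposing of the theorem with a citation to Bourbaki (Chap.~VIII, \S 7, Sect.~2, Corollary to Prop.~5). Your construction---write $\Pi=\bigcup_{\lambda\in D}W\cdot\lambda$ with $D=\Pi\cap\Lambda^{+}$, realize each dominant $\lambda\in D$ by the irreducible module $V(\lambda)$ from the theorem of the highest weight, and take $V=\bigoplus_{\lambda\in D}V(\lambda)$---is the natural one, and it is self-contained relative to material the paper has already quoted: you use only parts (a) and (c) of the Proposition and standard facts from Humphreys, so the reader gets an actual proof instead of a pointer into Bourbaki. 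The observation in your last paragraph is exactly the right conceptual point: since $\Pi$ may have several maximal dominant weights, no single irreducible can work, and the direct sum (with multiplicities irrelevant, since only the support of the weights matters) is what makes the statement true.

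One step should be tightened. You assert that the weight set of $V(\lambda)$ \emph{equals} the saturation of $\{\lambda\}$, but part (c) as stated only identifies saturated sets \emph{with highest weight $\lambda$}; applied to the weight set of $V(\lambda)$ it gives that this set is $\{\text{dominant }\mu\preceq\lambda\}$ together with its $W$-conjugates, and it is a separate (small) claim that the \emph{smallest} saturated set containing $\lambda$ is this large. The patch is two lines: by minimality the saturation of $\{\lambda\}$ is contained in the weight set of $V(\lambda)$, hence all of its elements are $\preceq\lambda$, hence it is itself a saturated set with highest weight $\lambda$, and part (c) forces it to coincide with the weight set of $V(\lambda)$. (Equivalently, and avoiding saturations of subsets altogether: $\Pi\cap\operatorname{wt}V(\lambda)$ is saturated---intersections of saturated sets are saturated---contains $\lambda$, and has highest weight $\lambda$, so by (c) it is all of $\operatorname{wt}V(\lambda)$, giving $\operatorname{wt}V(\lambda)\subseteq\Pi$ directly.) With that sentence added, every step of your argument---the orbit decomposition, the inclusion of each $\operatorname{wt}V(\lambda)$ in $\Pi$, and the two inclusions for the direct sum---is complete.
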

\begin{proof}
\cite[Chap.VIII.\S 7. Sect. 2, Corollary to Prop. 5]{Bourbaki.GLA79}.
\end{proof}

\begin{defn}[Weight vector of a vertical curve]\label{Def:WeightVerticalCurve}
Let $C$ be a vertical curve, i.e.  a curve contained in a fiber of the elliptic fibration.
Let $S$ be an irreducible component of the reduced discriminant of the elliptic fibration $\varphi: Y\to B$. 
The pullback of  $\varphi^* S$ has irreducible components $D_0, D_1, \ldots, D_n$, where $D_0$ is the component touching the section of the elliptic fibration.   
The {\em weight vector} of $C$ over $S$ is  by definition the vector ${\varpi}_S(C)=(-D_1\cdot C, \ldots, -D_n\cdot C)$ of intersection numbers $D_i\cdot C$ for $i=1,\ldots, n$. 
\end{defn}
\begin{defn}[Representation of  a $G$-model]\label{Def.Rep}
To a $G$-model, we associate a representation $\mathbf{R}$ of the Lie algebra $\mathfrak{g}$ as follows. 
 The weight vectors of the irreducible vertical   rational curves of the fibers over codimension-two points form  a set $\Pi$  whose  saturation defines uniquely a representation  $\mathbf{R}$ by  Theorem \ref{Thm:R-Saturation}. 
  We call this representation  $\mathbf{R}$  the representation of the $G$-model. 
 \end{defn}
Definition \ref{Def.Rep}  is essentially a formalization of the method of Aspinwall and Gross \cite[\S 4]{Aspinwall:1996nk}.
Note that we always get the adjoint representation as a summand of $\mathbf{R}$.
 There are subtleties when the divisor $S$ is singular \cite{Anderson:2015cqy,Klevers:2016jsz}.

\begin{defn}[Weierstrass model]
Consider a variety $B$ endowed with a line bundle $\mathscr{L}\rightarrow B$.
A Weierstrass model $\mathscr{E} \rightarrow B$ over $B$ is a hypersurface  cut out by the zero locus of a section of the line bundle of $\mathscr{O}(3)\otimes\pi^* \mathscr{L}^{\otimes 6}$ in the projective bundle  $\mathbb{P}(\mathscr{O}_B \oplus\mathscr{L}^{\otimes 2} \oplus\mathscr{L}^{\otimes 3})\rightarrow B$. 
We denote by $\mathscr{O}(1)$  the dual of the tautological line bundle of the projective bundle, and denote by $\mathscr{O}(n)$ ($n>0$) its $n$th-tensor product.  
The relative projective coordinates of the $\mathbb{P}^2$ bundle are denoted by $[x:y:z]$. In particular,  $x$ is a section of $\mathscr{O}(1)\otimes\pi^* \mathscr{L}^{\otimes 2}$, $y$ is a section of $\mathscr{O}(1)\otimes\pi^* \mathscr{L}^{\otimes 3}$, and $z$ is a section of $\mathscr{O}(1)$. 
 Following Tate and Deligne's notation, the defining equation of a  Weierstrass model is
\begin{equation}
\mathscr{E}: \quad  zy(y +a_1 x+ a_3 z) -(x^3 +a_2 x^2 z+a_4 xz^2 +a_6 z^3)=0,
\end{equation}
 where the coefficient $a_i$ ($i=1,2,3,4,6$) is a section of $\mathscr{L}^{\otimes i}$ on $B$. Such a hypersurface is an elliptic fibration since over the generic point of the base, the fiber is a nonsingular  cubic planar curve with a rational point ($x=z=0$). 
    We use the convention of  Deligne's formulaire \cite{Deligne.Formulaire}: 
\begin{equation}
\begin{aligned}
b_2 &= a_1^2+ 4 a_2,\quad
b_4 = a_1 a_3 + 2 a_4 ,\quad
b_6  = a_3^2 + 4 a_6 , \quad
b_8  =b_2 a_6 -a_1 a_3 a_4 + a_2 a_3^2-a_4^2,\\
c_4 & = b_2^2 -24 b_4, \quad
c_6 = -b_2^3+ 36 b_2 b_4 -216 b_6,\\
\Delta &= -b_2^2 b_8 -8 b_4^3 -27 b_6^2 + 9 b_2 b_4 b_6,\quad  j =\frac{c_4^3}{\Delta}.
\end{aligned}
\end{equation}
These quantities are related by the  relations
\begin{equation}
4 b_8 =b_2 b_6 -b_4^2 \quad \text{and}\quad 1728 \Delta=c_4^3 -c_6^2. 
\end{equation}
\end{defn}

 The discriminant locus is the subvariety of $B$ cut out by the equation $\Delta=0$, and is the locus of points  $p$ of the base $B$ such that the fiber over $p$ (i.e. $Y_p$)  is singular. 
 Over a generic point of $\Delta$, the fiber is a nodal cubic that  degenerates to a cuspidal  cubic over the codimension-two locus $c_4=c_6=0$. Up to isomorphisms, the
 $j$-invariant $j=c_4^3/ \Delta$ uniquely characterizes nonsingular elliptic curves.

\begin{defn}[Resolution of singularities]
A resolution of singularities of a variety $Y$ is a proper birational morphism $\varphi:\widetilde{Y}\longrightarrow Y$  such that  
$\widetilde{Y}$ is nonsingular
and  $\varphi$ is an isomorphism away  from the singular  locus of $Y$. In other words, $\widetilde{Y}$ is nonsingular and  if $U$ is the singular locus of $Y$, $\varphi$ maps $\varphi^{-1}(Y\setminus U)$isomorphically  onto $Y\setminus U$.  
\end{defn}

\begin{defn}[Crepant birational map]
A  birational map $\varphi:\widetilde{Y}\to Y$ between two algebraic varieties with  $\mathbb{Q}$-Cartier canonical classes is said to be {\em crepant} if it preserves the canonical class, i.e.  
$
K_{\widetilde{Y}}=\varphi^\ast K_Y.
$
\end{defn}

\subsection{Step 6 of Tate's algorithm}\label{Sec:Tate}

Tate's algorithm consists of eleven steps (see \cite{Tate.Alg},  \cite[\S IV.9]{Silverman.II}, \cite{Papadopoulos}, and \cite[\S 4.8]{Szydlo.Thesis}).\footnote{
In F-theory, Tate's algorithm is discussed in \cite{Bershadsky:1996nh,Morrison:1996pp,Katz:2011qp}, but usually focuses on the minimal valuations of the coefficients of the Weierstrass equation. 
Hence, we use instead the original paper of Tate  \cite{Tate.Alg}, 
 which contains some typos that are listed and corrected in
 in   \cite{Papadopoulos}. }
Step 6 of Tate's algorithm characterizes the fiber of Kodaira type I$_0^*$. 
We start with a general Weierstrass equation 
\begin{equation}
y^2 z+ a_1 xy z + a_3 yz^2=x^3+a_2 x^2 z +a_4 x z^2 +a_6 z^3.
\end{equation}
N\'eron proved in \cite{Neron} that a Weierstrass model over a discrete valuation ring has a special fiber of type  I$_0^*$ (denoted by C$_4$ in N\'eron's notation) if and only if the 
discriminant 
\begin{equation}
\Delta(x^3+a_2 x^2  +a_4 x  +a_6 )=-4 a_2^3 a_6+a_2^2a_4^2 +18 a_2a_4 a_6 -4 a_4^3-27 a_6^2
\end{equation}
  has valuation $6$ with respect to $S$, and the valuation of the Weierstrass 
coefficients satisfies the following inequalities:
\begin{equation}\label{Neron.val.I0*}
v_S(a_1)\geq 1, \quad v_S(a_2)\geq 1, \quad v_S(a_3)\geq 2, \quad v_S(a_4)\geq 2, \quad v_S(a_6)\geq 3.
\end{equation}
This case is further studied by Tate in Step 6 of his algorithm \cite{Tate.Alg}. Tate defines the following auxiliary cubic polynomial in  the polynomial ring $\kappa[T]$ to be
\begin{equation}\label{Eq:Cubic}
P(T)= T^3+ a_{2,1} T^2 + a_{4,2} T + a_{6,3},
\end{equation}
where $a_{i,j}= a_i / s^{j}$. 
The splitting field of the cubic $P(T)$ in $\kappa$ is denoted by $\kappa'$. 
The discriminant $\Delta(P)$ of $P(T)$ is 
\begin{equation}
\Delta(P):=4 a_{2,1}^3 a_{6,3}-a_{2,1}^2 a_{4,2}^2-18 a_{2,1} a_{4,2} a_{6,3}+4 a_{4,2}^3+27 a_{6,3}^2.
\end{equation}
The polynomial $P(T)$ is separable in $\kappa$ if and only if $P(T)$ has three distinct roots in $\kappa'$. 
This is the case if and only if the discriminant  $\Delta(P)$ of  $P(T)$ has valuation zero. 
This condition is equivalent to N\'eron's requirement discussed above. 
In view of the inequalities  in equation \eqref{Neron.val.I0*}, the discriminant of the full Weierstrass equation has valuation $6$ if and only if the valuation of $\Delta(P)$ is zero. 
The type of the geometric fiber is the same as the type of the fiber as seen in the splitting field $\kappa'$ of the polynomial $P(T)$ in $\kappa$. 
 The type of the special fiber as a scheme over $\kappa$ depends on the degree $[\kappa':\kappa]$ of the field extension $\kappa'/\kappa$: 
\begin{center}
\begin{tabular}{lll}
$\bullet$ $[\kappa':\kappa]=6\implies$ I$_0^{*\text{ns}}$ & with Galois group  $S_3$  and dual graph $\widetilde{G}_2^t$,\\
$\bullet$ $[\kappa':\kappa]=3\implies$ I$_0^{*\text{ns}}$ & with Galois group  $\mathbb{Z}/3 \mathbb{Z}$ and dual graph $\widetilde{G}_2^t$,\\
$\bullet$   $[\kappa':\kappa]=2\implies$ I$_0^{*\text{ss}}$ & with Galois group $\mathbb{Z}/2 \mathbb{Z}$ and dual graph $\widetilde{\text{B}}_3^t$,\\
$\bullet$    $[\kappa':\kappa]=1\implies$ I$_0^{*\text{s}}$ & with trivial Galois group and dual graph $\widetilde{\text{D}}_4$,
\end{tabular}
\end{center}
 where $``{}ns"$, $``{}ss"$, and $``{}s"$ stand  for {\em non-split}, {\em semi-split}, and {\em split}, respectively \cite{Bershadsky:1996nh}.\footnote{In the notation of Liu \cite[\S 10.2]{Liu}, the fibers I$_0^{*\text{ns}}$, I$_0^{*\text{ss}}$, and I$_0^{*\text{s}}$  are denoted by I$^*_{0,3}$,  I$^*_{0,2}$, and I$^*_{0}$, respectively.}

 If the discriminant of  $P(T)$ does not have a $\kappa$-rational root, then the fiber is of type I$^{*\text{ns}}_{0}$ with dual graph $\widetilde{G}_2^t$. 
 If  $P(T)$ has a unique $\kappa$-root, then the fiber is of type I$^{*\text{ss}}_0$ with dual graph $\widetilde{\text{B}}_3^t$. 
 If  $P(T)$ has three  $\kappa$-roots, then the fiber is of type I$^{*\text{s}}_0$ with dual graph $\widetilde{\text{D}}_4$.

There are two cases of fibers I$^{*\text{ns}}_0$, depending on whether the Galois group of the field extension $\kappa'/\kappa$ is either  $\mathbb{Z}/3\mathbb{Z}$ or $S_3$. 
The two cases differ  by the behavior of   the discriminant of $P(T)$ in view of the following well-known theorem. 
\begin{lem}[Galois group of a cubic polynomial]\label{lem:Gal.Cubic}The Galois group of the splitting field of a separable cubic polynomial $P(T)$, defined over a  field $\kappa$ of characteristic different from $2$ and $3$,  is 
\begin{enumerate}
\item $S_3$ if and only if the $P(T)$ is $\kappa$-irreducible  and its  discriminant  is not a perfect square. 
\item $\mathbb{Z}/3\mathbb{Z}$ if and only if $P(T)$ is  $\kappa$-irreducible  and its  discriminant  is a perfect square. 
\item   $\mathbb{Z}/2\mathbb{Z}$ if and only if  $P(T)$ factorizes into a linear factor and a $\kappa$-irreducible quadric. 
\item the trivial group  if and only if $P(T)$ factorizes into three linear factors over $\kappa$. 
\end{enumerate}
\end{lem}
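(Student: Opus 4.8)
The plan is to realize the Galois group $G := \textnormal{Gal}(\kappa'/\kappa)$ as a subgroup of the symmetric group $S_3$ acting on the three roots of $P(T)$, and then to pin down which subgroup occurs by combining two invariants: the transitivity of the action (which reads off irreducibility) and the behavior of a square root of the discriminant (which detects the alternating subgroup). Since $P(T)$ is separable, its splitting field $\kappa'$ is Galois over $\kappa$, and every $\sigma \in G$ permutes the roots $r_1, r_2, r_3$; because $\kappa'$ is generated by these roots, the resulting homomorphism $G \hookrightarrow S_3$ is injective. Up to conjugacy the subgroups of $S_3$ are exactly the trivial group, $\mathbb{Z}/2\mathbb{Z}$, $A_3 \cong \mathbb{Z}/3\mathbb{Z}$, and $S_3$, so it remains only to match each subgroup to the stated conditions.

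First I would record the standard dictionary between the factorization type of $P(T)$ and the orbit structure of $G$ on the roots. The action of $G$ is transitive on $\{r_1,r_2,r_3\}$ if and only if $P(T)$ is $\kappa$-irreducible; transitivity forces $3 \mid |G|$, leaving only $A_3$ and $S_3$. Conversely, if $P(T)$ is reducible it has a root in $\kappa$, which is therefore fixed by all of $G$, so $G$ lies in a point stabilizer and thus $|G| \le 2$; this isolates cases (3) and (4). In the reducible case, after removing the linear factor the residual quadratic is $\kappa$-irreducible exactly when $G = \mathbb{Z}/2\mathbb{Z}$ and splits over $\kappa$ exactly when $G$ is trivial, giving the two statements directly.

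The step that carries the real content is the discriminant criterion, which separates $A_3$ from $S_3$ in the irreducible case. Set $\delta := (r_1 - r_2)(r_1 - r_3)(r_2 - r_3)$, so that $\delta^2 = \Delta(P)$. A permutation $\sigma \in S_3$ sends $\delta$ to $\textnormal{sgn}(\sigma)\,\delta$; here the hypothesis $\textnormal{char}\,\kappa \ne 2$ is essential, since it guarantees $\delta \ne -\delta$ when $\delta \ne 0$ (and separability gives $\delta \ne 0$), so that $\delta$ is fixed by $\sigma$ precisely when $\sigma$ is even. Hence $\delta$ lies in the fixed field $\kappa$ if and only if $G \subseteq A_3$, which is to say $\Delta(P)$ is a perfect square in $\kappa$. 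Combining this with the transitivity analysis: when $P(T)$ is irreducible, $G = \mathbb{Z}/3\mathbb{Z}$ if $\Delta(P)$ is a square and $G = S_3$ otherwise, yielding (1) and (2).

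I expect the only genuine obstacle to be the bookkeeping in the reducible case and the careful use of the characteristic hypotheses. The assumption $\textnormal{char}\,\kappa \ne 2$ is what makes the sign character faithful on $\delta$, while separability (equivalently $\Delta(P) \ne 0$, which the running hypotheses on valuations guarantee) is what keeps the three roots distinct, so that both $G \hookrightarrow S_3$ and $\delta \ne 0$ hold; the condition $\textnormal{char}\,\kappa \ne 3$ enters only to keep the discriminant formula and the standard cubic reductions valid. With these caveats the four equivalences follow immediately from the subgroup classification of $S_3$.
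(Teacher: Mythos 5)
Your proof is correct. Note, however, that the paper does not actually prove this lemma: its entire ``proof'' is a citation to Lang's \emph{Algebra} (Chap.~VI, \S 2), so your writeup is a self-contained rendition of exactly the standard argument that the paper outsources. The two key ingredients you use --- the embedding $\mathrm{Gal}(\kappa'/\kappa)\hookrightarrow S_3$ with transitivity detecting $\kappa$-irreducibility, and the square root of the discriminant $\delta=(r_1-r_2)(r_1-r_3)(r_2-r_3)$ detecting containment in $A_3$ via the sign character (valid precisely because $\mathrm{char}\,\kappa\neq 2$ and separability give $\delta\neq -\delta$) --- are the same ones in the cited reference, so nothing is gained or lost in generality; what your version buys is that the reader need not consult Lang. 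One small sharpening you correctly gesture at: the hypothesis $\mathrm{char}\,\kappa\neq 3$ is never actually used in the argument (the discriminant as $\prod_{i<j}(r_i-r_j)^2$ and the $A_3$ criterion need only $\mathrm{char}\,\kappa\neq 2$); it matters only if one insists on the depressed-cubic normal form and Cardano-style formulas, which your proof avoids.
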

\begin{proof}  See \cite[Chap. VI \S 2]{Lang}. \end{proof}

Lemma \ref{lem:Gal.Cubic} provides a direct route to the classification of  fibers of type I$_0^{*\text{s}}$, I$_0^{*\text{ss}}$, and I$_0^{*\text{ns}}$ using the Galois group of the splitting field of $P(T)$. 
A more refined classification will also take into account the values of the $j$-invariant. 
  In contrast to other singular Kodaira fibers, the fiber of type I$^*_0$ can take any finite value of the $j$-invariant.
However, for the arithmetic fibers, there are some restrictions. For example, a fiber of type I$_0^{*\text{ss}}$  cannot have $j=0$. 

\section{Summary of results}\label{Sec:summary}
 In this section, we  summarize the main mathematical results of the paper.
  In section \ref{Sec:Phys}, we discuss the physical implications of our results for five- and six-dimensional gauge theories obtained from of F-theory and M-theory compactifications.

\subsection{Canonical forms and crepant resolutions for G$_2$, Spin($7$), and Spin($8$)-models}\label{Sec:Canonical}

We summarize the canonical forms for G$_2$, Spin($7$), and Spin($8$)-models. 
 These forms are derived in Theorems   \ref{Thm:G2Can},  \ref{Thm:Spin7Can}, and  \ref{Thm:Spin8Normal}. 
We assume that the Mordell-Weil group of the elliptic fibrations is trivial. 
Using Step 6 of Tate's algorithm, we obtain the Weierstrass model  describing a  Kodaira fiber of type I$^*_0$ over the generic point of $S=V(s)$  where $s$ is a section of the line bundle $\mathscr{S}=\mathscr{O}_B(S)$. This model takes one of the following forms below, which are organized by the fiber type when seen as a scheme over the residue field of $S$, as well as  the value of the $j$-invariant: 
\begin{align}
\bullet\quad \textnormal{G}_2^{S_3}:& \quad y^2z-x^3-s^{2} f xz^2 -s^3 gz^3=0, \quad fg\neq 0, 
\quad  j\neq 0, 1728,\\
\intertext{where $4f^3+27 g^2$ is not a perfect square modulo $s$. }
\bullet\quad \textnormal{G}_2^{\mathbb{Z}/3\mathbb{Z}}:& \quad y^2z-x^3-s^{2} f xz^2 -s^3 gz^3=0, \quad fg\neq 0, \quad  j\neq 0, 1728,\\
\intertext{where $4f^3+27 g^2$ is a perfect square modulo $s$. In particular, the following G$_2^{\mathbb{Z}/3\mathbb{Z}}$-model is uniquely specified by the valuations of the coefficients:}
\bullet\quad \textnormal{G}_2^{\mathbb{Z}/3\mathbb{Z}}:& \quad y^2z-x^3-s^{3+\alpha} f x z^2 -s^3 gz^3=0, \quad \alpha\in \mathbb{Z}_{\geq 0}\quad  j=0,\\
\intertext{where $g$ is not a cube modulo $s$; }
\bullet\quad \textnormal{Spin($7$)}: &\quad y^2 z -( x^3 + a_{2,1} s x^2  z+ a_{4,2} s^2 x z^2+ a_{6,4} s^{4+\beta} z^3)=0,\quad \beta\in \mathbb{Z}_{\geq 0}, \quad j\neq 0, 1728,
\intertext{where $a_{2,1}$ is not proportional to $s$;}
\bullet\quad \textnormal{Spin($7$)}: & \quad y^2 z-x^3-s^{2} f x z^2-s^{4+\beta} g z^3=0,\quad\beta\in \mathbb{Z}_{\geq 0},\quad \quad j=1728,\\
\intertext{where $f$ is not a square modulo $s$; }
\bullet\quad \textnormal{Spin($8$)}: &
\quad
zy^2=(x-s x_1 z)(x-s x_2 z)(x+s x_3z)- s^{2+\alpha} Qz, \quad Q= r  x^2 +q s x z-s^2 t z^2,\quad \alpha\in \mathbb{Z}_{\geq 0},
\end{align}
where $ \quad v_S((x_1-x_2)(x_1-x_3)(x_2-x_3))=0$ and  $(r,q,t)\neq (0,0,0)$.
The $j$-invariant of a Spin($8$)-model is $j= 1728\frac{A^3}{A^3-B^2}= 1728 +1728\frac{B^2}{A^3-B^2}$, where
$A =-16(-x_1^2-x_2^2-x_3^2 +x_1x_2+x_1 x_3+x_2x_3)$ and 
$B=32(-2 x_1+x_2+x_3) (x_1-2 x_2+x_3) (x_1+x_2-2x_3)$.
 In particular,   $j=0$ when $A=0$ and $j=1728$ when $B=0$.

In contrast to the usual forms seen in the F-theory literature, we do not restrict to the  minimal values for the valuations of the coefficients. 
Allowing more general valuations enables  a richer set of  behaviors for the degeneration and  the $j$-invariant. In the case of Spin($7$), depending on the dimension of the base, the valuations can become an obstruction to the existence of a crepant resolution when they are too big. 

\subsection{Crepant resolutions}\label{Sec:Crepant}

\
 The following sequences of blowups provide crepant resolutions for the  Weierstrass models defined in section  \ref{Sec:Canonical}. These are, however, valid only  under some conditions which will be discussed in Theorems    \ref{Thm:G2Res},  \ref{Thm:Spin7Res1}, and \ref{Thm:Spin8Res}. 
In some cases, the non-minimal valuations obstruct the existence of a crepant resolution. 
We assume that the coefficients of the Weierstrass models are general except for their valuations  with respect to $S$. 
To prove smoothness, we also have to impose some light conditions on the coefficients that are usually left unspecified in the F-theory literature.

\begin{equation}
\label{eq:BlowUps}
\begin{array}{rl}
{\textnormal{ G$_2$}}  & \quad 
{ \begin{tikzcd}[column sep=huge] 
  X_0  \arrow[leftarrow]{r} {(x,y,s|e_1)} & \arrow[leftarrow]{r}{(y,e_1|e_2)} X_1 &X_2  
  \end{tikzcd}}\\
{\textnormal{Spin($7$)}} & \quad  \begin{tikzcd}[column sep=huge] 
  & & & X_3^+ \arrow[dashed,leftrightarrow]{dd} \\
  X_0  \arrow[leftarrow]{r} {(x,y,s|e_1)} & \arrow[leftarrow]{r}{(y,e_1|e_2)} X_1 & \arrow[leftarrow]{ru}{(e_2, x|e_3)} \arrow[leftarrow]{rd} [left,midway]{(e_2, Q|e_3)} X_2  &\\
  & & & X_3^-\vspace{.3cm}
  \end{tikzcd}\\
{\textnormal{Spin($8$)}}  & \quad  \begin{tikzcd}[column sep=huge] 
  X_0  \arrow[leftarrow]{r} {(x,y,s|e_1)} & \arrow[leftarrow]{r}{(y,e_1|e_2)} X_1 & X_2 \arrow[leftarrow]{r}{(x-x_i s z,e_2|e_3)} & X_3 \arrow[leftarrow]{r}{(x-x_j s z,e_2|e_4)} & X_4
  \end{tikzcd}
\end{array}
\end{equation}

For \textnormal{Spin($8$)}, $i,j$ are two distinct elements of $\{1,2,3\}$ and hence define six distinct crepant resolutions.   
We have two distinct crepant resolutions for \textnormal{Spin($7$)}, and a unique one for \textnormal{G}$_2$.

\subsection{Hyperplane arrangement $\mathrm{I}(\mathfrak{g},  \mathbf{R})$}

Let  $\mathfrak{g}$ be a semi-simple Lie algebra and $ \mathbf{R}$ a representation of $\mathfrak{g}$.
The kernel of each  weight $\varpi$ of $\mathbf{R}$ defines a  hyperplane $\varpi^\perp$ through the origin of the Cartan sub-algebra of $\mathfrak{g}$.

\begin{defn}
The hyperplane arrangement I($\mathfrak{g},\mathbf{R}$) is defined inside the dual fundamental Weyl chamber of $\mathfrak{g}$, i.e. the  dual cone of the fundamental Weyl chamber of $\mathfrak{g}$, and its hyperplanes are the set of kernels of  the weights of $\mathbf{R}$. 
\end{defn}
For each $G$-model, we associate the hyperplane arrangement $\mathrm{I}(\mathfrak{g},  \mathbf{R})$ using the representation $\mathbf{R}$ induced by the weights of vertical rational curves produced by degenerations of the generic fiber over codimension-two points of the base. We then study the incidence structure of the hyperplane arrangement I$(\mathfrak{g}, \mathbf{R})$.

\begin{prop}
The weights of the vertical curves
 over codimension-two points are

\def\arraystretch{1}
\begin{center}
\scalebox{.95}{
\begin{tabular}{|c|c|c|c|c|c|}
\hline 
& G$_2$ & \multicolumn{2}{c}{\textnormal{Spin($7$)}} &\multicolumn{2}{|c|}{\textnormal{Spin($8$)}} \\
\hline 
Locus & $V(s,4f^3+ 2g^2)$  & $V(s,a_{4,2})$  & $V(s,a_{2,1}^2-4a_{4,2})$ &\multicolumn{2}{|c|}{$V(s,d)$}   \\
\hline 
\vrule width 0pt height 3ex 
Weights &$\boxed{2 \  -1}$   & $\boxed{0 \  1\    -2}$  & $\pm\boxed{1\  0  \  -1}$ &\multicolumn{2}{|c|}{$\boxed{0 \  0  \  -1  \  1}$, $ \boxed{-1 \ 0  \  0 \  1}$, $\boxed{-1\  0\  1\  0}$}   \\
\hline 
$\mathbf{R}$ & $\mathbf{7}$  & $\mathbf{7}$  & $\mathbf{8}$  &\multicolumn{2}{|c|}{  $\mathbf{8}_v\oplus\mathbf{8}_s\oplus \mathbf{8}_c$} \\
\hline
$\mathrm{I}(\mathfrak{g},\mathbf{R}) $&$\mathrm{I}(\text{G$_2$},\mathbf{7}) $   & \multicolumn{2}{|c|}{  $\mathrm{I}(B_3,\mathbf{7}\oplus \mathbf{8}$)} &\multicolumn{2}{|c|}{$ \mathrm{I}(D_4,\mathbf{8}_v\oplus\mathbf{8}_s\oplus \mathbf{8}_c)$}   \\
\hline 
 \# of Chambers &$1$   & \multicolumn{2}{|c|}{  $2$} &\multicolumn{2}{|c|}{$6$}   \\
\hline 

\end{tabular}
}
\end{center}
\end{prop}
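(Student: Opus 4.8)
The plan is to reduce the four assertions packaged in each column of the table (the boxed weights, the representation $\mathbf{R}$, the arrangement $\mathrm{I}(\mathfrak{g},\mathbf{R})$, and the chamber count) to a single intersection-theoretic computation, after which everything else follows from the formal machinery already in place. Indeed, the statement about $\mathbf{R}$ follows from the weights by Definition \ref{Def.Rep} together with Theorem \ref{Thm:R-Saturation}: once the weight vectors of the vertical curves are known, their saturation is a finite saturated set of weights and hence the set of weights of a genuine finite-dimensional $\mathfrak{g}$-module, which one identifies by locating the dominant representative of each Weyl orbit and applying part (c) of the Proposition on saturated sets. The arrangement $\mathrm{I}(\mathfrak{g},\mathbf{R})$ is then read off from its definition as the kernels of the weights of $\mathbf{R}$ inside the dual fundamental Weyl chamber, and the number of chambers is a finite combinatorial count. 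Thus the entire content rests on computing the weight vectors $\varpi_S(C)=(-D_1\cdot C,\dots,-D_n\cdot C)$ of Definition \ref{Def:WeightVerticalCurve}.

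To compute these, I would work model by model using the explicit crepant resolutions in \eqref{eq:BlowUps}. First I would determine the fibral divisors $D_0,D_1,\dots,D_n$ making up $\varphi^*S$ by tracking how $V(s)$ pulls back through the chain $X_0\leftarrow X_1\leftarrow\cdots$: each exceptional divisor $e_i$ and the proper transform of the reduced central fiber contributes a component, and one checks that the resulting configuration realizes the dual graph $\widetilde{G}_2^t$, $\widetilde{B}_3^t$, or $\widetilde{D}_4$ predicted by Step 6 of Tate's algorithm (so that $n=2,3,4$ respectively, matching the lengths of the boxed weights). Next I would locate the codimension-two loci in $S$ over which the generic I$_0^*$ fiber degenerates further; these are precisely the loci recorded in the table, namely $V(s,4f^3+\cdots)$ for G$_2$, $V(s,a_{4,2})$ and $V(s,a_{2,1}^2-4a_{4,2})$ for Spin($7$), and $V(s,d)$ for Spin($8$), found by asking where the auxiliary cubic $P(T)$ acquires a multiple root or where a fibral component splits. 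Over each such locus either a fibral divisor becomes reducible or a new rational curve is born, and I would identify these irreducible vertical curves $C$ in the resolved fiber.

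The weight vector of each such $C$ is then obtained as the tuple of intersection numbers $D_i\cdot C$ computed locally in the resolved total space, using the exceptional loci of the blowups and the projection formula. This yields the boxed entries: $\boxed{2\ -1}$ for G$_2$, the vector and spinor weights $\boxed{0\ 1\ -2}$ and $\pm\boxed{1\ 0\ -1}$ for Spin($7$), and the three weights $\boxed{0\ 0\ -1\ 1},\ \boxed{-1\ 0\ 0\ 1},\ \boxed{-1\ 0\ 1\ 0}$ for Spin($8$). Saturating these sets produces $\mathbf{7}$ for G$_2$, $\mathbf{7}\oplus\mathbf{8}$ for Spin($7$), and $\mathbf{8}_v\oplus\mathbf{8}_s\oplus\mathbf{8}_c$ for Spin($8$); the only verification needed is that the dominant representative of each orbit is the highest weight of exactly the advertised irreducible summand and that no extra dominant weight below it is forced into the saturation.

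Finally, for the chamber counts I would intersect the weight hyperplanes $\varpi^\perp$ with the dual fundamental Weyl chamber and count the full-dimensional regions: for G$_2$ no weight of $\mathbf{7}$ has its kernel crossing the interior of the dual chamber, leaving a single chamber; for Spin($7$) the hyperplanes of $\mathbf{7}\oplus\mathbf{8}$ subdivide the dual chamber into two regions; and for Spin($8$) the arrangement $\mathrm{I}(D_4,\mathbf{8}_v\oplus\mathbf{8}_s\oplus\mathbf{8}_c)$ cuts it into six. I expect the genuine obstacle to be the intersection-theoretic step: correctly describing the degenerate fiber over each codimension-two locus, and in particular how the non-split components governed by the Galois action of Lemma \ref{lem:Gal.Cubic} recombine into $\kappa$-rational fibral divisors before one evaluates $D_i\cdot C$ unambiguously. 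Once the degenerate fibers and their multiplicities are pinned down, the passage to $\mathbf{R}$ and to the chamber count is purely formal.
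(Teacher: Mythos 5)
Your proposal is correct and takes essentially the same route as the paper: the paper establishes the table exactly by computing, model by model from the explicit crepant resolutions, the intersection numbers of the degenerate vertical curves over the stated codimension-two loci with the fibral divisors, then identifying $\mathbf{R}$ by saturation and counting chambers by checking which weight kernels (written in the basis of simple roots) meet the interior of the dual fundamental Weyl chamber. The only difference is completeness rather than method---the paper carries out in its G$_2$, Spin($7$), and Spin($8$) sections the intersection-theoretic computations and the Galois/splitting analysis that you correctly identify as the crux but defer.
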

The chambers are illustrated in Figures \ref{fig:G2Phases}, \ref{fig:Spin7Phases}, and \ref{fig:Spin8Phases} on page \pageref{fig:Spin8Phases} and described in the following theorem.

\begin{thm}
The hyperplane arrangement $\mathrm{I}(\text{G$_2$}, \mathbf{7})$ has a unique chamber. 
The hyperplane arrangement $\mathrm{I}(B_3, \mathbf{7}\oplus \mathbf{8})$ has two chambers whose incidence graph is the Dynkin diagram A$_2$. 
The hyperplane arrangement  $\mathrm{I}(D_4, \mathbf{8}_v\oplus \mathbf{8}_c \oplus \mathbf{8}_s)$ has six chambers whose incidence graph is a hexagon (the affine Dynkin diagram $\widetilde{\text{A}}_6$). 
\end{thm}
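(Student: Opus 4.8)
My plan is to make each arrangement completely explicit and reduce the chamber count to elementary sign analysis of linear functionals on a cone. First I fix an orthonormal model of the Cartan: the hyperplane $\sum x_i=0$ in $\mathbb{R}^3$ for $G_2$, and $\mathbb{R}^3$, $\mathbb{R}^4$ for $B_3$, $D_4$. Identifying $\mathfrak{t}$ with $\mathfrak{t}^\ast$ by the Killing form, I describe the dual fundamental Weyl chamber as the closed dominant cone $\{\phi:\langle\phi,\alpha_i\rangle\ge 0\ \forall i\}$, which in coordinates is the two-dimensional cone cut out by the two simple roots of $G_2$, the cone $\phi_1\ge\phi_2\ge\phi_3\ge 0$ for $B_3$, and $\phi_1\ge\phi_2\ge\phi_3\ge|\phi_4|$ for $D_4$. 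Writing each weight $\varpi$ of $\mathbf{R}$ as a linear functional $\varpi(\phi)$, a chamber of $\mathrm{I}(\mathfrak{g},\mathbf{R})$ is exactly a realizable sign vector $(\operatorname{sgn}\varpi(\phi))_{\varpi}$ on the open cone, and two chambers are joined in the incidence graph precisely when their sign vectors differ across a single weight hyperplane that meets the interior of the cone.

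The engine of the count is that a weight $\varpi$ subdivides the cone only when $\varpi(\phi)$ is sign-indefinite on it; dominant and antidominant weights are respectively nonnegative and nonpositive on the whole cone and contribute nothing. For $G_2$ this settles the matter at once: the weights of $\mathbf{7}$ are the six short roots together with $0$, and every root is sign-definite on the open dominant cone (positive roots positive, negative roots negative), so there is no cutting hyperplane and a single chamber. For $B_3$ the vector weights $\pm e_i$ and $0$ are sign-definite because $\phi_i\ge0$ on the cone, while among the spinor weights $\tfrac12(\pm e_1\pm e_2\pm e_3)$ exactly the pair $\pm\tfrac12(e_1-e_2-e_3)$ is indefinite; this single cutting hyperplane $\phi_1=\phi_2+\phi_3$ splits the cone into two chambers sharing one wall, so the incidence graph is the two-node, one-edge diagram $\mathrm{A}_2$.

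The $D_4$ case carries the real content. I run the same sign analysis over the twenty-four weights of $\mathbf{8}_v\oplus\mathbf{8}_s\oplus\mathbf{8}_c$ and show that, up to sign, exactly three are indefinite on the dominant cone: $e_4$ from $\mathbf{8}_v$, $\tfrac12(e_1-e_2-e_3+e_4)$ from $\mathbf{8}_s$, and $\tfrac12(e_1-e_2-e_3-e_4)$ from $\mathbf{8}_c$, with hyperplanes $H_v,H_s,H_c$ given by the functionals $f_v=\phi_4$, $f_s=\phi_1-\phi_2-\phi_3+\phi_4$, and $f_c=\phi_1-\phi_2-\phi_3-\phi_4$. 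The structural key is the linear relation $f_s-f_c=2f_v$ (equivalently, the normals satisfy $w_v=w_s-w_c$), which makes the three hyperplanes concurrent along a common codimension-two subspace; in the transverse plane they are three distinct lines through the origin and so cut out six sectors. The relation forbids exactly the sign patterns $(+,-,+)$ and $(-,+,-)$ of $(\operatorname{sgn}f_v,\operatorname{sgn}f_s,\operatorname{sgn}f_c)$, leaving the six cyclically ordered patterns $(+,+,+),(+,+,-),(+,-,-),(-,-,-),(-,-,+),(-,+,+)$. I then finish by exhibiting an explicit interior point of the dominant cone realizing each of the six patterns, so all six chambers are nonempty, and by observing that consecutive patterns differ across exactly one of $H_v,H_s,H_c$; hence the incidence graph is the hexagon.

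I expect the main obstacle to be bookkeeping rather than conceptual. Two points need care. First, certifying sign-definiteness of each individual weight functional on the cone is a finite but error-prone case analysis; triality relates $\mathbf{8}_v,\mathbf{8}_s,\mathbf{8}_c$ and gives a valuable consistency check, but the chosen coordinates break manifest triality, so each family must be verified by hand. Second, for $D_4$ one must confirm that no fourth cutting hyperplane appears and that all six sectors genuinely meet the open cone, since a priori the cone boundary could leave some sectors empty or degrade the hexagonal adjacency into a path. Establishing the concurrency relation $w_v=w_s-w_c$ is the linchpin: it is precisely what forces six chambers instead of the up to eight regions that three hyperplanes in general position would produce, and it is the arrangement-theoretic shadow of the triality symmetry among the three eight-dimensional representations.
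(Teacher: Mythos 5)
Your proposal is correct and takes essentially the same approach as the paper: determine which weight hyperplanes are sign-indefinite on the dual fundamental Weyl chamber (none for $G_2$, one for $B_3$, three for $D_4$), then for $D_4$ exploit a linear relation among the three cutting functionals --- your $f_s - f_c = 2f_v$ is the paper's $L_1 + L_3 = L_2$ in different coordinates --- to exclude exactly two sign patterns and obtain six chambers arranged in a hexagon. The only differences are cosmetic: you work in orthonormal coordinates rather than the simple-coroot/fundamental-weight bases, and you verify realizability of all six sign patterns by explicit interior points where the paper simply asserts it.
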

\begin{proof}
A hyperplane $\varpi^\perp$ (the kernel of  a weight $\varpi$)  intersects the interior of the dual fundamental Weyl chamber of $\mathfrak{g}$ if and only if when written in the basis of positive simple roots, at least two of its coefficients have  different signs. 
The hyperplane arrangement $\mathrm{I}(\text{G$_2$}, \mathbf{7})$ has a unique chamber as none of its weights have coefficients of different signs in the basis of positive simple roots. 
The hyperplane arrangement $\mathrm{I}(B_3, \mathbf{7}\oplus \mathbf{8})$ has  two chambers separated by the kernel of the unique (up to a sign) weight of the representation 
$\mathbf{7}\oplus \mathbf{8}$; this representation is not in the cone generated by the positive simple roots, namely  $\boxed{1\ 0\   -1}$. 
For Spin($8$), each of the representations  $\mathbf{8}_v$, $\mathbf{8}_c$, and $\mathbf{8}_s$ has a unique weight (up to a sign)  not in the cone generated by the positive simple roots. 
These are the following weights, written respectively for $\mathbf{8}_v$, $\mathbf{8}_c$, and $\mathbf{8}_s$ in the basis of fundamental weights: 
$$
\varpi_4^v=\boxed{0 \  0 \ -1 \ 1 }, \quad \varpi_4^c=\boxed{-1 \  0 \  0 \ 1 }, \quad \varpi_4^s=\boxed{-1 \  0 \ 1 \ 0 }.
$$
Let $\phi=(\phi_1, \phi_2, \phi_3,\phi_4)\in\mathfrak{h}$ be a vector of  the dual fundamental Weyl chamber written in the basis of simple coroots.   The linear forms corresponding to the  weights $\varpi_4^v$, $\varpi_4^c$, and $\varpi_r^s$ are
$$
L_1=(\varpi_4^v,\phi)=-\phi_3+\phi_4, \quad L_2=(\varpi_4^c,\phi)=-\phi_1 + \phi_4 , \quad L_3=(\varpi_4^s,\phi)=-\phi_1+ \phi_3.
$$
The set of chambers of $\mathrm{I}(D_4, \mathbf{8}_v\oplus \mathbf{8}_c \oplus \mathbf{8}_s)$ is in bijection with the set of all possible signs of $(L_1, L_2, L_3)$. Two chambers are adjacent to each other when the signs of $(L_1, L_2, L_3)$ differ by one entry only.  Since $L_1+L_3=L_2$, we can neither get the sign vector $(- + -)$ nor $(+ - +)$. 
It is easy to see that all the six remaining arrangements of signs are possible. 
In total, there are six chambers determined by the signs of $L_1$, $L_2$, and $L_3$ :
\begin{align}\label{Eq:Phases}
\begin{split}
& 1.\, (---)\  \phi_1>\phi_3>\phi_4 , \quad  2. \,(--+) \  \phi_3>\phi_1>\phi_4 , \quad 3.\, (-++) \  \phi_3>\phi_4>\phi_1, \\
& 4.\, (+++) \  \phi_4>\phi_3>\phi_1 , \quad  5. \,(++-) \  \phi_4>\phi_1>\phi_3 , \quad  6.\, (+--) \  \phi_1>\phi_4>\phi_3.
\end{split}
\end{align}
Two chambers are said to be adjacent when they differ  by the sign of  only one  $L_i$. 
It follows that there are six chambers organized as in Figure \ref{fig:Spin8Phases} on page \pageref{fig:Spin8Phases}. 
\end{proof}

\subsection{Matching the crepant resolutions and 
the chambers of
 the  hyperplane arrangement}

For each crepant resolution, we study the fiber structure and compute geometrically the weights of vertical curves. These weights uniquely determine a chamber. 
G$_2$-models only have one crepant resolution.  
For Spin($7$)-models, the crepant resolution $Y^\pm$ corresponds to $\pm \boxed{1,0,-1}$. 
For Spin($8$)-models, the vertical curves give the weights $\pm L_1$, $\pm L_2$, and $\pm L_3$. The explicit matching of the chambers and the crepant resolutions is given in Table \ref{Table:Spin8Match}.

\begin{table}[htb]
\begin{center}
\scalebox{1.3}{
\begin{tabular}{|c|c|c|}
\hline
& Resolutions & Chambers  \\
\hline 
1& 		$Y^{(2,3)}$   & $(- - -) \quad   (\phi_1>\phi_3>\phi_4) $ \\
		\hline 
2& $Y^{(3,2)}$ &    $ (- - +)\quad  (\phi_3>\phi_1>\phi_4) $ \\
\hline
3& $Y^{(3,4)}$&    $(- + +)\quad (\phi_3>\phi_4>\phi_1)$ \\
\hline
4& $Y^{(4,3)}$   & $(+++)\quad  (\phi_4>\phi_3>\phi_1)$\\
\hline
5& $Y^{(4,2)}$   & $(++-)\quad   (\phi_4>\phi_1>\phi_3)$\\
\hline 
 6& $Y^{(2,4)}$   &  $(+--)\quad  (\phi_1>\phi_4>\phi_3)$ \\ 
 \hline 
 \end{tabular}
 }
 \end{center}
\caption{Matching the crepant resolutions of the Spin($8$)-model with the chambers of the hyperplane arrangement I($\text{D}_4,\mathbf{8}_v\oplus\mathbf{8}_s\oplus\mathbf{8}_c$). \label{Table:Spin8Match} }
\end{table}

\begin{figure}
\begin{center}
 \begin{tikzcd}[column sep=.1 , row sep=.1cm, scale=.5]
&    \raisebox{-1.5cm}{\scalebox{.7}{\includegraphics[scale=.8]{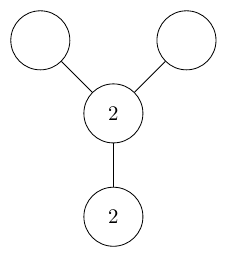}}}
    & \raisebox{-1.5cm}{\scalebox{.7}{\includegraphics[scale=.8]{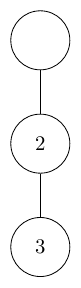}}}
   & \raisebox{-1.5cm}{\scalebox{.6}{\includegraphics[scale=.8]{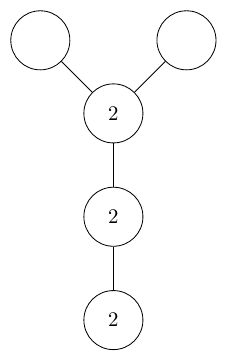}}}
&    \raisebox{-1.5cm}{\scalebox{.6}{\includegraphics[scale=.8]{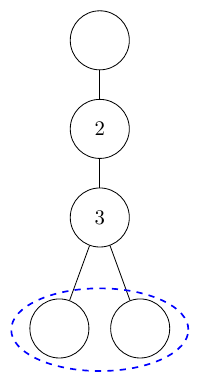}}} 
& \raisebox{-1.5cm}{\scalebox{.6}{\includegraphics[scale=.8]{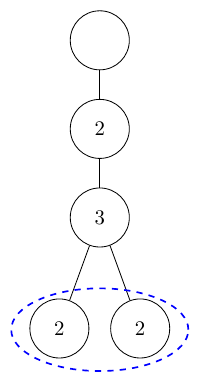}}}  
& \quad \raisebox{-1.5cm}{\scalebox{.7}{\includegraphics[scale=.8]{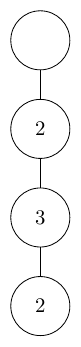}}} 
&\quad  \raisebox{-1.5cm}{\scalebox{.7}{\includegraphics[scale=.8]{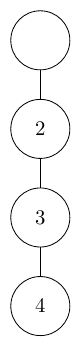}}} 
& \quad  \raisebox{-1.5cm}{\scalebox{.8}{\includegraphics[scale=.6]{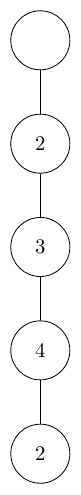}}} \\
\text{G}_2 & \checkmark & \checkmark & & & & & & \\
\text{Spin}(7) & \checkmark &  &\checkmark  &\checkmark  &\checkmark  &\checkmark  & \checkmark & \\
\text{Spin}(8) &  & &  &  && &\checkmark  & \checkmark  \\
 \end{tikzcd} 
 \end{center}
 \caption{Non-Kodaira fibers appearing in the fiber structures of  G$_2$, Spin($7$), and Spin($8$)-models \label{Fig:NonKodaira}}
 \end{figure}

\begin{figure}
{

\centering
\begin{tikzpicture}[scale=.5]

		\draw (0,0)--(75:5);	
			\draw (0,0)--(75+30:5);

			\end{tikzpicture}
\caption{Chambers of the hyperplane arrangement I$(\text{G$_2$},\mathbf{7})$ or equivalently, the Coulomb phases of a G$_2$ gauge theory with matter in the representation $\mathbf{7}$.
There is a unique chamber since the non-zero weights of $\mathbf{7}$ are the short roots of G$_2$. \label{fig:G2Phases}
				}		
				
				\vspace{1cm}

				\centering
\begin{tikzpicture}[scale=.3]
				
			\draw (90:7)--(210:7)--(330:7)--(90:7);
			\draw (90:7)--(90:-4.5);
									\node at (-87:5.7) {\scalebox{.8}{$\varpi_1-\varpi_3$}};	
					\node at (200:-2) {\scalebox{2}{$+$}};
					\node at (160:2) {\scalebox{2}{$-$}};													
			\end{tikzpicture}
\caption{Chambers of the hyperplane arrangement I$(B_3,\mathbf{8})$ or equivalently, the Coulomb phases of a Spin($7$) gauge theory with matter in the representation $\mathbf{8}$.
The only weight defining an interior wall is the weight $\boxed{1\ 0\  -1}$ of the representation $\mathbf{8}$. \label{fig:Spin7Phases}
				}

	 \vspace{1cm}

}

\begin{center}
\begin{tabular}{cc}
\begin{tikzpicture}[scale=.6]
				
				\node[draw,circle,thick,scale=1] (1) at (0:3cm){$+--$};
								\node[draw,circle,thick,scale=1] (2) at (60:3cm){$++-$};
																\node[draw,circle,thick,scale=1] (3) at (120:3cm){$+++$};
				\node[draw,circle,thick,scale=1] (4) at (180:3cm){$-++$};
				\node[draw,circle,thick,scale=1] (6) at (-60:3cm){$---$};
				\node[draw,circle,thick,scale=1] (5) at (-120:3cm){$--+$};
				\draw[thick] (1)--(2)--(3)--(4)--(5)--(6)--(1);
							\draw[dashed] (90:3.6)--(90:-3.1);
			\draw[dashed] (210:3.2)--(210:-3.2);
			\draw[dashed] (330:3.2)--(330:-3.2);	
			\node at (330:-3.8) {\scalebox{1.4}{$L_1$}};		
						\node at (210:-3.8) {\scalebox{1.4}{$L_2$}};	
									\node at (90:-3.4) {\scalebox{1.4}{$L_3$}};

			\end{tikzpicture}
			& 
			\begin{tikzpicture}[scale=.4]
				
			\draw (90:7)--(210:7)--(330:7)--(90:7);
			\draw (90:7)--(90:-4.1);
			\draw (210:7)--(210:-4.1);
			\draw (330:7)--(330:-4.1);	
			\node at (330:-4.8) {\scalebox{1.4}{$L_1$}};		
						\node at (210:-4.8) {\scalebox{1.4}{$L_2$}};	
									\node at (90:-4.8) {\scalebox{1.4}{$L_3$}};	
						\node at (240:-3) {\scalebox{1}{$++-$}};	
						\node at (240+60:-3) {\scalebox{1}{$+++$}};
						\node at (240+2*60:-3) {\scalebox{1}{$-++$}};	
						\node at (240+3*60:-3) {\scalebox{1}{$--+$}};	
				         \node at (240+4*60:-3) {\scalebox{1}{$---$}};	
				          \node at (240+5*60:-3) {\scalebox{1}{$+--$}};

			\end{tikzpicture}

			\end{tabular}
			\end{center}
			\caption{Chambers of the hyperplane arrangement I$(D_4,\mathbf{8}_v\oplus\mathbf{8}_s\oplus\mathbf{8}_c)$ or equivalently, the Coulomb phases of a Spin($8$) gauge theory with matter in the representation $\mathbf{8}_v\oplus\mathbf{8}_s\oplus\mathbf{8}_c$. 
			The signs in the figure are those of linear forms induced by the weights $L_1=\boxed{0,0,-1,1}$, $L_2=\boxed{-1,0,0,1}$, and $L_3=\boxed{-1,0,1,0}$.
			\label{fig:Spin8Phases}
				}

\end{figure}

\clearpage

\subsection{Fiber degenerations}

 The G$_2^{S_3}$-model is the generic case of a fiber of type I$_0^*$ and has the  simple fiber structure presented in Figure \ref{Fig:G2S3}. In codimension two, the generic fiber with dual graph $\widetilde{\text{G}}^t_2$ degenerates  to an incomplete affine Dynkin diagram 
of type $\widetilde{\text{D}}_5$, whose dual graph is a Dynkin diagram of type D$_4$. The fiber degenerates further  in codimension three to a fiber of type $1-2-3$, which can be understood as an incomplete fiber of type  $\widetilde{\text{E}}_6$. 
 The G$_2^{\mathbb{Z}/3\mathbb{Z}}$-model can be realized in various ways. 
 The model is defined in equation \eqref{Eq:G2Z31}, and has a fiber structure with an enhancement in codimension two 
 from   $\widetilde{\text{G}}_2^t$ 
 to an incomplete $\widetilde{\text{E}}_6$. This can be understood as a specialization of G$_2^{S_3}$ in which the codimension two fiber $\widetilde{\text{D}}_5$ does not appear and the fiber degenerates  directly to  an incomplete 
  $\widetilde{\text{E}}_6$.
The fiber structure of the G$_2^{\mathbb{Z}/3\mathbb{Z}}$-model defined in equation \eqref{Eq:G2Z32} is presented in Figure \ref{Fig:G2Z3term} and has two distinct fibers in codimension-two, in contrast to the other G$_2$-model, which has only  one type of specialization in codimension two.

The generic fiber structure of   a Spin($7$)-model is presented in Figure \ref{Fig:Spin7.1}.  The fiber structure  
depends on the  valuations ($v_S(a_2)\geq 1$ and $v_S(a_6)\geq 4$) and the choice of a crepant resolution. The valuations can be  organized into  four different cases, and we have two possible choice of crepant resolutions. Thus,  there are  eight distinct types of fiber structures. Two resolutions related by a flop 
have distinct fibers in codimension two, three, or four,  depending on the valuations. We organize this information by grouping the flops together. 
In one of the two possible resolutions, both the $C_2$ and $C_3$ components of the generic $\widetilde{\text{B}}^t_3$  fiber degenerate, while  in the flop, only $C_3$ degenerates. 

The first case is in Figure \ref{Fig:Spin7.1}, and corresponds to the lowest possible values for the valuations of $a_2$ and $a_6$, namely $v_S(a_2)=1$ and $v_S(a_6)=4$. All the other cases are specialization of this case,  obtained by skipping  some of the intermediate steps in the degenerations as forced by the valuations. The fiber structure  in 
Figure \ref{Fig:Spin7.2} is for the case ($v_S(a_2)=1, v_S(a_6)\geq 5$), Figure \ref{Fig:Spin7.3} for the case ($v_S(a_2)=2, v_S(a_6)= 4$), and finally Figure \ref{Fig:Spin7.4} for the case ($v_S(a_2)\geq 2, v_S(a_6)\geq 5$). 
Each of these Weierstrass models has two possible resolutions related by a flop, and each possible resolution has a different fiber in codimension two or three.

For the Spin($8$)-models, the fiber structure is the split case of a fiber of type I$_0^*$ and is presented in Figures \ref{Fig:Spin8AlphaZero} and \ref{Fig.Spin8AlphaPos}. The most generic case for the Spin($8$)-models is when $\alpha=0$. 
In such a model, the fiber structure degenerates into  $\widetilde{\text{D}}_5$ in codimension two, $\widetilde{\text{E}}_6$ or an incomplete $\widetilde{\text{D}}_6$ in codimension three, and an incomplete $\widetilde{\text{E}}_7$ in codimension four.
The fiber structure for  $\alpha> 0$ is presented in Figure \ref{Fig.Spin8AlphaPos}, in which the fiber structure skips $\widetilde{\text{D}}_5$ and degenerates directly into
  either an incomplete $\widetilde{\text{D}}_6$ in codimension two or an incomplete $\widetilde{\text{E}}_7$ in codimension three. In contrast to Spin($7$)-models, the fiber type does not depend on the choice of the crepant resolution.
Even though different crepant resolutions are differentiated by the way the curves split, they give the same fiber types.

\begin{figure}[h]
\centering

\begin{tikzcd}[column sep= normal , row sep=.4cm, scale=.5]
  \raisebox{-1.5cm}{\scalebox{1}{\includegraphics[scale=.6]{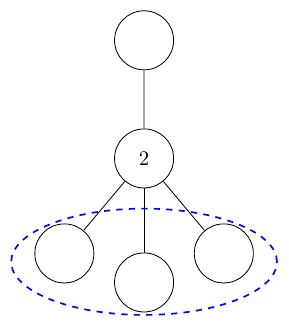}}}      \arrow[rightarrow]{r}{4 f^3+27 g^2=0}  
  &   \raisebox{-1.5cm}{\scalebox{1}{\includegraphics[scale=.7]{VT22-12-12}}} \arrow[rightarrow]{r}{f=g=0}  & 
   \raisebox{-1.5cm}{\scalebox{1}{\includegraphics[scale=.7]{VT123}}}
   \\
  \widetilde{\text{G}}_2^t & \text{Incomplete $\widetilde{\text{D}}_5$}  &  \text{Incomplete $\widetilde{\text{E}}_6$}   \\
\end{tikzcd}  
\caption{Geometric fiber degeneration of $\widetilde{G}_2^{S_3}$-model with valuation $(2,3,6)$. \label{Fig:G2S3}
}

\begin{tikzcd}[column sep= normal , row sep=.4cm, scale=.5]
  \raisebox{-1.5cm}{\scalebox{1}{\includegraphics[scale=.6]{G2dash}}} \quad 
   \arrow[rightarrow]{r}{\displaystyle g=0}  
  &   \quad \quad\raisebox{-1.5cm}{\scalebox{1}{\includegraphics[scale=.6]{VT123}}}  \\
  \widetilde{G}_2^t & \text{Incomplete $\widetilde{\text{E}}_6$}   
\end{tikzcd}  
\caption{Geometric fiber degeneration of $\widetilde{G}_2^{\mathbb{Z}/3\mathbb{Z}}$ with valuation $(\geq 3,3,6)$. \label{Fig:G2Z3}
 }
  \vspace*{\floatsep}
\centering
\begin{tikzcd}[column sep= normal , row sep=.4cm, scale=.5]
&  \raisebox{-1.5cm}{\scalebox{1}{\includegraphics[scale=.6]{VT123}     }}    \arrow[leftarrow]{d}{V(a,r)} &  \\
  \raisebox{-1.5cm}{\scalebox{1}{\includegraphics[scale=.6]{G2dash}  }}      \arrow[rightarrow]{r}{V(a-r)}  \arrow[rightarrow]{ru}{ V(ar)}  
  &   \raisebox{-1.5cm}{\scalebox{1}{\includegraphics[scale=.6]{VT22-12-12}}}\\
  \widetilde{\text{G}}^t_2 &
    \text{Incomplete $\widetilde{\text{D}}_5$}   
\end{tikzcd}  
\caption{Geometric fiber degeneration for a  $\widetilde{G}_2^{\mathbb{Z}/\mathbb{Z}_3}$-model with valuation $(2,3,6)$ and special configuration for $f$ and $g$ such that $\Delta'=4f^3+27 g^2$ is a perfect square in the residue field $\kappa(\eta)$. 
 We assume that the base $B$ is a surface to avoid  $\mathbb{Q}$-factorial terminal singularities. \label{Fig:G2Z3term}
}
\end{figure}

\clearpage

\begin{figure}[ht]
\centering

\begin{tikzcd}[column sep= normal , row sep=.2cm, scale=.4]
& \raisebox{-1.5cm}{\scalebox{1}{\includegraphics[scale=.8]{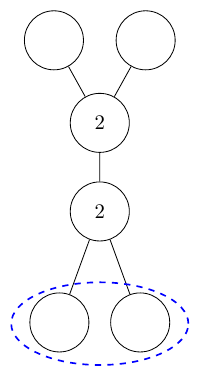}}} \arrow[rightarrow]{r}{\displaystyle  a_{6,4}=0}  \arrow[rightarrow]{rd}{\displaystyle  a_{2,1}=0}&\raisebox{-1.5cm}{\scalebox{1}{\includegraphics[scale=.8]{VT11222}}}  \arrow[rightarrow]{rd}{\displaystyle a_{2,1}=0}
&\\
\raisebox{-2.6cm}{\scalebox{.8}{\includegraphics[scale=.8]{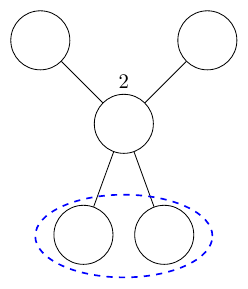}}} \arrow[rightarrow]{ru}{\displaystyle a_{4,2}=0}  \arrow[rightarrow]{rd}{\displaystyle  a_{2,1}^2-4a_{4,2}=0}& 
&    \raisebox{-2.6cm}{\scalebox{.8}{\includegraphics[scale=.8]{Vdash123-11}}} or \raisebox{-2.6cm}{\scalebox{.8}{\includegraphics[scale=.8]{Vdash123-22}}}  \arrow[rightarrow]{r}{\displaystyle  a_{6,4}=0} & \raisebox{-2.6cm}{\scalebox{.8}{
\includegraphics[scale=.8]{V1232}}} 
or 
\raisebox{-2.6cm}{\scalebox{.8}{
\includegraphics[scale=.8]{V1234}}} 
 \\
&  \raisebox{-1.5cm}{\scalebox{.8}{\includegraphics[scale=.8]{VT22-12-12}}}  \arrow[rightarrow]{ru}{\displaystyle  a_{4,2}=0}& 
 \end{tikzcd} 
 \label{Fig:Spin7.1}
\caption{Geometric fiber degeneration of a Spin($7$)-model with $v_{S}(a_2)=1$ and $v_{S}(a_6)=4$. When $v_{S}(a_2)>1$, the degeneration graph contracts to the middle row. 
When there are multiple fibers, the one on the left corresponds to the monomial resolution and the one on the right to its flop.  
 }
\end{figure}

\clearpage

\begin{figure}[ht]
\centering
\begin{tikzcd}[column sep= 3.5cm , row sep=-.2cm, scale=.7]
& 
\raisebox{-3cm}{\scalebox{1}{\includegraphics[scale=.5]{VT11222}}}  \arrow[rightarrow]{rd}{\displaystyle a_{2,1}=0}
  \arrow[rightarrow]{rd}{\displaystyle  a_{2,1}=0}&
&\\
\raisebox{-2cm}{\scalebox{1}{\includegraphics[scale=.5]{B3dash}}} \arrow[rightarrow]{ru}{\displaystyle a_{4,2}=0}  \arrow[rightarrow]{rd}{\displaystyle  a_{2,1}^2-4a_{4,2}=0}& 
&  \quad \raisebox{-2cm}{\scalebox{.7}{
\includegraphics[scale=.7]{V1232}}}  or 
\raisebox{-2cm}{\scalebox{.7}{
\includegraphics[scale=.7]{V1234}}} &
 \\
&  \raisebox{-.5cm}{\scalebox{1}{\includegraphics[scale=.5]{VT22-12-12}}}  \arrow[rightarrow]{ru}{\displaystyle  a_{4,2}=0}& 
 \end{tikzcd} 
\caption{Geometric fiber degeneration of a Spin($7$)-model with $v_{S}(a_2)=1$ and $v_{S}(a_6)\geq 5$. 
\label{Fig:Spin7.2}}
 \vspace*{\floatsep}
\begin{center}
  \begin{tikzcd}[column sep=normal, scale=.6] 
   \raisebox{-2.6cm}{\scalebox{.8}{
\includegraphics[scale=.8]{B3dash}}}
 \arrow[rightarrow]{r}{\displaystyle{a_{4,2}=0}}
&
    \raisebox{-2.6cm}{\scalebox{.8}{
\includegraphics[scale=.8]{Vdash123-11}} \raisebox{2.6cm}{or} 
{\scalebox{.8}{
\includegraphics[scale=.8]{Vdash123-22}}}}
 \arrow[rightarrow]{r}{\displaystyle{a_{6,4}=0}}
&
\quad  \raisebox{-2.6cm}{\scalebox{.8}{
\includegraphics[scale=.8]{V1232}}}
or 
\raisebox{-2.6cm}{\scalebox{.8}{
\includegraphics[scale=.8]{V1234}}} 

\end{tikzcd}
\end{center}
\caption{
Geometric fiber degeneration of a Spin($7$)-model with $v_{S}(a_2)\geq 2$ and $v_{S}(a_6)= 4$. 
\label{Fig:Spin7.3}
}
 \vspace*{\floatsep}
\centering
\begin{tikzcd}[column sep= normal , row sep=.2cm, scale=.4]
\raisebox{-2.6cm}{\scalebox{1}{\includegraphics[scale=.6]{B3dash}}} \arrow[rightarrow]{r}{\displaystyle a_{4,2}=0}  
&  \quad \raisebox{-2.6cm}{\scalebox{.6}{
\includegraphics[scale=.9]{V1232}}}  or 
\raisebox{-2.6cm}{\scalebox{.6}{
\includegraphics[scale=.9]{V1234}}} 
 \end{tikzcd} 
\caption{
Geometric fiber degeneration of a Spin($7$)-model with $v_{S}(a_2)\geq 2$ and $v_{S}(a_6)\geq 5$.
\label{Fig:Spin7.4}
  }
\end{figure}

\clearpage

\begin{figure}[htb]
\begin{center}
\scalebox{.9}{
\begin{tikzpicture}[scale=1]
	\node(0) at (-2.5,0){\scalebox{1}{\includegraphics[scale=.8]{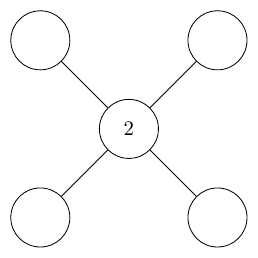}}};
	\node(1) at (2,0){\scalebox{1}{\includegraphics[scale=.8]{B4dash}}};
\node(2p) at (7,5){\scalebox{1}{\includegraphics[scale=.9]{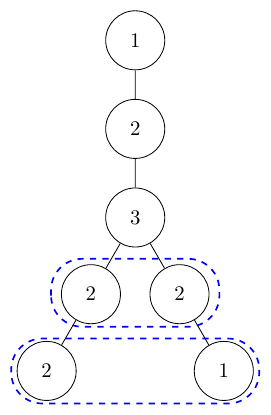}}};
	\node(2m) at (7,-5){\scalebox{1}{\includegraphics[scale=.8]{VT11222}}};
	\node(3) at (11.5,0){\scalebox{1}{\includegraphics[scale=.8]{VT12342}}};
		\draw[big arrow] (0) -- node[above=.1]{$V(d)$} (1);
		\draw[big arrow] (3,0) -- node[ below right] {$V(x_1-x_2,x_2-x_3)$} (2p);
		\draw[big arrow] (3,-1) -- node[above right]{$V(Q)$} (2m);	
		\draw[big arrow] (2m) -- node[above left]{$V(x_1-x_2,x_2-x_3)$} (3);
		\draw[big arrow] (2p) -- node[below left ]{$V(Q)$} (3);      
	\end{tikzpicture}}
\caption{Fiber degeneration of a Spin($8$)-model with $\alpha=0$. \label{Fig:Spin8AlphaZero}}
\begin{tikzcd}[column sep= large , scale=.6]
  \raisebox{-1.5cm}{\scalebox{1}{\includegraphics[scale=.7]{Istar0}}}   \arrow[rightarrow]{r}{V(d)}  &
  \raisebox{-1.5cm}{\scalebox{1}{\includegraphics[scale=.7]{VT11222}}} \arrow[rightarrow]{r}{V(x_1-x_2=x_2-x_3}  &\quad
    \raisebox{-1.5cm}{\scalebox{1}{\includegraphics[scale=.6]{VT12342}}} \\
    \widetilde{\text{D}}_4 & \text{Degenerate  $\widetilde{\text{D}}_5$ } & \text{Degenerated $\widetilde{\text{E}}_7$}
\end{tikzcd} 
\caption{Fiber degeneration of  a Spin($8$)-model with $\alpha>0$.  \label{Fig.Spin8AlphaPos}}
\end{center}
\end{figure}
\clearpage

\subsection{Triple intersection numbers}\label{Sec:Triple}

Two crepant resolutions of the same Weierstrass model have the same Euler characteristic \cite{Batyrev.Betti}. 
 In this subsection, we compute a topological invariant that is dependent on the choice of the crepant resolution. 
Let $D_0, D_1, \cdots, D_r$ be the  fibral divisors of  an elliptic fibration  over a base $B$ of dimension $d$. 
By definition, they are the irreducible components of 
$$
\varphi^* S= m_0 D_0 +m_1 D_1 + \cdots + m_r D_r,
$$
where $m_i$ is the multiplicity of $D_i$ and $D_0$ is the divisor touching the section of the elliptic fibration. 
It is useful to introduce a polynomial ring 
$A_*(Y)[\phi_0, \cdots, \phi_r]$ over the Chow ring $A_*(Y)$ of $Y$.

We define the polynomial $\mathscr{F}$  in $A_*(B)[\varphi_0, \cdots, \varphi_r]$ via a  pushforward as 
$$
\mathscr{F}:=\varphi_* (D_0 \phi_0 + D_1 \phi_1 +\cdots D_r \phi_r)^3.
$$
Hence, if $M$ is an element of $A_{d-3} (B)$, we have 
$$
\int_Y (D_0 \phi_0 + D_1 \phi_1 +\cdots D_r \phi_r)^3 \cdot \varphi^* M =\int_B \varphi_* \Big[(D_0 \phi_0 + D_1 \phi_1 +\cdots D_r \phi_r)^3 \Big]\cdot M=\int_B  \mathscr{F} \cdot M.
$$
The polynomial $ \mathscr{F}$ is called the triple intersection polynomial of the elliptic fibration. When the base is a surface, its coefficients are numbers.

A Weierstrass model of a G$_2$-model has a unique crepant resolution and therefore a unique possible triple intersection polynomial. 
There are two distinct crepant resolutions   $Y^\pm$ for the Weierstrass model corresponding to  a Spin($7$)-model. These two crepant resolutions also have different triple intersection polynomials $\mathscr{F}^\pm_{\text{Spin($7$)}}$. The  Weierstrass model of a Spin($8$)-model has  six distinct crepant resolutions $Y^{(a,b)}$ with $\{a, b\neq a\}\subset\{2,3,4\}$,  and we will compute all  six different triple intersetion polynomials $\mathscr{F}^{(a,b)}_{\text{Spin($8$)}}$.
\begin{thm}\label{Them:triple}
The triple intersection numbers of a \textup{G$_2$}, \textup{Spin($7$)}, or \textup{Spin($8$)}-model defined by the sequence of blowups listed in section \ref{Sec:Crepant} are 
\begin{enumerate}
\item for G$_2$-models,
\begin{align}
& \begin{aligned}
\mathscr{F}_{\textup{G}_2} &=-4 S (S-L)\phi _0^3 +3 S (S-2 L) \phi _0^2 \phi _1+3 L S \phi _0 \phi _1^2 \\
&-4 S (S-L) \phi _1^3 +12 S  (S-3 L)\phi _2^3+9 S  (2 S-3 L) \phi _1^2 \phi _2-27 S (S-2 L)  \phi _1 \phi _2^2 ,
\end{aligned}
\end{align}
\item for the two chamers of \textup{Spin($7$)}-models identified as
\begin{align}
\mathscr{F}^+_{\textup{Spin($7$)}}=\mathscr{F}^{(1,3)}_{\textup{Spin($7$)}}\ \text{and} \quad \mathscr{F}^-_{\textup{Spin($7$)}}=\mathscr{F}^{(3,1)}_{\textup{Spin($7$)}},
\end{align}
\begin{align}
& \begin{aligned}
\mathscr{F}^{(m,n)}_{\textup{Spin($7$)}} &=-4 S(S-L) \phi _0^3 +3 S (S-2 L) \phi _0^2\phi _2+3 L S \phi _0\phi _2^2-4 S (S-L) \phi _2^3  \\
& -3 S (S-2L) \phi _2 (\phi _m^2+4 \phi _1 \phi _3 +4  \phi _n^2 ) + 3 S  (2 S-3 L) (\phi_m+2\phi _n)\phi _2^2\\
&-4 L S \phi _m^3-8 L S \phi _n^3+12 S (S-2 L)  \phi _m\phi _n^2 -4S (S-2 L) (\phi _m-\phi _1)^3,
\end{aligned}
\end{align}
\item for the six chambers $Y^{(a,b)}$ of \textup{Spin($8$)}-models 
\begin{align}
\begin{aligned}
\mathscr{F}^{(a,b)}_{\text{Spin($8$)}} &=-4 S(S-L) \phi _0^3  +3 S  (S-2 L) \phi _0^2 \phi _1+3 L S \phi _0 \phi _1^2-4 S (S-L)\phi _1^3 \\
&+3 S (2 S-3 L) \phi _1^2 (\phi _2+\phi _3+\phi _4) -3 S (S-2L) \phi _1 (\phi _2+ \phi _3+\phi _4)^2 \\
& -4 L S \phi _a^3-2 S^2 \phi _b^3-4 S (S-L) \phi _c^3+6 S  (S-2 L) (\phi _b \phi _c^2+ \phi _a \phi _b^2 +\phi _a \phi _c^2) ,
\end{aligned}
\end{align}
where $(a,b,c)$ is a permutation of $(2,3,4)$.
\end{enumerate}
\end{thm}

\begin{lem}\label{Lem:Triple}
If  a G-model is a Calabi--Yau threefold, $c_1=L=-K$. Furthermore, denoting by $g$ the genus of $S$, the triple intersection numbers are
\begin{enumerate}
\item for G$_2$-models,
\begin{align}
& \begin{aligned}
\   \   \mathscr{F}_{\text{G}_2} &= -8 (g-1) \phi _0^3 +3 \phi _1 \phi _0^2 \left(4 g-4-S^2\right)-3 \phi _1^2 \phi _0 \left(2 g-2-S^2\right) \\
& -8 (g-1) \phi_1^3+24 \left(3 g-3-S^2\right)\phi _2^3  -27 \left(4 g-4-S^2\right)\phi_1\phi_2^2 +9\left(6 g-6-S^2\right)\phi _1^2 \phi _2 
\end{aligned}
\end{align}
\item for the two chamers of \textup{Spin($7$)}-models identified as
\begin{align}
\mathscr{F}^+_{\textup{Spin($7$)}}=\mathscr{F}^{(1,3)}_{\textup{Spin($7$)}}\ \text{and} \quad \mathscr{F}^-_{\textup{Spin($7$)}}=\mathscr{F}^{(3,1)}_{\textup{Spin($7$)}},
\end{align}
\begin{align}
&\begin{aligned}
\mathscr{F}^{(m,n)}_{\text{Spin($7$)}} &=-8(g-1)\phi _0^3 +3(4 g-4-S^2)\phi _0^2\phi _2  -3(2 g-2-S^2)\phi _0\phi _2^2 -8(g-1)\phi _2^3 \\
& -3(4 g-4-S^2) \phi _2 (\phi _m^2 + 4 \phi _1 \phi _3 +4  \phi _n^2)  + 3(6g-6-S^2)(\phi_m+2\phi _n)\phi _2^2\\
&+4 (2 g-S^2-2)\phi _m^3  +8 (2 g-S^2-2)\phi _n^3 +12(4g-4-S^2)\phi _m\phi _n^2 -4S (S-2 L) (\phi _m-\phi _1)^3
\end{aligned}
\end{align}
\item for the six chambers $Y^{(a,b)}$ of \textup{Spin($8$)}-models 
\begin{align}
& \begin{aligned}
\mathscr{F}^{(a,b)}_{\text{Spin($8$)}} &=
-8 (g-1) \phi _0^3+3 \left(4 g-4-S^2\right)\phi _0^2 \phi _1 -3 \left(2 g-2+S^2\right)\phi _0 \phi _1^2 -8 (g-1) \phi _1^3 \\
&+3 \left(6 g-6-S^2\right) \phi _1^2 \left(\phi _2+\phi _3+\phi _4\right) -3 \left(4 g-4-S^2\right) \phi _1 \left(\phi _2+\phi _3+\phi _4\right)^2 \\
& +4 \left(2 g-2-S^2\right) \phi _a^3-2 S^2 \phi _b^3-8(g-1) \phi _c^3+6(4 g-4-S^2)(\phi _b \phi _c^2+ \phi _a \phi _b^2 +\phi _a \phi _c^2),
\end{aligned}
\end{align}
where $(a,b,c)$ is a permutation of $(2,3,4)$.
\end{enumerate}
\end{lem}

\begin{rem}[Comparison with Diaconescu--Entin]
The formula we get for the triple intersection polynomial of a G$_2$-model should reproduce the result of  Diaconescu and Entin when the base curve is a smooth rational curve (which implies that $g=0$) \cite{Diaconescu:1998cn}:
\begin{equation}
6\mathcal{F}'_{\text{G$_2$}}= 8\phi_1^3 + 8(1-\tilde{g}) \phi^3_2+ 9(\tilde{g}+2) \phi_1\phi_2^2 -3 (\tilde{g}+8) \phi_1^2 \phi_2,
\end{equation} 
where $\tilde{g}$ is the genus of the curve $\gamma$ defined by the intersection of the fibral divisors D$_1$ and D$_2$. 
Using adjunction and the Calabi--Yau condition, we see that the Euler characteristic of $D_1\cap D_2$ is 
\begin{equation}
2-2g'= -\int_Y (D_1+D_2) D_1 D_2 = -\int_B \varphi_* (D_1^2 D_2 +D_1 D_2^2)=-18 + 18 g - 6 S^2,
\end{equation} 
which gives 
\begin{equation}
g'=10-9g+3 S^2.
\end{equation} 
  We get a perfect matching for  
\begin{equation}
\phi_0=g=0, \quad \tilde{g}= 10+3S^2. 
\end{equation}
\end{rem}

 Lemma \ref{Lem:Triple}  is a direct specialization of 
Theorem  \ref{Them:triple}. Theorem \ref{Them:triple} is  proven using the following pushforward theorems (see \cite{Euler} for examples of such computations). 
\begin{thm}[Esole--Jefferson--Kang,  see  {\cite{Euler}}] \label{Thm:Push}
    Let the nonsingular variety $Z\subset X$ be a complete intersection of $d$ nonsingular hypersurfaces $Z_1$, \ldots, $Z_d$ meeting transversally in $X$. Let $E$ be the class of the exceptional divisor of the blowup $f:\widetilde{X}\longrightarrow X$ centered 
at $Z$.
 Let $\widetilde{Q}(t)=\sum_a f^* Q_a t^a$ be a formal power series with $Q_a\in A_*(X)$.
 We define the associated formal power series  ${Q}(t)=\sum_a Q_a t^a$ whose coefficients pullback to the coefficients of $\widetilde{Q}(t)$. 
 Then the pushforward $f_*\widetilde{Q}(E)$ is:
 $$
  f_*  \widetilde{Q}(E) =  \sum_{\ell=1}^d {Q}(Z_\ell) M_\ell, \quad \text{where} \quad  M_\ell=\prod_{\substack{m=1\\
 m\neq \ell}}^d  \frac{Z_m}{ Z_m-Z_\ell }.
 $$ 
\end{thm}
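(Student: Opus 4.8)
The plan is to reduce the whole statement to computing the pushforwards $f_* E^a$ of powers of the exceptional divisor, and then to recognize the claimed interpolation formula as the generating function encoding these pushforwards. First I would use linearity together with the projection formula: since $\widetilde{Q}(E)=\sum_a f^*Q_a\, E^a$, we have $f_*\widetilde{Q}(E)=\sum_a Q_a\, f_*E^a$. Hence it suffices to establish the single family of identities $f_*E^a=\sum_{\ell=1}^d Z_\ell^{\,a} M_\ell$ for every $a\ge 0$, and the general statement follows by applying this termwise to $Q(t)=\sum_a Q_a t^a$.

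Next I would compute the left-hand side geometrically. Because $Z$ is a transverse complete intersection of the $Z_\ell$, the conormal sequence splits and the normal bundle is $N_{Z/X}\cong\bigoplus_{\ell=1}^d \mathcal{O}_X(Z_\ell)|_Z$, with exceptional divisor $E=\mathbb{P}(N_{Z/X})$. Writing $i\colon Z\hookrightarrow X$, $j\colon E\hookrightarrow\widetilde{X}$, $g\colon E\to Z$, and $\zeta=c_1(\mathcal{O}_E(1))$, the self-intersection relation $j^*\mathcal{O}_{\widetilde{X}}(E)=\mathcal{O}_E(-1)$ and $f\circ j=i\circ g$ give $f_*E^a=(-1)^{a-1} i_* g_*(\zeta^{a-1})$ for $a\ge 1$, while $f_*1=1$. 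The Segre pushforward $g_*(\zeta^{a-1})=s_{a-d}(N_{Z/X})$ vanishes for $1\le a\le d-1$, and for the split bundle $s_j(N_{Z/X})=(-1)^j h_j(Z_1,\dots,Z_d)$ with $h_j$ the complete homogeneous symmetric polynomial. Combining with $i_*1=Z_1\cdots Z_d$ and the projection formula yields the closed form $f_*E^a=(-1)^{d+1}\,(Z_1\cdots Z_d)\,h_{a-d}(Z_1,\dots,Z_d)$ for $a\ge d$, and $f_*E^a=0$ for $1\le a\le d-1$.

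It then remains to show that the interpolation sum reproduces these numbers, i.e. $\sum_\ell Z_\ell^{\,a} M_\ell=f_*E^a$ for all $a$. I would package this into the generating function $\sum_{a\ge 0} t^a \sum_\ell Z_\ell^{\,a} M_\ell=\sum_\ell M_\ell/(1-tZ_\ell)$ and prove the identity $\sum_\ell M_\ell/(1-tZ_\ell)=1-(-1)^d (Z_1\cdots Z_d)\,t^d/\prod_m(1-tZ_m)$; expanding the right-hand side in $t$ (using $\prod_m(1-tZ_m)^{-1}=\sum_k h_k t^k$) reproduces exactly the coefficients found above. Clearing denominators reduces this to the polynomial identity $\sum_\ell M_\ell \prod_{m\ne\ell}(1-tZ_m)=\prod_m(1-tZ_m)-(-1)^d (Z_1\cdots Z_d)\,t^d$, where both sides are polynomials of degree at most $d-1$ in $t$; evaluating at the $d$ values $t=1/Z_\ell$ shows the two sides agree there, so they coincide by Lagrange interpolation.

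The hard part will be giving rigorous meaning to the rational expressions $M_\ell$, whose denominators $Z_m-Z_\ell$ are not invertible in $A_*(X)$. I expect the cleanest resolution is to treat $Z_1,\dots,Z_d$ as independent formal variables, prove the partial-fraction identity in that universal polynomial setting where the denominators visibly cancel in the total sum, verify that each surviving term is a genuine polynomial in the $Z_\ell$, and only then specialize to the actual divisor classes. A secondary care point is fixing the sign conventions in the Segre-class pushforward for $\mathbb{P}(N_{Z/X})$ (projectivization versus its dual, and the sign in $j^*E=-\zeta$), which I would pin down by checking the low-codimension cases $d=1$ and $d=2$ directly against the formula before trusting the general computation.
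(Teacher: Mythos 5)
Your proof is correct: the reduction via the projection formula to the single family of identities $f_*E^a=\sum_\ell Z_\ell^a M_\ell$, the Segre-class computation giving $f_*E^a=0$ for $1\le a\le d-1$ and $f_*E^a=(-1)^{d+1}(Z_1\cdots Z_d)\,h_{a-d}(Z_1,\dots,Z_d)$ for $a\ge d$, and the partial-fraction identity $\sum_\ell M_\ell/(1-tZ_\ell)=1-(-1)^d(Z_1\cdots Z_d)\,t^d/\prod_m(1-tZ_m)$ (proved by the degree bound and evaluation at $t=1/Z_\ell$) all check out, and treating the $Z_\ell$ as independent indeterminates before specializing is indeed the correct way to give meaning to the non-invertible denominators in $M_\ell$. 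Note that the present paper contains no proof of Theorem \ref{Thm:Push}—it imports the result from \cite{Euler}—and the argument given there is in substance the same route you take: pushforward of powers of $E$ through the projective bundle $E=\mathbb{P}(N_{Z/X})$ with split normal bundle, followed by a resummation/interpolation identity, so your proposal reconstructs the intended proof rather than offering a genuinely different one.
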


The second pushforward theorem deals with  the projection from the ambient projective bundle to the base $B$ over which the Weierstrass model is defined.  
Let $\mathscr{V}$ be a vector bundle of rank $r$ over a nonsingular variety $B$. The Chow ring of a projective bundle $\pi: \mathbb{P}(\mathscr{V})\longrightarrow B$ is  isomorphic to the module $ A_*(B)[\zeta]$ modded out by the relation \cite[Remark 3.2.4, p. 55]{Fulton}
 $$\zeta^r+ c_1(\pi^* \mathscr{V}) \zeta^{r-1}+\cdots+ c_i(\pi^* \mathscr{V}) \zeta^{r-i}+\cdots + c_r (\pi^* \mathscr{V})=0, \quad \zeta=c_1 \Big( \mathscr{O}_{\mathbb{P}(\mathscr{V})}(1)\Big).$$

\begin{thm}[{See  \cite{Euler} and \cite{AE1,AE2,Fullwood:SVW,EKY}}]\label{Thm:PushH}
Let $\mathscr{L}$ be a line bundle over a variety $B$ and $\pi: X_0=\mathbb{P}[\mathscr{O}_B\oplus\mathscr{L}^{\otimes 2} \oplus \mathscr{L}^{\otimes 3}]\longrightarrow B$ a projective bundle over $B$. 
 Let $\widetilde{Q}(t)=\sum_a \pi^* Q_a t^a$ be a formal power series in  $t$ such that $Q_a\in A_*(B)$. Define the auxiliary power series $Q(t)=\sum_a Q_a t^a$. 
Then 
$$
\pi_* \widetilde{Q}(H)=-2\left. \frac{{Q}(H)}{H^2}\right|_{H=-2L}+3\left. \frac{{Q}(H)}{H^2}\right|_{H=-3L}  +\frac{Q(0)}{6 L^2},
$$
 where  $L=c_1(\mathscr{L})$ and $H=c_1(\mathscr{O}_{X_0}(1))$ is the first Chern class of the dual of the tautological line bundle of  $ \pi:X_0=\mathbb{P}(\mathscr{O}_B \oplus\mathscr{L}^{\otimes 2} \oplus\mathscr{L}^{\otimes 3})\rightarrow B$.
\end{thm}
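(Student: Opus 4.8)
The plan is to reduce everything to the structure of $A_*(X_0)$ as a free $A_*(B)$-module and to recognize the stated right-hand side as the Lagrange/residue interpolation functional attached to the three line-bundle summands of $\mathscr{V}=\mathscr{O}_B\oplus\mathscr{L}^{\otimes 2}\oplus\mathscr{L}^{\otimes 3}$. First I would invoke the projection formula: writing $\widetilde{Q}(H)=\sum_a \pi^*Q_a\,H^a$, one has $\pi_*\widetilde{Q}(H)=\sum_a Q_a\,\pi_*(H^a)$, so both sides of the asserted identity are $A_*(B)$-linear in the series $Q$. It therefore suffices to prove the formula on the monomials $Q(t)=t^a$, i.e. to know the pushforwards $\pi_*(H^a)$. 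It is harmless that $Q$ is only a formal power series, since $L$ and $H$ lie in the positive-degree (hence nilpotent) part of the Chow rings of the finite-dimensional varieties $B$ and $X_0$, so every series in sight truncates to a polynomial.

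Next I would pin down the ring $A_*(X_0)$. Since $c(\mathscr{V})=c(\mathscr{O}_B)\,c(\mathscr{L}^{\otimes2})\,c(\mathscr{L}^{\otimes3})=(1+2L)(1+3L)=1+5L+6L^2$, the Chern classes are $c_1(\mathscr{V})=5L$, $c_2(\mathscr{V})=6L^2$, $c_3(\mathscr{V})=0$, and the projective-bundle relation quoted just before the theorem becomes $H^3+5L\,H^2+6L^2\,H=0$, that is
\[
H(H+2L)(H+3L)=0 .
\]
Thus $A_*(X_0)$ is a free $A_*(B)$-module on the basis $\{1,H,H^2\}$, and $\pi_*$ is the $A_*(B)$-linear map determined by the standard values $\pi_*(1)=0$, $\pi_*(H)=0$, and $\pi_*(H^2)=1$: for $j<2$ the class $H^j$ has negative codimension after pushforward and vanishes, while $H^2$ restricts to the square of the hyperplane class on each $\mathbb{P}^2$ fiber, which integrates to $1$.

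Then I would read the right-hand side as a functional. The three evaluation points $H=0,\,-2L,\,-3L$ are precisely the roots of the defining cubic, so the operator
\[
\mathscr{R}(Q):=-\tfrac{1}{2L^2}\,Q(-2L)+\tfrac{1}{3L^2}\,Q(-3L)+\tfrac{1}{6L^2}\,Q(0)
\]
(the rewriting of the stated right-hand side after substituting $H=-2L$ and $H=-3L$) annihilates every multiple of $H(H+2L)(H+3L)$: if $Q=H(H+2L)(H+3L)\,R$ then $Q(0)=Q(-2L)=Q(-3L)=0$. Hence $\mathscr{R}$ descends to an $A_*(B)$-linear functional on $A_*(X_0)=A_*(B)[H]/\big(H(H+2L)(H+3L)\big)$, and it is enough to check $\mathscr{R}=\pi_*$ on the module basis. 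A direct computation gives $\mathscr{R}(1)=\tfrac{-3+2+1}{6L^2}=0$, $\mathscr{R}(H)=\tfrac{1}{L}-\tfrac{1}{L}=0$, and $\mathscr{R}(H^2)=-2+3=1$, matching $\pi_*$ on $1$, $H$, $H^2$. By $A_*(B)$-linearity this establishes the identity for all $Q$.

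The step needing the most care, and the one I expect to be the main obstacle to phrase cleanly rather than to prove, is the well-definedness of the individual terms: $L=c_1(\mathscr{L})$ is nilpotent, so the apparent denominators $L^2$ in $\mathscr{R}$ have no termwise meaning in $A_*(B)$. The resolution is built into the argument above: one evaluates $\mathscr{R}$ formally over the localization inverting $L$ (equivalently, treating $L$ as transcendental), and the annihilation-of-the-ideal plus linearity calculation shows that $\mathscr{R}(Q)$ equals the coefficient $a_2$ whenever $Q\equiv a_0+a_1H+a_2H^2$, a genuine element of $A_*(B)$; the spurious poles in $L$ cancel in the sum, exactly as in the Bott/localization residue formulas. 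Once this convention is made explicit, the remainder is a short linear-algebra verification.
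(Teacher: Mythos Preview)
Your argument is correct. The paper does not actually prove this theorem; it is stated with references to \cite{Euler} and \cite{AE1,AE2,Fullwood:SVW,Esole:2014dea} and used as a black box for the triple-intersection computations. Your approach---reducing to the monomial basis $\{1,H,H^2\}$ via the projective-bundle relation $H(H+2L)(H+3L)=0$ and verifying that the Lagrange-interpolation functional $\mathscr{R}$ agrees with $\pi_*$ on that basis---is exactly the standard way such pushforward identities are established, and your calculations check out. Your remark about the apparent poles in $L$ cancelling in the sum is the correct interpretation and matches how these residue-type formulas are understood in the cited literature.
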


 Lemma \ref{Lem:Triple} is a specialization to the case of Calabi--Yau threefolds and  will play an important role in section 
\ref{Sec:IMS}. For each model and each irreducible representation $\mathbf{R}_i$ induced by its fiber degeneration, we compute the  number of hypermultiplets transforming in the representation $\mathbf{R}_i$. 
 The results are given by Proposition \ref{prop:ChargeNumber} 
 by comparing the triple intersection numbers given in Lemma \ref{Lem:Triple} with the one-loop prepotential computed in section \ref{Sec:IMS}:
 \begin{align}
\textnormal{G}_2: &\quad n_{\mathbf{7}}=-10(g-1)+3S^2, & n_{\mathbf{14}}=g, \nonumber \\
\textnormal{Spin($7$)}: &\quad n_{\mathbf{7}}= S^2-3 (g-1), \quad n_{\mathbf{8}} =2 S^2-8 (g-1), \quad & n_{\mathbf{21}}=g,\nonumber \\
\textnormal{Spin($8$)}: &\quad n_{\mathbf{8_v}}=n_{\mathbf{8_s}}=n_{\mathbf{8_c}}=S^2-4(g-1), \quad & n_{\mathbf{28}}=g. \nonumber
\end{align}

\subsection{Euler characteristics and Hodge numbers}\label{sec:Euler}

The results of this subsection are proven in \cite{Euler}  and  used in section \ref{Sec:Phys} to check the cancellation of anomalies in the  six-dimensional  supersymmetric  theory.

\begin{thm}[Euler characteristics]
Smooth elliptic fibrations $Y\rightarrow B$ defined as   crepant resolutions  of the Weierstrass models given in section \ref{Sec:Crepant}  have the following Euler characteristics: 
$$
\begin{aligned}
\text{G$_2$}   :& \quad \chi(Y)=12\int_B \frac{ L +2 S L-S^2}{(1 + S) (1  + 6 L- 3 S)}\  c(TB) \\
\text{Spin($7$)} :&\quad \chi(Y)=4 \int_B \frac{
3 L  + (12 L^2 + L S - 5 S^2)  + 5 (3 L - 2 S) (2 L - S) S 
}
{(1 + S) (-1 - 6 L+ 4 S ) (-1  - 4 L+ 2 S)}\  c(TB) \\
\text{Spin($8$)} :&\quad \chi(Y)=12\ \int_B \frac{L+3S L-2 S^2 }{(1 + S) (1+6L-4 S )}\  c(TB),
\end{aligned}
$$
where  $L=c_1(\mathscr{L})$, $S$ is the class of the divisor $S$, $c(TB)$ is the total Chern class of the tangent bundle of the base $B$ of the Weierstrass model, and $\int_B A=\int A \cap B$ is the degree in the Chow ring of  the base $B$.
\end{thm}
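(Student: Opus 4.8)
The plan is to compute $\chi(Y)$ as the degree of the top Chern class of $Y$ and then to reduce this to an integral over the base $B$ by pushing forward through the sequence of blowups in \eqref{eq:BlowUps} and the projection $\pi\colon X_0\to B$. First I would write $\chi(Y)=\int_Y c(TY)$, where integration over $Y$ automatically selects the top-degree component. Each crepant resolution $Y$ is realized as the proper transform of the Weierstrass hypersurface inside the resolved ambient space $\widetilde{X}$ obtained from $X_0=\mathbb{P}(\mathscr{O}_B\oplus\mathscr{L}^{\otimes 2}\oplus\mathscr{L}^{\otimes 3})$ by the blowups of \eqref{eq:BlowUps}, so adjunction gives $c(TY)=c(T\widetilde{X})|_Y/(1+[Y])$, and by the projection formula
\[
\chi(Y)=\int_{\widetilde X}\frac{[Y]\,c(T\widetilde X)}{1+[Y]}.
\]
This moves the entire computation into the Chow ring of the resolved ambient space.

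Next I would compute the two ingredients $c(T\widetilde X)$ and $[Y]$ explicitly. The total Chern class $c(TX_0)$ follows from the relative Euler sequence of the projective bundle together with $\pi^*c(TB)$, giving $c(TX_0)=\pi^*c(TB)\,(1+H)(1+H+2L)(1+H+3L)$ with $H=c_1(\mathscr O(1))$; then $c(T\widetilde X)$ is obtained by applying the standard tangent-bundle-of-a-blowup formula \cite{Fulton} once for each of the (at most four) centers in \eqref{eq:BlowUps}, using that each center is a complete intersection of smooth divisors meeting transversally (as required by Theorem \ref{Thm:Push}). In parallel I would track the class of the proper transform: starting from $[\mathscr E]=3H+6L$ in $X_0$, each blowup subtracts the appropriate multiple of the new exceptional divisor $E_k$ dictated by the order of vanishing of the Weierstrass section along the center, and crepancy is precisely the statement that $K_{\widetilde X}=f^*K_{X_0}+\sum(c_k-1)E_k$ matches these multiplicities at every step.

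Having expressed the integrand as a polynomial in $H$, the $E_k$, and pullbacks from $B$, I would evaluate $\int_{\widetilde X}$ by composing pushforwards. Each blowup pushforward is handled by Theorem \ref{Thm:Push}, which converts a series $\widetilde Q(E_k)$ into a sum over the defining divisors of the center weighted by the factors $M_\ell$, and the final projection $\pi_*$ to $B$ is handled by Theorem \ref{Thm:PushH}, whose substitutions $H=-2L$ and $H=-3L$ evaluate the remaining dependence on $H$. Carrying this out separately for G$_2$, Spin($7$), and Spin($8$) produces a rational expression in $L$ and $S$ multiplying $c(TB)$; expanding the nilpotent class $[Y]$ in the geometric series $1/(1+[Y])$ and collecting terms should reproduce the three displayed generating functions, whose denominators $(1+S)(1+6L-3S)$, $(1+S)(-1-6L+4S)(-1-4L+2S)$, and $(1+S)(1+6L-4S)$ encode the combined effect of the exceptional divisors and the normal bundle of $Y$.

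The main obstacle I anticipate is bookkeeping rather than conceptual: correctly propagating both $c(T\widetilde X)$ and $[Y]$ through the iterated blowups, in particular getting the exceptional-divisor multiplicities and the intersection relations among $E_k$, $H$, and the pullbacks right, and then organizing the nested pushforwards so that Theorem \ref{Thm:Push} is applied to genuinely transverse complete-intersection centers at each stage. For Spin($8$) the six resolutions $Y^{(i,j)}$ all share the same Euler characteristic (as guaranteed by \cite{Batyrev.Betti}), so a useful consistency check is that the pushforward is independent of the order of the last two blowups; for Spin($7$) the two flopped resolutions must likewise yield equal $\chi$, providing a further independent check on the algebra.
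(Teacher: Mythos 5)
Your proposal is correct and follows essentially the same route as the paper's proof, which is carried out in the cited reference \cite{Euler}: Gauss--Bonnet plus adjunction for the proper transform inside the blown-up ambient projective bundle, the Chern-class formula for blowups along transverse complete-intersection centers, and iterated pushforwards via Theorems \ref{Thm:Push} and \ref{Thm:PushH} to reduce everything to an integral of a generating function over $B$. The consistency checks you mention (equality of $\chi$ across the flopped Spin($7$) resolutions and the six Spin($8$) resolutions, per \cite{Batyrev.Betti}) are exactly the ones available in the paper's framework.
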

 We determine  the Euler characteristic for a $d$-dimensional base $B$  via the coefficient of $t^d$,  after substituting  $L\to t L$,  $S\to t S$, and replacing  $c(TB)$  by the Chern polynomial $c_t(TB)=1+c_1 t + c_2 t^2 + \cdots+ c_d t^d$ with $c_i=c_i(TB)$. 
We give the results for threefolds and fourfolds below. 
\begin{lem}
If the base is a surface,  $Y$ is a threefold, and the Euler characteristic for each model is 
\begin{align}
\begin{array}{|c|c|c|}
\hline
\textnormal{G}_2  &  12 (c_1 L-6 L^2+4 L S-S^2) \\\hline
\textnormal{Spin($7$)} &4 (3 c_1 L-18 L^2+16 L S-5 S^2)  \\\hline
\textnormal{Spin($8$)} & 12 (c_1 L-6 L^2+6 L S-2 S^2) \\\hline
\end{array}\nonumber
\end{align}
\end{lem}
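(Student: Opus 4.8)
The plan is to obtain this lemma as a direct specialization of the preceding Euler-characteristic theorem to the case $d=\dim B=2$, following the recipe stated immediately after that theorem. For each of the three models I would take the rational expression $\chi(Y)=k\int_B \frac{N(L,S)}{P(L,S)}\,c(TB)$ from the theorem (with prefactor $k=12,4,12$ for G$_2$, Spin($7$), Spin($8$) respectively), perform the substitution $L\to tL$, $S\to tS$, and replace $c(TB)$ by $c_t(TB)=1+c_1 t+c_2 t^2$ (a surface carries no Chern classes beyond $c_2$). Since the $t$-grading records the codimension degree and $\int_B$ extracts the top-dimensional part of the Chow ring of the surface $B$, the numerical Euler characteristic is read off as the coefficient of $t^2$ in the resulting power series, multiplied by $k$, with the monomials $c_1 L, L^2, LS, S^2$ interpreted as intersection numbers on $B$.

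First I would expand the inverse denominator as a power series in $t$ up to order $t^2$. For G$_2$ and Spin($8$) the denominator is the product of two linear factors, $(1+tS)(1+6tL-3tS)$ and $(1+tS)(1+6tL-4tS)$; for Spin($7$) it is a product of three linear factors, and here the only genuine point of care is that the two factors $(-1-6tL+4tS)$ and $(-1-4tL+2tS)$ each contribute a sign, so their product equals $(1+6tL-4tS)(1+4tL-2tS)$ and the denominator is again a product with positive leading terms. Writing $D^{-1}=1+a_1 t+a_2 t^2+O(t^3)$, the coefficients $a_1,a_2$ are the elementary symmetric combinations of the linear forms (for Spin($7$), $S$, $6L-4S$, and $4L-2S$), and analogously for the other two models.

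Next I would multiply $D^{-1}$ by the $t$-graded numerator $N$. Note that $N$ is of mixed degree: its lowest piece is linear in $L$ (namely $tL$ for G$_2$ and Spin($8$), and $3tL$ for Spin($7$)), so $N\cdot D^{-1}$ starts at order $t^1$. For Spin($7$) the cubic numerator term $5(3L-2S)(2L-S)S$ sits at order $t^3$ and therefore drops out of the $t^2$ computation entirely, which is why it never appears in the surface formula. Multiplying further by $1+c_1 t+c_2 t^2$, the coefficient of $t^2$ receives exactly two contributions, the linear part of $N\cdot D^{-1}$ times $c_1 t$ and the quadratic part of $N\cdot D^{-1}$ times $1$ (the constant term of $N\cdot D^{-1}$ vanishes, so $c_2$ never enters). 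Collecting these and restoring the prefactor $k$ would yield $12(c_1 L-6L^2+4LS-S^2)$ for G$_2$, $4(3c_1 L-18L^2+16LS-5S^2)$ for Spin($7$), and $12(c_1 L-6L^2+6LS-2S^2)$ for Spin($8$), as claimed.

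There is no conceptual obstacle once the theorem is granted: the lemma is pure bookkeeping in the truncated ring $\mathbb{Z}[L,S,c_1,c_2][t]/(t^3)$. The only places to stay alert are the sign cancellation in the Spin($7$) denominator and the mixed-degree numerators, whose linear pieces combine with $c_1$ to produce the $c_1 L$ cross-term with the correct coefficient; tracking these correctly is what guarantees agreement with the tabulated entries.
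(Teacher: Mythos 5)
Your proposal is correct and is exactly the paper's own argument: the paper proves this lemma by the stated recipe of substituting $L\to tL$, $S\to tS$, replacing $c(TB)$ by $1+c_1t+c_2t^2$, and reading off the coefficient of $t^2$ in the generating functions of the Euler-characteristic theorem. Your bookkeeping (the sign cancellation in the Spin($7$) denominator, the dropping of its cubic numerator term at order $t^3$, and the observation that $c_2$ never contributes) checks out and reproduces the tabulated entries.
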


We specialize to the Calabi--Yau case by requiring  $L=-K_B$. In the case of a Calabi--Yau threefold, we have the following result. 
\begin{lem}
For Calabi--Yau threefolds, let $S$ be the divisor over which we have the fiber I$_0^*$. If $g$ denotes the genus of $S$ and $K$ denotes the canonical class of the base, then the Euler characteristic for each model is
$$
\begin{array}{|c|c|c|}
\hline
\textnormal{G}_2  &  -60 K^2+96(1-g)+36 S^2 \\\hline
\textnormal{Spin($7$)}  & -60 K^2 + 128 (1-g) +44 S^2  \\\hline
\textnormal{Spin($8$)} & -60K^2+ 144(1-g) +48 S^2\\\hline
\end{array}
$$
\end{lem}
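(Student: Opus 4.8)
The plan is to obtain the three stated formulas as a direct specialization of the surface-base Euler characteristics recorded in the preceding Lemma, so that no new intersection-theoretic computation is needed; everything reduces to imposing the Calabi-Yau constraint and rewriting the mixed intersection number $L\cdot S$ by adjunction. First I would start from the surface-base expressions
$$
\chi_{\mathrm{G}_2} = 12(c_1 L - 6L^2 + 4LS - S^2), \quad
\chi_{\mathrm{Spin}(7)} = 4(3c_1 L - 18 L^2 + 16 LS - 5 S^2), \quad
\chi_{\mathrm{Spin}(8)} = 12(c_1 L - 6 L^2 + 6 LS - 2 S^2),
$$
where $c_1=c_1(TB)$, $L=c_1(\mathscr{L})$, and all products are intersection numbers on the surface $B$.

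The Calabi-Yau condition for the elliptic threefold $Y$ is $L=-K_B$, equivalently $c_1(TB)=c_1=L$. Substituting $c_1=L$ and writing $L^2=K^2$ collapses each bracket into an expression in the two intersection numbers $L\cdot S$ and $S^2$ only: the purely quadratic-in-$L$ part assembles into the common term $-60K^2$ in all three cases (coefficients $12(1-6)=-60$ for G$_2$ and Spin($8$), and $4(3-18)=-60$ for Spin($7$)), which I would highlight as a model-independent feature fixed by the generic fiber.

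The single genuine geometric input is the adjunction formula for the smooth curve $S\subset B$: from $2g-2 = S\cdot(S+K_B)$ together with $K_B=-L$ one gets $L\cdot S = S^2 + 2 - 2g$. I would use this to eliminate every remaining $LS$ in favor of $S^2$ and $g$. Carried out term by term this gives $48\,LS - 12\,S^2 = 36\,S^2 + 96(1-g)$ for G$_2$, $\ 64\,LS - 20\,S^2 = 44\,S^2 + 128(1-g)$ for Spin($7$), and $72\,LS - 24\,S^2 = 48\,S^2 + 144(1-g)$ for Spin($8$); adding the shared $-60K^2$ reproduces exactly the three rows of the table.

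There is no real obstacle here, as the statement is a verification rather than a new result; the only thing to watch is sign bookkeeping, in particular the orientation $K_B=-L$ in adjunction and the fact that the Calabi-Yau hypothesis is the stronger condition $c_1=L$ and not merely $L^2=K^2$. As a consistency check I would confirm that the $-60K^2$ coefficient is identical across all three models (since it is insensitive to the codimension-two degeneration data), and that specializing $g$ and $S^2$ to a concrete example recovers the expected Euler number computed from Hodge data.
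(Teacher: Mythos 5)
Your proposal is correct and follows exactly the paper's route: the lemma is obtained by specializing the preceding surface-base formulas via the Calabi-Yau condition $L=-K_B$ (so $c_1=L$) and then eliminating $L\cdot S$ through adjunction, $L\cdot S=S^2+2-2g$, and your arithmetic ($-60K^2$ common to all three models, $48LS-12S^2=36S^2+96(1-g)$, etc.) checks out. Nothing further is needed.
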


\begin{thm}\label{lem:Hoddge}
For Calabi--Yau threefolds, let $S$ be the divisor over which we have the fiber I$_0^*$. If $g$ denotes the genus of $S$ and $K$ denotes the  canonical class of the base, then the Hodge numbers $h^{1,1}$ and $h^{1,2}$ for each model are
$$
\begin{array}{|c|c|c|c|}
\hline
&  h^{1,1} &h^{1,2} \\ \hline
\textnormal{G}_2   &13-K^2 & 13+29 K^2-18 S^2+48 (g-1)  \\\hline
\textnormal{Spin($7$) }& 14-K^2& 14+29 K^2-22 S^2+64 (g-1)\\\hline
\textnormal{Spin($8$)} & 15-K^2& 15+29 K^2-24 S^2+72 (g-1) \\\hline
\end{array}
$$
\end{thm}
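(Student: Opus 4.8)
The plan is to determine $h^{1,1}$ and $h^{1,2}$ from two independent inputs: a direct computation of the Picard number from the structure of the elliptic fibration, and the Euler characteristics already recorded in the preceding lemma. For a smooth Calabi--Yau threefold the Hodge diamond is controlled by $h^{1,1}$ and $h^{1,2}$ alone, and the topological Euler characteristic satisfies $\chi(Y)=2\bigl(h^{1,1}(Y)-h^{1,2}(Y)\bigr)$. Hence once $h^{1,1}(Y)$ is known, $h^{1,2}(Y)$ is forced by $h^{1,2}(Y)=h^{1,1}(Y)-\tfrac12\chi(Y)$, and the entire theorem reduces to computing $h^{1,1}(Y)$ and substituting the Euler characteristics from the Calabi--Yau specialization of the Euler characteristic lemma.

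First I would compute $h^{1,1}(Y)$ via the Shioda--Tate--Wazir description of the N\'eron--Severi group of a smooth elliptic threefold with section. Since each model arises as a \emph{crepant} (hence smooth) resolution with trivial Mordell--Weil group, $\mathrm{NS}(Y)$ is generated by the pullback of $\mathrm{NS}(B)$, the zero section, and the fibral divisors not meeting the section; the latter are in bijection with the nodes of the Dynkin diagram of $\mathfrak{g}$. This yields
\begin{equation*}
h^{1,1}(Y)=h^{1,1}(B)+1+\operatorname{rk}(\mathfrak{g}),
\end{equation*}
with $\operatorname{rk}(\mathfrak{g})=2,3,4$ for $\mathrm{G}_2$, $\mathrm{B}_3$, $\mathrm{D}_4$ respectively. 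I would then eliminate $h^{1,1}(B)$: for the rational bases relevant to a Calabi--Yau elliptic fibration one has $b_1(B)=h^{2,0}(B)=0$, so $h^{1,1}(B)=b_2(B)=e(B)-2$, while Noether's formula $\chi(\mathcal{O}_B)=\tfrac{1}{12}\bigl(K^2+e(B)\bigr)=1$ gives $e(B)=12-K^2$. Therefore $h^{1,1}(B)=10-K^2$ and $h^{1,1}(Y)=11-K^2+\operatorname{rk}(\mathfrak{g})$, which evaluates to $13-K^2$, $14-K^2$, $15-K^2$ for the three models, matching the claimed values.

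With $h^{1,1}(Y)$ in hand, the values of $h^{1,2}(Y)$ follow mechanically by inserting the Calabi--Yau Euler characteristics
\begin{align*}
\chi_{\mathrm{G}_2}&=-60K^2+96(1-g)+36S^2,\\
\chi_{\mathrm{Spin}(7)}&=-60K^2+128(1-g)+44S^2,\\
\chi_{\mathrm{Spin}(8)}&=-60K^2+144(1-g)+48S^2
\end{align*}
into $h^{1,2}(Y)=h^{1,1}(Y)-\tfrac12\chi(Y)$. For instance, for the $\mathrm{G}_2$-model one gets $h^{1,2}=(13-K^2)+30K^2-48(1-g)-18S^2=13+29K^2-18S^2+48(g-1)$, and the Spin($7$) and Spin($8$) cases are identical computations, reproducing the remaining entries of the table.

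The genuinely substantive step is the first one, and its difficulty is conceptual rather than computational: one must justify that the Shioda--Tate--Wazir formula applies with the fibral rank equal to $\operatorname{rk}(\mathfrak{g})$. This requires knowing that each crepant resolution is smooth with fibral divisors over $S$ organized exactly as the affine Dynkin diagram minus its affine node (established by the fiber-structure analysis summarized earlier), that the Mordell--Weil group is trivial (an assumption of the setup), and that the codimension-two degenerations do not contribute extra independent fibral divisor classes. The second subtlety is the base hypothesis $b_1(B)=h^{2,0}(B)=0$ needed to write $h^{1,1}(B)=10-K^2$; this holds for the rational surfaces that support a Calabi--Yau elliptic threefold and is implicit in expressing the answer through $K^2$ alone. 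Everything downstream of $h^{1,1}(Y)$ is then forced by the Calabi--Yau Hodge relation and the already-established Euler characteristics.
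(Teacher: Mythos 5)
Your proposal is correct and takes essentially the same route as the paper's proof (which the paper defers to the cited reference \cite{Euler}): the Shioda--Tate--Wazir count $h^{1,1}(Y)=h^{1,1}(B)+1+\mathrm{rk}(\mathfrak{g})$ with $h^{1,1}(B)=10-K^2$ for the rational base, combined with the Calabi--Yau relation $h^{1,2}=h^{1,1}-\tfrac{1}{2}\chi(Y)$ and the Euler characteristics of the preceding lemma. All three rows of the table follow exactly as you compute them.
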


\begin{lem}
If  $Y$ is a fourfold, then the Euler characteristic  for each model is 
$$
\begin{aligned}
\begin{array}{|c|c|c|}
\hline
\textnormal{G}_2  & 12 (-6 c_1 L^2+4 c_1 L S-c_1 S^2+c_2 L+36 L^3-42 L^2 S+17 L S^2-2 S^3)
  \\\hline
\textnormal{Spin($7$)} & 4 (-18 c_1 L^2+16 c_1 L S-5 c_1 S^2+3 c2 L+108 L^3-166 L^2 S+89 L S^2-15 S^3)
  \\\hline
\textnormal{Spin($8$)} & 12 (-6 c_1 L^2+6 c_1 L S-2 c_1 S^2+c_2 L+36 L^3-60 L^2 S+34 L S^2-6 S^3)
  \\\hline
\end{array}
\end{aligned}
$$
\end{lem}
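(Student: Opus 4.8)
The plan is to derive this lemma as a direct specialization of the general Euler characteristic theorem stated above, in exactly the way the preceding surface/threefold lemma is obtained, but now extracting one Chow-degree higher. Since the elliptic fibration $\varphi\colon Y\to B$ has one-dimensional fibers, a fourfold $Y$ sits over a base with $\dim B=3$. The operator $\int_B$ in each of the three formulas automatically projects onto the component of top Chow-ring degree, namely degree $3$, so proving the lemma amounts to isolating and integrating the degree-$3$ part of each integrand.

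First I would apply the bookkeeping substitution described immediately after the theorem: replace $L\to tL$, $S\to tS$, and $c(TB)\to c_t(TB)=1+c_1 t+c_2 t^2+c_3 t^3$ with $c_i=c_i(TB)$. Because each factor of $t$ records a single unit of codimension, the degree-$3$ piece that survives $\int_B$ is precisely the coefficient of $t^3$ in the resulting power series. Under this substitution each of the three numerators begins at order $t$ (for instance the G$_2$ numerator becomes $tL+t^2(2SL-S^2)$, while the Spin($7$) numerator acquires contributions at orders $t$, $t^2$, and $t^3$), and every denominator is a product of factors of the form $\pm 1+O(t)$, hence invertible as a formal power series.

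Next I would expand each inverse denominator as a geometric series up to order $t^2$, multiply by the numerator and by $c_t(TB)$, and collect the coefficient of $t^3$; carrying this out for the G$_2$, Spin($7$), and Spin($8$) denominators $(1+tS)(1+6tL-3tS)$, $(1+tS)(-1-6tL+4tS)(-1-4tL+2tS)$, and $(1+tS)(1+6tL-4tS)$ respectively yields the three tabulated cubics. A useful structural check is that $c_3$ never appears in the final answer: since each numerator starts at order $t$, pairing it with a Chern class $c_i$ forces $i\le 2$, consistent with the stated formulas involving only $c_1$ and $c_2$.

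The computation is entirely mechanical, so the only genuine obstacle is the algebraic bookkeeping — correctly expanding the inverse denominators, tracking the numerous cross terms among $L$, $S$, $c_1$, $c_2$, and managing the signs that arise from the factors $-1+O(t)$. This step is most delicate for the Spin($7$) model, whose two distinct linear denominator factors together with the order-$t^3$ numerator term $5(3L-2S)(2L-S)S$ generate the largest number of contributing cubic monomials.
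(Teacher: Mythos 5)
Your proposal is correct and follows exactly the paper's own route: the lemma is obtained from the general Euler-characteristic theorem by substituting $L\to tL$, $S\to tS$, $c(TB)\to c_t(TB)$, inverting the denominators as formal power series, and extracting the coefficient of $t^3$ (base dimension $3$), which indeed reproduces all three tabulated cubics, including the sign cancellation between the two factors $-1+O(t)$ in the Spin($7$) denominator. Your structural check that $c_3$ cannot appear is a valid and useful observation consistent with the stated formulas.
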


\begin{lem}
If the base is a threefold and $Y$ is a Calabi--Yau fourfold,  the Euler characteristic  for each model is
\begin{align}
\begin{array}{|c|c|c|}
\hline
\textnormal{G}_2  & -12 \left(c_2 K+30 K^3+38 K^2 S+16 K S^2+2 S^3\right)
 \\\hline
\textnormal{Spin($7$)} & -12 \left(c_2 K+30 K^3+50 K^2 S+28 K S^2+5 S^3\right)
  \\\hline
\textnormal{Spin($8$)} &  -12 \left(c_2 K+30 K^3+54 K^2 S+32 K S^2+6 S^3\right)
  \\\hline
\end{array}\nonumber
\end{align}
\end{lem}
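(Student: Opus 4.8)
The plan is to obtain the stated formulas as a specialization of the general Euler characteristic theorem above, together with the Calabi--Yau constraint $L=-K$. Since $Y$ is a fourfold, the base $B$ is a threefold, so by the extraction recipe I must compute the coefficient of $t^3$ after the substitution $L\to tL$, $S\to tS$ and the replacement of the total Chern class by $c_t(TB)=1+c_1t+c_2t^2+c_3t^3$. For each of the three models the integrand is of the form $R(L,S)\,c(TB)$, where $R(L,S)=\tfrac{(\text{const})\cdot N(L,S)}{D(L,S)}$ with $D(0,0)=1$ and with numerator $N$ vanishing at the origin; hence $R$ has no degree-zero part and expands as $R=R_1+R_2+R_3+\cdots$, where $R_k$ is homogeneous of degree $k$ in $(L,S)$.

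First I would carry out this expansion. Writing $R(tL,tS)=tR_1+t^2R_2+t^3R_3+\cdots$ and multiplying by $c_t(TB)$, the coefficient of $t^3$ is
\[
R_3+R_2\,c_1+R_1\,c_2 ,
\]
because the would-be $c_3$ contribution multiplies $R_0=0$ and therefore drops out. This structural vanishing explains the absence of any $c_3$ term, and of any $c_1^2$ or $c_1c_2$ term, since the graded pieces of $c(TB)=1+c_1+c_2+c_3$ are single Chern classes rather than products, so each Chern class enters linearly. Expanding the three rational functions to third order in $(L,S)$ and assembling $R_3+R_2c_1+R_1c_2$ reproduces exactly the general (non-Calabi--Yau) fourfold formulas of the preceding lemma; alternatively, I may simply take those formulas as the starting point.

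It then remains to impose the Calabi--Yau condition. For a Weierstrass model over $B$ this reads $L=-K_B=c_1(TB)$, so I substitute $L=c_1=-K$ into the general fourfold expressions. Under this substitution $c_2L\mapsto -c_2K$, each pure monomial $L^aS^b$ collapses to $(-1)^aK^aS^b$, and each mixed monomial $c_1L^aS^b$ collapses to $(-1)^{a+1}K^{a+1}S^b$; collecting the resulting $K^3$, $K^2S$, $KS^2$, $S^3$, and $c_2K$ terms yields the three claimed formulas, all of which factor as $-12(c_2K+30K^3+\cdots)$. The only real obstacle is sign-and-coefficient bookkeeping: one must consistently identify $c_1$ with both $L$ and $-K$ at once and track the signs coming from $L=-K$ through the degree-three monomials. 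No geometric input beyond the relation $L=-K$ is needed, so I expect the verification to be purely algebraic.
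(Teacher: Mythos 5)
Your proposal is correct and follows essentially the same route as the paper: the Calabi--Yau fourfold formulas are obtained by specializing the general fourfold lemma (itself the degree-three coefficient of the generating-function theorem, where the $c_3$ term indeed drops out since the rational factor has no degree-zero part) via the substitution $L=c_1=-K$. The sign bookkeeping works out exactly as you describe — e.g.\ for Spin($7$) the bracket collapses to $-3(c_2K+30K^3+50K^2S+28KS^2+5S^3)$, which against the prefactor $4$ gives the stated $-12(\cdots)$ — so nothing further is needed.
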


\section{G$_2$-models}\label{Sec:G2}
A G$_2$-model is a Weierstrass model with a  geometric fiber of type I$^*_0$  over the generic point of a divisor $S$ of the base such that the auxiliary polynomial $P(T)$ is $\kappa$-irreducible. 
We distinguish two  types of G$_2$-models, depending on the Galois group $\text{Gal}(\kappa'/\kappa)$.
If  the discriminant $\Delta(P)$ of the associated cubic polynomial $P(T)$ is a perfect square modulo $s$, then the Galois group is $\mathbb{Z}/3\mathbb{Z}$,  and we 
call such a model a G$_2^{\mathbb{Z}/3\mathbb{Z}}$-model. Otherwise the Galois group is  the symmetric group $S_3$, and the model is called a  G$_2^{S_3}$-model. 
Geometrically, these two types have different  fiber structures  from  codimension-two.

\begin{thm}[Canonical form for G$_2$-models]\label{Thm:G2Can}
 \quad 
 A G$_2^{S_3}$-model  can always be  written in the following canonical form 
\begin{equation}\nonumber
\text{G}_2^{S_3} \quad  y^2z=x^3 + s^{2} f xz^2 +s^{3} gz^3, \quad v(f)\geq 0, \quad v(g)=0.
\end{equation}
The polynomial 
$P(T)=T^3+f T +g$ is an irreducible cubic in $\kappa[T]$ and $\Delta(P)=4 f^3+ 27 g^2$ is its discriminant. 
If $\Delta(P)$ is not a perfect square in  $\kappa$,  the Galois group $\text{Gal}(\kappa'/\kappa)$ is $S_3$. Otherwise, it is  $\mathbb{Z}/3\mathbb{Z}$. 
The $j$-invariant is $j=1728 \frac{4 f^3}{4f^3+27 g^2}$, which  varies over $S$.
\end{thm}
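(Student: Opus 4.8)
The plan is to begin with a general Weierstrass model whose fiber over the generic point $\eta$ of $S$ is of Kodaira type I$^*_0$, i.e. one satisfying N\'eron's valuation bounds \eqref{Neron.val.I0*}, and to normalize it by the standard completing-the-square and completing-the-cube transformations while tracking valuations with respect to $v_S$. Because we work over $\mathbb{C}$, the integers $2$ and $3$ are units in the DVR $\mathscr{O}_{B,\eta}$, so the substitution $y \mapsto y - \tfrac12(a_1 x + a_3 z)$ removes $a_1,a_3$ and produces the coefficients $b_2,b_4,b_6$; from \eqref{Neron.val.I0*} one reads off $v_S(b_2)\geq 1$, $v_S(b_4)\geq 2$, $v_S(b_6)\geq 3$. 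The further shift $x \mapsto x - b_2/12$ is an \emph{integral} change of coordinates since $v_S(b_2/12)\geq 1\geq 0$, and it brings the model to the short form $y^2 z = x^3 - \tfrac{c_4}{48}\,xz^2 - \tfrac{c_6}{864}\,z^3$. Using the definitions of $c_4,c_6$ together with the bounds on the $b_i$, a direct estimate gives $v_S(c_4)\geq 2$ and $v_S(c_6)\geq 3$, so that $A := -c_4/48$ and $B := -c_6/864$ may be written as $A = s^2 f$ and $B = s^3 g$ with $v_S(f)\geq 0$ and $v_S(g)\geq 0$. This already produces the shape $y^2 z = x^3 + s^2 f\,xz^2 + s^3 g\,z^3$ with $a_{2,1}=0$, so that the auxiliary cubic of Step 6 is exactly $P(T)=T^3+fT+g$.

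The next step is to upgrade $v_S(g)\geq 0$ to $v_S(g)=0$ using the G$_2$ hypothesis. By definition a G$_2$-model is the instance of Step 6 in which $P(T)$ is $\kappa$-irreducible. If $v_S(g)>0$ held, then the reduction $\bar g\in\kappa$ would vanish and $P(T)$ would reduce modulo $s$ to $T(T^2+\bar f)$, which has the $\kappa$-rational root $T=0$, contradicting irreducibility. Hence $\bar g\neq 0$, i.e. $v_S(g)=0$, which completes the derivation of the canonical form. Reading the Step 6 formula for the discriminant with $a_{2,1}=0$, $a_{4,2}=f$, $a_{6,3}=g$ gives $\Delta(P)=4f^3+27g^2$. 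I would also record that the I$^*_0$ condition is equivalent to $v_S(\Delta(P))=0$, i.e. to $P$ being separable; this is consistent with, and logically independent of, the irreducibility used just above.

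The Galois statement is then immediate from Lemma \ref{lem:Gal.Cubic}: $P(T)$ is an irreducible separable cubic over a field of characteristic zero, so $\mathrm{Gal}(\kappa'/\kappa)$ is $S_3$ when $\Delta(P)$ is not a square in $\kappa$ and is $\mathbb{Z}/3\mathbb{Z}$ when it is. Since $\kappa\supseteq\mathbb{C}$, the square-class of $\Delta(P)$ coincides with that of the usual cubic discriminant $-\Delta(P)$, so the sign convention of the excerpt is harmless. For the $j$-invariant I would simply substitute $A=s^2 f$, $B=s^3 g$ into the short-form expression $j = 1728\,\frac{4A^3}{4A^3+27B^2}$; the common factor $s^6$ cancels, yielding $j = 1728\,\frac{4f^3}{4f^3+27g^2}$. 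As $4f^3+27g^2=\Delta(P)$ has valuation $0$, this expression is regular at $\eta$, and for general coefficients $f,g$ its reduction modulo $s$ is a nonconstant element of $\kappa$, so $j$ genuinely varies over $S$ (in contrast to the $j\in\{0,1728\}$ cases, which are treated separately).

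The only genuinely delicate point is the valuation bookkeeping of the first paragraph: one must verify that completing the cube is integral over $\mathscr{O}_{B,\eta}$, so that the type of the special fiber (and hence the assignment of $\kappa$, $P$, and $\kappa'$) is preserved, and that the resulting $c_4,c_6$ satisfy the claimed bounds $v_S(c_4)\geq 2$ and $v_S(c_6)\geq 3$. Everything else is either the defining property of a G$_2$-model, a one-line reduction modulo $s$, an application of Lemma \ref{lem:Gal.Cubic}, or a direct substitution.
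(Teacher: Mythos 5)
Your proposal is correct and follows exactly the route the paper intends: its proof of this theorem is the one-line assertion that the statement ``directly follows from Step 6 of Tate's algorithm and Lemma \ref{lem:Gal.Cubic},'' and your argument is precisely that derivation carried out in full --- completing the square and cube integrally, checking $v_S(c_4)\geq 2$, $v_S(c_6)\geq 3$ from N\'eron's inequalities, extracting $v_S(g)=0$ from $\kappa$-irreducibility of $P(T)$, and invoking the Galois lemma together with the substitution for $j$. Your explicit valuation bookkeeping and the remark that the sign convention on $\Delta(P)$ is harmless because $-1$ is a square in $\kappa\supseteq\mathbb{C}$ are exactly the details the paper leaves implicit.
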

\begin{proof}
Follows directly from Step 6 of Tate's algorithm and Lemma \ref{lem:Gal.Cubic}.
\end{proof}
The generic case of a fiber of type I$_0^*$ is when the Galois group  $\text{Gal}(\kappa'/\kappa)$ is the symmetric group $S_3$ and requires $v(f)=0$, i.e.  
\begin{equation}\label{Eq:G2S3}
\text{G}_2^{S_3} \quad  y^2z=x^3 + s^{2} f xz^2 +s^{3} gz^3, \quad v(f)=v(g)=0.
\end{equation}
If we increase the valuation of $f$, the Galois group is automatically $\mathbb{Z}/3\mathbb{Z}$ since $\Delta(P)$ will be a perfect square modulo $s$. 
Note that $g$ cannot be a perfect cube modulo $s$ because  otherwise, $P(T)=T^3+g$ will have three $\kappa$-roots and the model would be a Spin($8$)-model instead of   a G$_2$-model. Hence, we have in this case 
\begin{equation}\label{Eq:G2Z31}
\text{G}_2^{\mathbb{Z}/3 \mathbb{Z}} \quad  y^2z=x^3 + s^{3+\alpha} f xz^2 +s^{3} gz^3,\quad \alpha\geq 0, \quad v(f)=v(g)=0,\quad 
\text{$g$ is not a cube modulo $s$}.
\end{equation}
 
In section \ref{Sec:G2Z3}, we prove that this   G$_2^{\mathbb{Z}/3\mathbb{Z}}$-model is compatible with a crepant resolution after some mild assumptions. Moreover, we allow $\Delta(P)$ to be a perfect square modulo $s$ using more complicated coefficients. For example, a Weierstrass model inspired by well-known examples in number theory is 
\begin{equation}\label{Eq:G2Z32}
\text{G}_2^{\mathbb{Z}/3 \mathbb{Z}} \quad    y^2=x^3 + s^{2}  (-3  a  r+s q)  x +s^{3} ( a^2 r +ar ^2+s t).
\end{equation}
This model suffers from $\mathbb{Q}$-factorial terminal singularities obstructing the existence of a  crepant resolution if the base is of dimension three or higher. 
When the base is a surface, this model does have a crepant resolution, and its fiber structure (see Figure \ref{Fig:G2Z3term}) is much richer than that of equation \eqref{Eq:G2Z31}  (see Figure \ref{Fig:G2Z3}). 
The fiber structure of the crepant resolution of a $\text{G}_2^{S_3}$-model  described by equation \eqref{Eq:G2S3} is presented in Figure \ref{Fig:G2S3}.

Finally, we  point out that all the different G$_2$-models discussed have  (at best) a unique  crepant resolution. In other words, a smooth 
G$_2$-model does not have flops. This can be explained numerically by studying the vertical curves produced by the codimension-two degenerations. 
They have intersection numbers with the fibral divisors corresponding to  weights of the representation $\mathbf{7}$ of $\mathfrak{g}_2$; thus, the  corresponding hyperplane arrangement I$(\mathfrak{g}_2, \mathbf{7})$  has only one chamber.

\begin{thm}[Crepant resolutions for G$_2$]\label{Thm:G2Res}
Assuming that $V(s)$ and $V(g)$ intersect transversally, the  sequence of  blowups that define a crepant resolution of the Weierstrass model 
$\mathscr{E}_0:\quad  y^2-(x^3 + s^{2} f  x+s^3 g)=0$ is given to be
\begin{subequations}
\begin{equation}\nonumber
  \begin{tikzcd}[column sep=huge] 
  X_0  \arrow[leftarrow]{r} {(x,y,s|e_1)} & \arrow[leftarrow]{r}{(y,e_1|e_2)} X_1 &X_2.
  \end{tikzcd}
\end{equation}
The proper transform of the Weierstrass model  is the vanishing locus of
\begin{equation}\nonumber
 F= e_2 y^2-e_1(x^3 + s^{2+\alpha} e_1^\alpha e_2^\alpha f  x+s^3 g), \quad \alpha\geq 0.
\end{equation}
The relative projective coordinates are   $[e_1 e_2  x:e_1 e_2^2  y: z][x:e_2y:s][y:e_1]$.
\end{subequations}
\end{thm}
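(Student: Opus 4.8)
The plan is to realize the two blowups as blowups of the ambient projective bundle $X_0=\mathbb{P}(\mathscr{O}_B\oplus\mathscr{L}^{\otimes 2}\oplus\mathscr{L}^{\otimes 3})$, track the proper transform of $\mathscr{E}_0$ through each stage, and then check separately that (i) the singular locus is contained in the successive centers, (ii) each blowup is crepant, and (iii) the final total space $X_2=V(F)$ is nonsingular. First I would fix the canonical form from Theorem \ref{Thm:G2Can}, writing the equation as $y^2z-x^3-s^{2+\alpha}fxz^2-s^3gz^3$ (with $\alpha\geq 0$ covering both the $S_3$ and $\mathbb{Z}/3\mathbb{Z}$ cases), and locate $\mathrm{Sing}(\mathscr{E}_0)$ via the Jacobian criterion: vanishing of the partials forces $y=0$ and, on $S=V(s)$, also $x=0$, so that $V(x,y,s)\subseteq\mathrm{Sing}(\mathscr{E}_0)$. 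This identifies $V(x,y,s)$ as the correct first center and ensures the eventual morphism is an isomorphism over the smooth locus.

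Next I would compute the two proper transforms using the substitution rule for a blowup $(g_1,\dots,g_k\mid e)$, namely replacing each $g_i$ by $e\,g_i$ and dividing out the largest common power of $e$. For the first blowup $(x,y,s\mid e_1)$ the substitution $x\mapsto e_1x$, $y\mapsto e_1y$, $s\mapsto e_1s$ yields an overall factor $e_1^2$ and leaves $Y_1:\ y^2-e_1(x^3+e_1^\alpha s^{2+\alpha}fx+s^3g)$; the second blowup $(y,e_1\mid e_2)$, with $y\mapsto e_2y$ and $e_1\mapsto e_1e_2$, produces a factor $e_2$ and yields exactly the claimed $F=e_2y^2-e_1(x^3+s^{2+\alpha}e_1^\alpha e_2^\alpha fx+s^3g)$. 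Reading off the charts of the two $\mathbb{P}$-bundles introduced by $e_1,e_2$ then confirms the stated relative projective coordinates $[e_1e_2x:e_1e_2^2y:z][x:e_2y:s][y:e_1]$. Before doing this I would verify that each center is a regularly embedded smooth complete intersection (so that the blowup is the standard one and the proper-transform formulas apply).

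Crepancy is the easy structural step. For the blowup of a smooth center of codimension $c$ in the ambient space, the ambient discrepancy is $c-1$, while the proper transform of the hypersurface loses a factor $e^m$, where $m$ is the multiplicity of $\mathscr{E}_0$ along the center; by adjunction the induced map on the hypersurface is crepant precisely when $(c-1)-m=0$. For the first center $V(x,y,s)$ one has $c=3$ and the lowest-order term $y^2$ gives $m=2$; for the second center $V(y,e_1)$ one has $c=2$ and $m=1$. Hence both steps are crepant, and therefore so is the composite $X_2\to X_0$ restricted to the proper transform.

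The main obstacle is establishing that $X_2=V(F)$ is smooth. Here I would cover the blown-up ambient space by the affine charts attached to $e_1$ and $e_2$ and apply the Jacobian criterion to $F$ in each. The delicate charts are those meeting the exceptional loci over the deepest fiber-degeneration strata, where the hypothesis that $V(s)$ and $V(g)$ meet transversally — together with genericity of $f,g$ — is exactly what prevents the relevant partials from vanishing simultaneously. I expect the chart containing $\{e_2=0\}$ to require the most care: there $F$ degenerates essentially to $e_1(x^3+\cdots+s^3g)$, and smoothness hinges on $V(s,g)$ being a transverse, hence smooth codimension-two, locus; for $\alpha>0$ one must additionally confirm that the extra factors $e_1^\alpha e_2^\alpha$ attached to $f$ do not reintroduce a singularity, which is precisely where the mild conditions on the coefficients enter. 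Once smoothness is verified on every chart, the theorem follows by combining it with the crepancy computation and the fact that all centers lie over $\mathrm{Sing}(\mathscr{E}_0)$.
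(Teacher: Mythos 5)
Your proposal is correct and follows essentially the same route as the paper's proof: compute the proper transform through the two blowups and verify smoothness chart by chart via the Jacobian criterion, with the transversality of $V(s)$ and $V(g)$ entering exactly as you describe — it makes $(x,y,s,g)$ a regular sequence extendable to local coordinates, so one can differentiate with respect to $g$ to rule out singular points in the delicate charts. The only place you go beyond the paper is the explicit crepancy verification via the discrepancy count $(c-1)-m=0$ for each center (codimension $3$ with multiplicity $2$, then codimension $2$ with multiplicity $1$); the paper's proof checks only smoothness and leaves crepancy implicit, so this is a useful supplement rather than a different approach.
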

\begin{proof}
We check smoothness in charts. By our assumptions, $(x,y,s,g)$ forms a regular sequence and can be extended to a local set of coordinates. 
This will allow us to take derivatives with respect to $s$ and $g$. 
The first blowup is done in three charts.  
\begin{enumerate}
\item 
$
(x,y,s)\to (xy, y, ys)
$
The defining equation is 
\begin{equation}\nonumber
F_{(1)}=  1-y(x^3 + s^{2+\alpha} f  x+s^3 g). 
\end{equation}
There are no singularities left in this chart  since $F$ and $\partial_y F$  cannot both vanish at the same time. 
The center of the second blowup is not visible in this chart since $y\neq 0$. 
\item 
$
(x,y,s)\to (x, xy, xs)
$
\begin{equation}\nonumber
 F_{(2)}=y^2-x(1 + s^{2+\alpha} f +s^3 g). 
\end{equation}
There is a singularity at $(y,x,1 + s^{2+\alpha} f +s^3 g)$.
The exceptional divisor is $V(x)$. Hence, in this chart, the second blowup has center $(x,y)$ and requires two charts. 
\begin{enumerate}
\item $(x,y)\to (x, yx)$
 $$
 F_{(2,1)}= x y^2-(1 + s^{2+\alpha} f +s^3 g). 
$$
$V(F_{(2,1)})$ is smooth in this chart since  $\partial_x F_{(2,1)}$, $\partial_f F_{(2,1)}$ (or $\partial_g F_{(2,1)}$), and $F_{(2,1)}$ cannot all vanish at the same time. 
\item  $(x,y)\to (xy, x)$
$$
 F_{(2,2)}= y-x(1 + s^{2+\alpha} f +s^3 g). 
$$
$V(F_{(2,2)})$ is smooth since $\partial_y F_{(2,2)}$ is a unit. 
\end{enumerate}

\item 
$
(x,y,s)\to (sx, sy, s)
$
\begin{equation}\nonumber
 F_{(3)}=   y^2-s(x^3 + s^\alpha f  x +g). 
\end{equation}
We have  double point singularities at $(y,s,x^3 + f  x+g)$. The exceptional divisor is $V(s)$;  hence, the second blowup has center $(y,s)$, which requires two charts. 

\begin{enumerate}
\item 
$
(y,s)\to (y , ys)
$
\begin{equation}\nonumber
 F_{(3,1)}= y-s(x^3 + s^\alpha f  x +g). 
\end{equation}
This is smooth, as can be demonstrated   by taking the derivative with respect to $y$.
\item 
$
(y,s)\to (ys, s)
$
\begin{equation}\nonumber
F_{(3,2)}=sy^2-(x^3 + s^\alpha f  x +g). 
\end{equation}
Since $\partial_g F_{(3,2)}$ is a unit, there are no singularities. 
\end{enumerate}
\end{enumerate}
\end{proof}

\subsection{$\text{G$_2$}^{S_3}$-model}

We recall from Theorem  \ref{Thm:G2Res} that, after the blowup, the elliptic fibration is cut out by
\begin{equation}\nonumber
 F= e_2 y^2-e_1(x^3 + s^{2+\alpha} e_1^\alpha e_2^\alpha f  x+s^3 g), \quad \alpha\geq 0.
\end{equation}
The relative projective coordinates are   $[e_1 e_2  x:e_1 e_2^2  y: z][x:e_2y:s][y:e_1]$. 
The irreducible components of the generic fiber over $S$ are $C_0$, $C_1$, and $C_2$. 
The curve $C_a$ is also the generic fiber of the fibral  divisor $D_a$ ($a=0,1,2$), which are given by 
\begin{equation}
\begin{cases}
D_0&\quad :  \quad s= ze_2 y^2-e_1  x^3=0 \\
D_1 &\quad:  \quad  e_1=e_2=0 \\
D_2 &\quad:  \quad e_2=  x^3 +s^2 f x z^2 + s^3 g z^3=0.\\
\end{cases}
\end{equation}
In the Chow ring $A(Y)$, the divisors $D_a$ are of classes\footnote{
The fibral divisor $D_1$ is the Cartier divisor $V(e_1)$ while the Cartier divisor $V(e_2)$ is $D_1+D_2$.  Hence to the class of $D_2$ is $[V(e_2)-V(e_1)]=E_2-(E_1-E_2)$.}
\begin{align}\nonumber
[D_0] &= [S]-[E_1],\quad 
[D_1] =[E_1]-[E_2],\quad 
[D_2] =2[E_2]-[E_1].
\end{align}
The curve $C_0$ is the only one that touches the section of the Weierstrass model. 
The curves $C_0$ and $C_1$ are smooth geometrically irreducible rational curves.  
Hence, the divisors $D_0$ and $D_1$ are $\mathbb{P}^1$-bundles over $S$:
\begin{equation}\nonumber
D_0=\mathbb{P}_{S} [ \mathscr{L}\oplus \mathscr{O}_{S}], \quad D_1=\mathbb{P}_{S} [ \mathscr{L}^{\otimes 2} \oplus\mathscr{S}],
\end{equation}
where $\mathscr{L}$ is the fundamental line bundle of the Weierstrass model and $\mathscr{S}$ is the line bundle $\mathscr{O}_B(S)$. 
The curve  $C_2$ splits into three geometrically irreducible rational curves  in the splitting field of the polynomial $P(T)$. 
The divisor $D_2$ is a triple cover of the $\mathbb{P}^1$-bundle $\mathbb{P}_{S} [ \mathscr{L}^{\otimes 3} \oplus\mathscr{S}^{\otimes 2}]$ ramified over  $V(s, 4 f^3+2 g^2)$.
Over the generic point of $V(s, 4 f^3+2 g^2)$, the curve $C_2$ factorizes into a line and a double line. 
The full geometric generic fiber over $V(s, 4 f^3+2 g^2)$ is an incomplete $\widetilde{D}_5$, while the  full generic fiber over $V(s, 4 f^3+2 g^2)$ is an incomplete $\widetilde{B}_4^t$ if the dimension of the base is three or higher. 
Over the generic point of $V(s, f, g)$, $C_2$  degenerates further into a triple line $3 C'_2$, where
\begin{equation}\nonumber
C_2':  e_2=x=f=g=0.
\end{equation}
It follows that  the full fiber structure is an  incomplete $\widetilde{\text{E}}_6$, formed by three rational curves of multiplicity $1$, $2$, and $3$, 
namely  $C_0+2 C_{12}+3 C'_2$. 
The proper morphism $f:D_2\to S$ has a Stein factorization 
$$f:D_2\xrightarrow{f'} S'\xrightarrow{\pi} S,$$
where  $\pi: S'\xrightarrow{} S$ is a finite map, as well as   a triple cover  of $S$ with ramification divisor $4f^3+27g^2$. The proper morphism $f':D_2\to S'$ has connected fibers, and endows $D_2$ with  the structure of a $\mathbb{P}^1$-bundle over $S'$ such that
  $$D_2\cong \mathbb{P}_{S'}[\pi^*  (\mathscr{L}^{\otimes 3} \oplus\mathscr{S}^{\otimes 2})]\to S'.$$

The node $C'_2$ has the  quasi-minuscule weight $\boxed{1,-2}$, whose Weyl orbit corresponds to the non-zero weights of the fundamental representation $\mathbf{7}$ of G$_2$. 
 It follows that the representation associated with the G$_2$-model is the direct sum of the adjoint representation (which is always present) and the fundamental representation $\mathbf{7}$.

\subsection{$G^{\mathbb{Z}/3\mathbb{Z}}_2$-model}\label{Sec:G2Z3}

A G$_2$-model over $S=V(s)$  with Galois group $\mathbb{Z}/3\mathbb{Z}$  is given by the Weierstrass model
\begin{equation}\nonumber
\mathscr{E}_0:\quad  y^2=x^3 + s^{3+\alpha} f  x +s^3 g , \quad \alpha\geq 0,
\end{equation}
where $g$ is not a perfect square modulo $s$.
This $G^{\mathbb{Z}/3\mathbb{Z}}_2$ is a specialization of  $G^{S_3}_2$, obtained by increasing the valuation of $c_4$. The $G^{\mathbb{Z}/3\mathbb{Z}}_2$-model distinguishes itself by the behavior of its $j$-invariant over $S$ and its fiber degeneration. 

\begin{lem}
The value of the $j$-invariant for the generic fiber over  $S$  is $0$. 
\end{lem}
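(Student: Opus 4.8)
The plan is to read off the value of $j$ at the generic point $\eta$ of $S$ directly from the Deligne--Tate formulas recorded in the definition of the Weierstrass model, specialized to $\mathscr{E}_0$. Here $a_1=a_2=a_3=0$, $a_4=s^{3+\alpha}f$, and $a_6=s^3 g$, so I would first compute $b_2=0$, $b_4=2s^{3+\alpha}f$, $b_6=4s^3g$, and hence
\[
c_4=-48\,s^{3+\alpha}f,\qquad \Delta=-8b_4^3-27b_6^2=-64\,s^{9+3\alpha}f^3-432\,s^6g^2,
\]
where the cross terms in $\Delta$ drop out because $b_2=0$. This explicit evaluation is the only computational input, and it is routine.

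The second step is a comparison of $S$-adic valuations. Since $v_S(f)\ge 0$ and $v_S(g)=0$ by hypothesis, the monomial $-432\,s^6g^2$ has valuation exactly $6$ whereas $-64\,s^{9+3\alpha}f^3$ has valuation at least $9+3\alpha\ge 9$; hence $v_S(\Delta)=6$, in agreement with the $\mathrm{I}_0^*$ row of Table~\ref{Table.KodairaNeron}. On the other hand $v_S(c_4^3)=3(3+\alpha)=9+3\alpha$. Therefore
\[
v_S(j)=v_S(c_4^3)-v_S(\Delta)=3+3\alpha>0,
\]
so $j=c_4^3/\Delta$ is regular at $\eta$ and vanishes there; that is, the value of the $j$-invariant of the generic fiber over $S$ is $0$.

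Equivalently, and as a check against the formula of Theorem~\ref{Thm:G2Can}, cancelling the common factor $s^6$ gives
\[
j=\frac{c_4^3}{\Delta}=\frac{-110592\,s^{3+3\alpha}f^3}{-64\,s^{3+3\alpha}f^3-432\,g^2}=\frac{6912\,s^{3+3\alpha}f^3}{4\,s^{3+3\alpha}f^3+27\,g^2},
\]
which is exactly the value $j=1728\cdot 4f^3/(4f^3+27g^2)$ of Theorem~\ref{Thm:G2Can} under the substitution $f\mapsto s^{1+\alpha}f$ implicit in passing from the $\mathrm G_2^{S_3}$ form~\eqref{Eq:G2S3} to \eqref{Eq:G2Z31}. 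Reducing modulo $s$ and using $\alpha\ge 0$, so that $s^{3+3\alpha}\equiv 0$, the numerator vanishes while the denominator reduces to $27\,g^2$.

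The only point that needs care, rather than being a genuine obstacle, is that the denominator reduces to a \emph{unit} of $\kappa$: one must know that $v_S(\Delta)$ is exactly $6$ rather than larger, so that the quotient is truly regular at $\eta$ with value $0$ rather than an indeterminate $0/0$. This is precisely guaranteed by the standing hypothesis $v_S(g)=0$, together with the fact that over $\mathbb{C}$ the constant $432=2^4 3^3$ is nonzero, which prevents any further cancellation. With this in hand the lemma follows.
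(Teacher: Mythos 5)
Your proof is correct and follows essentially the same route as the paper: the paper likewise writes $j=1728\,\frac{4F^3}{4F^3+27G^2}$ for $y^2=x^3+Fx+G$ with $F=s^{3+\alpha}f$, $G=s^3g$, and concludes from $v_S(F^3)>v_S(G^2)$ that $j$ vanishes at the generic point of $S$. Your version merely re-derives $c_4$ and $\Delta$ from the Deligne formulaire and makes explicit the (correct) point that $v_S(\Delta)=6$ exactly, so the quotient is regular at $\eta$ rather than indeterminate.
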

\begin{proof}
An elliptic fibration $y^2= x^3 + F x + G$ has a $j$-invariant $j=1728 \frac{4 F^2}{ (4 F^3 + 27 G^2)}$. In this case, since $v_S(F^3)>v_{S}(G^2)$, $j=0$ over the generic point of $S$.  
\end{proof}
The crepant resolution of this elliptic fibration follows the same blowup as in the general case. But the proper transform is now 
\begin{equation}\nonumber
Y: \quad  e_2 y^2=e_1(x^3 +e_2^{1+\alpha} e_1^{1+\alpha} s^{3+\alpha} f  x +s^3 g). 
\end{equation}
In $X_2$, the projective coordinates are
\begin{equation}\nonumber
[e_1 e_2  x:e_1 e_2^2  y: z=1][x:e_2y:s][y:e_1].
\end{equation}
The divisor  $D_2$ is now simply 
\begin{equation}\nonumber
D_2 \quad:  \quad e_2=  x^3 +s^3 g=0.
\end{equation}
The divisor $D_2$ is still a triple cover of the $\mathbb{P}^1$-bundle $\mathbb{P}_{S} [ \mathscr{L}^{\otimes 3} \oplus\mathscr{S}^{\otimes 2}]$, except  now branched at $V(s,g)$. 
The node $C_2$ splits into three irreducible components  in the splitting field of the polynomial $T^3- g$. Hence, the Galois group of the splitting field is $\mathbb{Z}/3 \mathbb{Z}$.  

The arithmetic degeneration only exits  to a $D_4$, while the geometric degeneration to an incomplete $\widetilde{\text{E}}_6$ over $V(s, g)$. 
The node $C'_2$ has the  quasi-minuscule weight $\boxed{2\ -1}$, whose  Weyl orbit corresponds to the non-zero weights of the  fundamental representation $\mathbf{7}$ of G$_2$. Thus, the  representation $\mathbf{7}$ is quasi-minuscule.

If the base is  of dimension three or higher, there is an {\em arithmetic degeneration} 
\begin{equation}\nonumber
\widetilde{\text{G}}_2\longrightarrow \widetilde{\text{D}}_4,
\end{equation}
as the $\widetilde{\text{G}}_2$ becomes a fiber $\widetilde{\text{D}}_4$ over the loci where the polynomial $x^3+s^3 g$ splits completely.  
An example of such a loci is over the intersection of $s$ with the locus 
$$V(\xi^3+g+s r),$$
 where $\xi$ is a section of the line bundle $\mathscr{L}^2\otimes \mathscr{S}^{-1}$.
 
 \begin{exmp}
 If $B=\mathbb{P}^3$, $\mathscr{L}=\mathscr{O}_{\mathbb{P}^3}(4)$, and $\mathscr{S}=\mathscr{O}_{\mathbb{P}^3}(2)$, then $g$ is a section of $\mathscr{O}(18)$ and $\xi$ is a section of $\mathscr{O}(6)$. 
Such a curve depends on twenty parameters, and one can easily write a family of them passing through any arbitrary point of $S$. 
  \end{exmp}
The same weight  $\boxed{2\  -1}$ of the representation $\mathbf{7}$ appears for the arithmetic degeneration, i.e. 
\begin{equation}\nonumber
(\widetilde{G}_2\longrightarrow \widetilde{\text{D}}_4)\Longrightarrow \quad \text{weight} \quad \boxed{2\  -1} \quad \text{of the  representation}\quad \mathbf{7}.
\end{equation}
We remark that this is one third of the weight corresponding to a generic fiber of $D_{1}$.

\subsection{ $\text{G$_2$}^{\mathbb{Z}/ 3\mathbb{Z}}$ and terminal singularities }\label{Sec:G2Z3Terminal}

The $\text{G$_2$}^{\mathbb{Z}/ 3\mathbb{Z}}$-model we have considered  in the previous section has a crepant resolution for  a base of arbitrary dimension. 
It was uniquely defined by the valuation of the Weierstrass coefficients ($v_S(c_4)\geq 3$, $v_S(c_6)=3$). 
In this section, we explore  an example of a  $\text{G$_2$}^{\mathbb{Z}/ 3\mathbb{Z}}$-model that has the same valuations as the $\text{G$_2$}^{S_3}$  ($v_S(c_4)= 2$, $v_S(c_6)=3$) model. 
The Galois group $\mathbb{Z}/ 3\mathbb{Z}$  is enforced by requiring that the  discriminant $\Delta(P)$ is a perfect square in the residue field $\kappa(\eta)$. However, we will encounter non-trivial $\mathbb{Q}$-factorial terminal singularities  when the base is of dimension three or higher. 

Consider the Weierstrass model given by\footnote{This model is inspired by the family of cubics $F=x^3- b x + b c$, which   has Galois group $\mathbb{Z}/ 3\mathbb{Z}$  when $4b-27 c^2$ is a perfect square as it ensures that its discriminant  $\Delta(F)=b^2(4 b-27c^2)$ is a perfect square. A famous example of this type is the  family of cubics $F_m=x^3+ m x^2 + (m-3) x + 1$  introduced by Shank in the definition of the 
{\em simplest cubic fields}   \cite{Shanks}:  after completing the cube in $x$, $F_m$ takes the form $F=x^3- b x + b c$ with $b= (m^2 + 3 m + 9)/3$ and $c=(2m+3)/9$.
}
\begin{equation}\label{z3W}
\mathscr{E}_0: \quad  y^2=x^3 + s^{2}  (-3  a  r+s q)  x +s^{3} ( a^2 r +ar ^2+s t).
\end{equation}
The auxiliary polynomial $P(T)$ and its discriminant $\Delta(P)$ are given by 
\begin{equation}\nonumber
P(T)=T^3-3 a r T+  a r(a+r), \quad \text{with } \quad \Delta(P)=27 a^2 r^2 (a-r)^2.\nonumber
\end{equation}
Note that $P(T)$ is irreducible and $\Delta(P)$ is a perfect square. Hence, the splitting field of $P(T)$ has Galois group $\mathbb{Z}/3\mathbb{Z}$, which means  we have a  $\text{G$_2$}^{\mathbb{Z}/ 3\mathbb{Z}}$-model. 
 
Using the same sequence of blowups as in section  \ref{Sec:Crepant}, we have 
\begin{equation}\nonumber
\mathscr{E}_2: \quad  e_2y^2=e_1\Big({x^3 + s^{2}  (-3  a  r+se_1 e_2 q)  x +s^{3} ( a^2 r +ar ^2+se_1 e_2 t)}\Big).
\end{equation}
The irreducible components of the generic fiber over $S$ are 
\begin{equation}\nonumber
\begin{cases}
D_0&:  \quad s= ze_2 y^2-e_1  x^3=0\\
D_1 &:  \quad  e_1=e_2=0\\
D_2 &:  \quad \frac{e_2}{e_1}=  x^3 -3 a rs^2  x + s^3 a r (a+r)=0.
\end{cases}
\end{equation}
 The curve $C_2$  (i.e. the generic fiber of $D_2$)  is irreducible over $\kappa$. The degeneration of the fibers are characterized by the irreducible components of $\Delta(P)=27 a^2 r^2 (a-r)^2$. 
Along $ar=0$, $P(T)$ specializes to a perfect cube $P(T)=T^3$. This means all  three geometric components of $C_2$ coincide, yielding the 
 fiber of type $1-2-3$, which is an incomplete $\widetilde{\text{E}}_6$. 
Along $a-r=0$, $P(T)=(T-a)^2(T+2a)$. This means  two of the three geometric components of $C_2$ coincide,   giving an incomplete fiber of type $\widetilde{\text{D}}_5$. 
Finally, along  $a+r=0$, $P(T)=T(T^2+3 a^2)$. This means $P(T)$ factorizes into a linear term and an irreducible quadratic term, giving a   fiber of type I$_0^{*\text{ss}}$ ($\widetilde{\text{B}}^t_3$).  
We have three codimension two loci over which the fiber $C_2$ splits into  a double curve and a curve. These are geometric degenerations  to an incomplete $\widetilde{\text{E}}_6$  along $V(s,a) $, $V(s,r)$, and $V(s,a+r)$.
 If the base  is of dimension three or higher, the loci $V(s,a)$, $V(s,r)$,  $V(s,a-r)$, and $V(s,a+r)$  intersect on a  codimension three loci at $V(s,a,r)$. 
 Over $V(s,a+r)$, the generic fiber is of type $1-2-3$.
 If  the base $B$ is a surface, $\mathscr{E}_2\to\mathscr{E}_0$ is a crepant resolution. 
 If the base has dimension three or higher, we can rewrite the equation as 
 $$
 \mathscr{E}_2: \quad  e_2(y^2-s^4e_1^2 t)=e_1\Big({x^3 + s^{2}  (-3  a  r+se_1 e_2 q)  x +s^{3} ar( a +r)}\Big),
 $$
where the singularity is in the patch $e_1 s \neq 0$. 
 Analytically, this is a binomial hypersurface of type $V( u_1 u_2 u_3-w_1 w_2 w_3)$.

 There are terminal singularities  along the codimension four loci $(x, y^2-e_1^2 s^4 t,e_2,a,r)$. 
 By the Grothendick-Samuel's theorem,  $\mathscr{E}_2$ does not have  a crepant resolution if the base  is of dimension three or higher. 
 This is because $\mathscr{E}_2$ is  locally a complete intersection nonsingular up to codimension three with terminal singularities in codimension four.

\section{Spin($7$)-models}\label{Sec:Spin7}

We recall that a  fiber is of type I$_{0}^{*\text{ss}}$  if its dual graph is the Dynkin diagram of type $\widetilde{\text{B}}^t_3$ and its geometric fiber is the Kodaira fiber of type I$_0^*$ (with dual graph $\widetilde{\text{D}}_4$).
Tate's algorithm is performed with respect to the valuation ring associated to  a smooth divisor $S=V(s)$ with generic point $\eta$ and residue field $\kappa$. The fiber over $\eta$ is of Kodaira type I$_0^*$\, and the  auxiliary polynomial of Step 6 is $P(T)=T^3+a_{2,1} T^2+ a_{4,2} T+ a_{6,3}$. The arithmetic fiber is of type I$_0^{*\text{ss}}$ when the splitting field of $P(T)$ defines a quadratic field extension $\kappa'$ of the residue field $\kappa$. This means that $P(T)$ has a unique $\kappa$-rational solution and that the Galois group Gal($\kappa'/\kappa$) is the cyclic group $\mathbb{Z}/2\mathbb{Z}$.  

We give convenient canonical forms for Spin($7$)-models in Theorem \ref{Thm:Spin7Can}. 
All the canonical forms we deal with  have the $\kappa$-rational point of $P(T)$ at the origin $T=0$. We distinguish between two cases by the value of the $j$-invariant at the generic point $\eta$ in $S$.

\begin{thm}[Canonical forms for Spin($7$)-models]\label{Thm:Spin7Can}
 \quad 
\begin{itemize}
\item  A Spin($7$)-model such that $j(\eta)\neq 0, 1728$ always has a canonical form that can be written as 
\begin{equation}\nonumber
\text{$Spin($7$):\quad \mathscr{E}_0=$} V(zy^2-x^3- a_{2,1} s x^2 z -s^2 a_{4,2}x z^2- s^{4+\beta} a_{6,4+\beta} z^3),  \quad j\neq 0, 1728, \quad \beta\in \mathbb{Z}_{\geq 0},
\end{equation}
with
\begin{equation}\nonumber
P(T)=T(T^2 +a_{2,1} T +a_{4,2} ), \quad \Delta(P)=a_{4,2}^2 (a_{2,1}^2-4 a_{4,2}), 
\end{equation}
 where $a_{2,1}$ is not zero modulo $s$ and  $a_{2,1}^2-4 a_{4,2}$ is  not  a perfect square modulo $s$.
\item A Spin($7$)-model with  $j(\eta)=1728$ can always be written in such a way that 
$v_S(a_1)\geq 1$, $v_{S}(a_2)\geq 2$, $v_{S}(a_3)\geq 2$, $v_S(a_4)=2$, and $v_{S}(a_6)\geq 4$. Thus, it can be put in the canonical form 
 \begin{equation}\nonumber
\text{Spin($7$)}:\quad \mathscr{E}_0=V(zy^2-x^3- s^2 a_{4,2} x z^2- s^{4+\beta} a_{6,4+\beta} z^3), \quad j=1728, \quad \beta\in \mathbb{Z}_{\geq 0},
\end{equation}
with 
\begin{equation}\nonumber
P(T)=T(T^2+a_{4,2} ), \quad \Delta(P)=-4 a_{4,2}^3,
\end{equation}
 where $a_{4,2}$ is not a perfect square modulo $s$.
\end{itemize}
\end{thm}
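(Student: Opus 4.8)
The plan is to begin from the general Weierstrass equation at Step 6 and reduce it to the two stated normal forms by the standard coordinate changes that preserve a Weierstrass model, tracking valuations at every step. First I would complete the square (legitimate since the residue characteristic is $0$) by $y\mapsto y-\tfrac12(a_1x+a_3)$ to kill $a_1$ and $a_3$, reaching $y^2=x^3+a_2x^2+a_4x+a_6$; a quick check shows $v_S(a_2)\ge 1$, $v_S(a_4)\ge 2$, $v_S(a_6)\ge 3$ of \eqref{Neron.val.I0*} are preserved. In these coordinates the auxiliary cubic \eqref{Eq:Cubic} is $P(T)=T^3+a_{2,1}T^2+a_{4,2}T+a_{6,3}$, and the hypothesis that the generic fiber is of type $\mathrm{I}_0^{*\mathrm{ss}}$ means, by Lemma \ref{lem:Gal.Cubic}, that $P$ has exactly one $\kappa$-rational root $\rho$ and that its residual quadratic factor is $\kappa$-irreducible, i.e. has non-square discriminant.

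The key normalization is to move this rational root to the origin. Lifting $\rho\in\kappa$ to $\mathscr{O}_{B,\eta}$ (or to a section near $S$) and applying the Weierstrass translation $x\mapsto x+s\rho$, one checks directly that $v_S(a_2)$ and $v_S(a_4)$ are unchanged, while the new constant term satisfies $a_{6,3}\mapsto P(\rho)\equiv 0\pmod s$; hence $v_S(a_6)\ge 4$ and $P(T)=T\bigl(T^2+a_{2,1}T+a_{4,2}\bigr)$ with $\Delta(P)=a_{4,2}^2(a_{2,1}^2-4a_{4,2})$. Writing $a_6=s^{4+\beta}a_{6,4+\beta}$ with $\beta\ge 0$ produces exactly the first canonical form, and the semi-split hypothesis is precisely the requirement that $a_{2,1}^2-4a_{4,2}$ be a non-square modulo $s$ (which in particular forces $a_{4,2}\not\equiv 0$, so $P$ is separable, consistent with $v_S(\Delta(P))=0$).

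The dichotomy between the two forms is detected by the $j$-invariant, and this is the conceptual crux. Since the leading terms of $c_4,c_6,\Delta$ of the full model are $s^2,s^3,s^6$ times the invariants $c_4(P),c_6(P),\Delta(P)$ of the cubic $P$, one gets $j(\eta)=c_4(P)^3/\Delta(P)$, the $j$-invariant of $P$; in particular $j(\eta)=1728$ if and only if $c_6(P)\equiv 0\pmod s$. A short computation gives $c_6(P)=-32\,a_{2,1}\,(2a_{2,1}^2-9a_{4,2})$. The point is now that irreducibility of the quadratic factor kills the second factor: if $2a_{2,1}^2-9a_{4,2}=0$ then $a_{2,1}^2-4a_{4,2}=a_{2,1}^2/9$ would be a perfect square, contradicting the semi-split hypothesis. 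Therefore $j(\eta)=1728$ if and only if $a_{2,1}\equiv 0\pmod s$. This simultaneously shows that $a_{2,1}$ is a unit in the first form (where $j\neq 1728$) and sets up the second form.

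For the case $j(\eta)=1728$ we thus have $v_S(a_2)\ge 2$; since $a_1=a_3=0$ in the normalized coordinates, the valuations $v_S(a_1)\ge 1$ and $v_S(a_3)\ge 2$ hold trivially, giving the claimed intermediate valuations. Completing the cube $x\mapsto x-a_2/3$ then removes the $x^2$ term, and because $v_S(a_2)\ge 2$ this shift perturbs $a_4$ and $a_6$ only in valuation $\ge 4$, so $v_S(a_6)\ge 4$ persists and $a_{4,2}$ is unchanged modulo $s$; hence $P(T)=T(T^2+a_{4,2})$ with $\Delta(P)=-4a_{4,2}^3$, and $\kappa$-irreducibility of $T^2+a_{4,2}$ (the non-square condition in the statement) forces $a_{4,2}$ to be a unit, i.e. $v_S(a_4)=2$. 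This is the second canonical form. I expect the only genuinely delicate points to be bookkeeping: verifying that each translation keeps every Tate valuation in its prescribed range and that $\rho$ can be realized by an honest function near $S$, together with the single conceptual input above, namely using irreducibility of the quadratic factor to isolate $a_{2,1}$ as the sole obstruction to $j=1728$.
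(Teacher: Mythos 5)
Your proof is correct and follows essentially the same route as the paper's: complete the square, translate the unique $\kappa$-rational root of $P(T)$ to $T=0$ (forcing $v_S(a_6)\geq 4$ and the factorization $P(T)=T(T^2+a_{2,1}T+a_{4,2})$), note that semi-splitness forces $v_S(a_4)=2$, split into the cases $v_S(a_2)=1$ and $v_S(a_2)\geq 2$, and complete the cube in the latter case. The one place you go beyond the paper is the justification of the dichotomy itself: the paper merely asserts that $j(\eta)=1728$ exactly when $v_S(a_2)\geq 2$, whereas you prove it, computing $c_6\equiv -32\,s^3\,a_{2,1}\bigl(2a_{2,1}^2-9a_{4,2}\bigr) \pmod{s^4}$ and observing that $2a_{2,1}^2\equiv 9a_{4,2}$ would make $a_{2,1}^2-4a_{4,2}\equiv (a_{2,1}/3)^2$ a perfect square, contradicting the semi-split hypothesis, so that $a_{2,1}\equiv 0 \pmod{s}$ is the sole way to achieve $c_6\equiv 0$. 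This closes a step the paper leaves implicit, and the rest of your bookkeeping (preservation of the N\'eron valuations under the translation, and the fact that completing the cube with $v_S(a_2)\geq 2$ perturbs $a_4$ and $a_6$ only in valuation $\geq 4$) matches what the paper's proof needs.
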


\begin{proof}

Without loss of generality, we can solve the arithmetic condition requiring $P(T)$ to have a $\kappa$-rational point,  by requiring  $v_S(a_6)\geq 4$. 
This is essentially the same as performing a translation that puts the unique $\kappa(\eta)$-rational root of $P(T)$ at $T=0$. We then have to restrict the valuation of  $a_4$ to be exactly two; otherwise, $v(a_4)\geq 3$, the discriminant of $P(T)$ will be zero in $\kappa(\eta)$, and $P(T)$ will have a double root. 

The discriminant $\Delta(P)$ depends only on $a_{4,2}$ and $a_{2,1}$. Since the valuation of $a_4$ is fixed, it is interesting to explore how the geometry depends on the valuation of 
 $a_2$. The valuation of $a_2$ characterizes two distinct types of Spin($7$)-models  as  the  $j$-invariant and the residual discriminant $\Delta(P)$  have different behavior when $v_S(a_2)=1$  or $v_S(a_2)\geq 2$. 
 The $j$-invariant takes the generic value $j=1728$ over the generic point of $S$ when $v_S(a_2)\geq 2$ and varies over $S$.  
 Moreover,  the residual discriminant $\Delta(P)$ is composed of two distinct components when  $v_S(a_2)=1$,  and the two components coincide when  $v_S(a_2)\geq 2$.  
To avoid a trivial Galois group, we assume that the discriminant of the quadratic part of $P(T)$ is not a perfect square. 
When $v_S(a_2)\geq 2$, we can also complete the cube in $x$, and take the canonical form to be in a short Weierstrass form without changing the conditions $v_S(a_4)=2$ and $v_S(a_6)\geq 4$. 

\end{proof}

\subsection{First crepant resolution of the  Spin($7$)-models}\label{Sec:Spin7First}

We first consider a resolution of the Spin($7$)-model obtained by blowing up reduced monomial ideals.  
In the following section,  we consider a flop of that resolution. 
We assume the  Theorem \ref{Thm:Spin7Can} hold. 
\begin{thm}\label{Thm:Spin7Res1}
Let $\mathscr{E}_0\longrightarrow B$ be a  Spin($7$)-model given  in Theorem \ref{Thm:Spin7Can}.
Let $Y^+$  be the proper transform of  $\mathscr{E}_0$ after the following  sequence of blowups starting with the ambient space 
  $X_0=\mathbb{P}_B[\mathscr{L}^{\otimes 2}\oplus \mathscr{L}^{\otimes 3}\oplus \mathscr{O}_B]$:
\begin{equation}\nonumber
  \begin{tikzcd}[column sep=huge] 
  X_0  \arrow[leftarrow]{r} {(x,y,s|e_1)} & \arrow[leftarrow]{r}{(y,e_1|e_2)} X_1 & \arrow[leftarrow]{r}{(x,e_2|e_3)} X_2  & X_3^+
  \end{tikzcd} .
\end{equation}
 If $V(a_{4,2})$ and  $V(a_{6,4})$ are smooth hypersurfaces  in $B$  that meet $S=V(s)$ transversally, then   $Y^+\to \mathscr{E}_0$ is either a crepant resolution or a crepant partial resolution with  terminal $\mathbb{Q}$-factorial singularities, as indicated in  the  \ table below. 
\begin{center}
\begin{tabular}{|c|c|c|c|}
\hline 
&  $v_S(a_6)=4$ &$v_S(a_6)=5$ & $v_S(a_6)\geq 6$ \\
\hline 
$dim B=2$& Crepant resolution& Crepant resolution &   Term. $\mathbb{Q}$-factor. Sing.   \\
\hline 
$dim B\geq 3$& Crepant resolution& Term. $\mathbb{Q}$-factor. Sing.  & Term. $\mathbb{Q}$-factor. Sing.  \\
\hline 
\end{tabular}
\end{center}
\end{thm}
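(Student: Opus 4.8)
The plan is to run the three blowups through explicit affine charts, exactly as in the proof of Theorem \ref{Thm:G2Res}, and to read off smoothness (or the residual singular locus) from the Jacobian criterion in each chart. First I would record that the three centers $(x,y,s)$, then $(y,e_1)$, then $(x,e_2)$ cut out smooth complete intersections in the successive ambient spaces, so each stage is the blowup of a smooth variety along a smooth center. Crepancy is then a multiplicity count: for the blowup of a smooth $X$ along a smooth center of codimension $c$ one has $K_{\widetilde X}=\pi^\ast K_X+(c-1)E$, while the proper transform of the Weierstrass hypersurface satisfies $\pi^\ast\mathscr{E}=\widetilde{\mathscr{E}}+mE$ with $m=\operatorname{mult}$ along the center; I would verify $m=c-1$ at each stage (the first center has $c=3$, $m=2$ from the $y^2$ term; the second and third have $c=2$, $m=1$), so that the composite $Y^+\to\mathscr{E}_0$ is crepant whether or not it is a full resolution.

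Next I would carry out the chart computation. Setting $z=1$ and pushing the canonical form of Theorem \ref{Thm:Spin7Can} through the three blowups, the essential chart is the one where $s=e_3$ and the proper transform reads
\[
y_2^2=e_3^2x_2^3+a_{2,1}e_3x_2^2+a_{4,2}x_2+a_{6,4+\beta}\,e_3^{\beta},
\]
together with the companion charts handling $C_0$, the central curve, and the non-split node cut out by $T^2+a_{2,1}T+a_{4,2}$. The transversality hypotheses on $V(a_{4,2})$ and $V(a_{6,4+\beta})$ let me use $a_{4,2}$ (and, where needed, $a_{6,4+\beta}$) as part of a local coordinate system, so the Jacobian criterion becomes a finite check. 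Away from $V(s,a_{4,2})$ the fibration is smooth; the only candidate singular points lie over $V(s,a_{4,2})$ at $x_2=y_2=e_3=0$. Differentiating the displayed equation shows this point is singular precisely when the term $a_{6,4+\beta}e_3^\beta$ fails to supply a surviving coordinate derivative, and this is where the columns of the table separate: for $\beta=0$ the blowup $(x,e_2|e_3)$ small-resolves the conifolds already present in $X_2$ and $Y^+$ is smooth; for $\beta=1$ the residual singular locus is forced onto $V(s,a_{4,2},a_{6,5})$, of codimension three in $B$; for $\beta\ge 2$ it persists over all of $V(s,a_{4,2})$, of codimension two in $B$.

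It then remains to analyze the residual singularity near a generic point of its support. I would reduce the local equation to a hypersurface normal form of the shape $y_2^2-a_{4,2}x_2-(\text{unit})\,e_3^{k}+\cdots=0$, and determine both its discrepancy and its divisor class group. Terminality I would obtain by exhibiting the singularity as compound Du Val / an ordinary double point—a general hyperplane section through the singular stratum being a Du Val surface singularity—hence terminal and Gorenstein. For $\mathbb{Q}$-factoriality I would invoke the Grothendieck--Samuel parafactoriality theorem, as in Section \ref{Sec:G2Z3Terminal} for the $G_2^{\mathbb{Z}/3\mathbb{Z}}$-model: a local complete intersection that is regular in codimension $\le 3$ and of local dimension $\ge 4$ is factorial, and equivalently the class group of the relevant quadric cone is controlled by the rank of its leading quadratic form together with the codimension of the singular stratum. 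Where the table records a crepant resolution I would confirm that the singular stratum is empty for dimension reasons (for instance $V(s,a_{4,2},a_{6,5})=\varnothing$ on a surface), and where it records an obstruction I would note that a $\mathbb{Q}$-factorial terminal singularity admits no crepant, in particular no small, resolution, so that $Y^+$ is optimal without blowing up the base.

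The step I expect to be the main obstacle is exactly the $\mathbb{Q}$-factoriality determination. One must track the rank of the leading quadratic form of the local model together with the codimension of the singular stratum as $\beta$ and $\dim B$ vary, since this is what decides whether the residual singularity is a (non-$\mathbb{Q}$-factorial) conifold that admits a small resolution, or a genuinely $\mathbb{Q}$-factorial obstruction; pinning down the borderline cases and matching them to the dimension-dependence in the table is where the real work lies, the smoothness checks and the crepancy count being routine by comparison.
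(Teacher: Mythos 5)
Your overall strategy --- pushing the canonical form through the three blowups in affine charts, using transversality of $V(a_{4,2})$, $V(a_{6,4+\beta})$ with $V(s)$ to treat $b=a_{4,2}$ and $c=a_{6,4+\beta}$ as local coordinates, and locating the residual singular locus case by case --- is exactly the paper's approach, and your dimension counts are right: the singularities sit at $(x,y,b,c,s)$ (codimension $4$ in $Y^+$) when $v_S(a_6)=5$, and at $(x,y,b,s)$ (codimension $3$ in $Y^+$) when $v_S(a_6)\geq 6$, which is what produces the dimension-dependence in the table. Your crepancy count ($m=c-1$ at each center) and the cDV argument for terminality are also fine, even though the paper takes both largely for granted.

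The genuine gap is in the $\mathbb{Q}$-factoriality step for the column $v_S(a_6)\geq 6$, i.e.\ $\beta\geq 2$ --- and you correctly guessed this is where the real work lies, but the tools you propose do not suffice there. Grothendieck--Samuel parafactoriality requires the local complete intersection to be \emph{regular in codimension} $\leq 3$; that hypothesis holds for $\beta=1$ (singular locus of codimension $4$, which is why the paper can invoke it in Section \ref{Sec:G2Z3Terminal}), but it fails outright for $\beta\geq 2$, where the singular locus has codimension $3$ in $Y^+$. Your fallback --- reading the class group off the rank of the leading quadratic form --- is worse than inconclusive: for $\beta=2$ the transverse slice of the singularity is $y^2-bx-cs^2=0$, a rank-$4$ quadric cone (a conifold times a smooth factor), whose \emph{analytic} local ring is not factorial; this heuristic would therefore point to the opposite conclusion from the theorem. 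The point is that the obstruction statement requires Zariski-local (algebraic) factoriality, which completion can destroy. The paper's key step, absent from your proposal, is Nagata's criterion applied in the last chart: with
\begin{equation}\nonumber
F_{(3,2,2)}=y^2-\bigl(s^2x^3+a\,sx^2+b\,x+c\,s^{\beta}\bigr),
\end{equation}
the restriction $F_{(3,2,2)}|_{s=0}=y^2-bx$ is irreducible, so $s$ is a prime element of the chart's coordinate ring; inverting $s$, the equation $F_{(3,2,2)}=0$ can be solved for $c$, so the localization is a polynomial ring, hence a UFD; Nagata's criterion then gives that the chart itself is a UFD. This yields factoriality (a fortiori $\mathbb{Q}$-factoriality) of $Y^+$ uniformly in $\beta\geq 1$, independently of whether the singular locus has codimension $3$ or $4$, and is what actually closes the argument that no crepant resolution exists in the last two cases of the table.
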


\begin{proof}[Proof of the theorem]

Assume that $V(s)$ and $V(a_{4,2})$  are smooth varieties intersecting transversally.  
For  notational simplicity, we take $b=a_{4,2}$ and $c=a_{6,4+\beta}$. When $\alpha=0$, we take  $a=a_{2,1}$; however,  if $\alpha>0$, we complete the cube in $x$ such that $a_2=0$. 
To examine  singularities, it is enough to work in the patch $z=1$ since  the section  $(z=x=0)$ is always smooth. To check smoothness, we work in local patches. The defining equation is 
\begin{equation}\nonumber
F=y^2 -(x^3 + a s  x^2 + b s^{2}  x + c s^{4+\beta}), \quad \beta\geq0,
\end{equation}
where $a=0$ if $v(a_2)\geq 2$. The variable $b$ cannot be identically zero for Spin($7$), as otherwise, the polynomial $P(T)$ will have a double root modulo $s$. 
The first blowup is centered at $(x,y,s)$ and requires the following three charts:
\begin{enumerate}
\item $(x,y,s)\to (xy,y, sy)$
\begin{equation}\nonumber
F_{(1)}=1 -y(x^3 + a s  x^2 + b s^{2}  x + cy^{1+\beta} s^{4+\beta}).
\end{equation}
$V(F_{(1)})$  is smooth as there is no solution for $F_{(1)}=\partial_b F_{(1)}=\partial_y F_{(1)}=\partial_x F_{(1)}=0$. 

\item $(x,y,s)\to (x,yx, sx)$
\begin{equation}\nonumber
F_{(2)}=y^2 -x (1 + a s   + b s^{2}   + c s^{4+\beta} x^{1+\beta}).
\end{equation}
$V(F_{(2)})$  has a double point  singularity at $(y,x,1 + a s   + b s^{2})$. The second blowup is centered at $(x,y)$ and is implemented in two charts. 
\begin{enumerate}
\item $
(x,y)\to (x, yx)
$
\begin{equation}\nonumber
F_{(2,1)}=xy^2 - (1 + a s   + b s^{2}   + c s^{4+\beta} x^{1+\beta}).
\end{equation}
$V(F_{(2,1)})$ is smooth since $F_{(2,1)}=\partial_y F_{(2,1)}=\partial_b F_{(2,1)}=0$ does not have a solution. 
\item 
$
(x,y)\to (xy, y)
$
\begin{equation}\nonumber
F_{(2,2)}=y -x (1 + a s   + b s^{2}   + c s^{4+\beta} x^{1+\beta} y^{1+\beta}).
\end{equation}
This is smooth as $\partial_b F_{(2,2)}$ and $\partial_y F_{(2,2)}$ cannot vanish simultaneously. 
\end{enumerate}

\item
$(x,y,s)\to (sx,sy,s)$
\begin{equation}\nonumber
F_{(3)}=y^2 -s(x^3 + a   x^2 + b   x + c s^{1+\beta}), \quad \beta\geq 0.
\end{equation}
We have a singularity at $(y, s, x(x^2+ a x + b))$. 
The second blowup is centered at $(y,s)$.
We have to perform the next blowup in two patches. 
\begin{enumerate}
\item  $(y,s)\to (y,y s)$
\begin{equation}\nonumber
F_{(3,1)}=y -s(x^3 + a   x^2 + b   x + c y^{1+\beta}s^{1+\beta}), \quad \beta\geq 0.
\end{equation}
When $\beta>0$, $V(F_{(3,1)})$ is smooth since $F_{(3,1)}=\partial_y F_{(3,1)}=\partial_b F_{(3,1)}=0$   cannot vanish simultaneously. 
 When $\beta=0$, there is a singularity at 
$x=y=1-cs^2=b=0$. This singularity has a crepant resolution  by blowing up $(x,y)$. 
\begin{enumerate}
\item $(x,y)\to (x,xy)$
\begin{equation}\nonumber
F_{(3,1,1)}=y -s(x^2 + a   x + b    + c x^{\beta} y^{1+\beta}s^{1+\beta}), \quad \beta\geq 0.
\end{equation}\nonumber
$V(F_{(3,1,1)})$ is smooth since $\partial_b F_{(3,1,1)}$ and $\partial_y F_{(3,1,1)}$ cannot vanish at the same time. 
\item $(x,y)\to (xy,y)$
\begin{equation}\nonumber
F_{(3,1,2)}=1-s(y^2x^3 + a y  x^2 + b   x + c y^{\beta}s^{1+\beta}), \quad \beta\geq 0.
\end{equation}
$V(F_{(3,1,2)})$ is smooth since $F_{(3,1,2)}$, $\partial_b F_{(3,1,2)}$, and  $\partial_s F_{(3,1,2)}$ cannot vanish at the same time. 
\end{enumerate}
\item $(y,s)\to (sy,s)$
\begin{equation}\nonumber
F_{(3,2)}= sy^2  -( x^3 + a   x^2 + b  x +c s^{1+\beta}). 
\end{equation}
We still have a singularity at $(y,s,x, b)$. 
The next blowup is centered at $(s,x)$. 
We again have two patches to consider. 
\begin{enumerate}
\item  $(x,s)\to (x,sx)$. 
\begin{equation}\nonumber
F_{(3,2,1)}=sy^2  -( x^2 + a   x + b   +c x^{\beta}s^{1+\beta}). 
\end{equation}
This is smooth as it is linear in $b$,  which  can be taken as a local parameter in the base. 
\item
  $(x,s)\to (xs,s)$.
 
 The proper transform is 
 \begin{equation}\nonumber
Y^+:\quad F_{(3,2,2)}=y^2  -(s^2x^3 + a   sx^2 + b  x +c s^{\beta}). 
\end{equation}
When $\beta=0$, $Y^+$ is smooth as it is linear in $c$, which can be used as a local parameter of the base, assuming  $V(c)$ is smooth. 
When $\beta>0$,  we have $F|_{s=0}=y^2-bx$, which is irreducible.  If we localize at $V(s)$, $F=0$ is just a redefinition of $c$ and we trivially have a UFD. It follows by Nagata's criterion of factoriality that 
$Y^+$ is factorial. 

When $\beta=1$, $Y^+$ has  terminal singularities at $(x,y,b,c,s)$, which is in codimension $4$. 
When $\beta>1$, $Y^+$ has terminal singularities at $(x,y,b,s)$, which is in codimension $3$. 
\end{enumerate}
\end{enumerate}
\end{enumerate}
 We thus have the following conclusions:\footnote{We recall that  factoriality is an obstruction for crepant resolutions in presence of terminal singularities.}
\begin{enumerate}
\item If  $\beta=0$, $Y^+$  is smooth if $V(c)$ is smooth. 
\item If $\beta=1$ and $\dim\ B =2$, $Y^+$ is smooth. 
\item If $\beta=1$ and $\dim\ B \geq 3$, $Y^+$ is factorial with terminal singularities at $(x,y,b,c,s)$. Hence, $Y^+$ does not have a crepant resolution. 
\item If $\beta>1$ and $\dim\  B\geq 2$, $Y^+$ is factorial with terminal  singularities at  $(y,x,b,s)$. Hence, $Y^+$ does not have a crepant resolution. 
\end{enumerate}

\end{proof}

The relative projective coordinates of $X_0$ over $B$, $X_{i+1}$ over $X_i$ ($i=0,1$), and $X_3^+$ over $X_2$ are respectively
\begin{equation}\nonumber
[e_1 e_2 e_3^2 x:e_1 e_2^2 e_3^2 y: z=1], \quad [e_3x:e_2e_3y:s],    \quad    [y:e_1],   \quad        [x:e_2].
\end{equation}

The proper transform of $\mathscr{E}_0$ is 
\begin{equation}\nonumber
Y^+:\quad e_2 y^2 =e_1 (e_3^2 x^3 + a_{2,1+\alpha} s^{1+\alpha} e_1^{\alpha} e_2^{\alpha} e_3^{1+\alpha} x^2 + a_{4,2} s^{2}  x + a_{6,4+\beta} e_3^{\beta}e_2^{1+\beta} e_1^{1+\beta} s^{4+\beta}). 
\end{equation}

 We now explore the fiber structure  of the smooth elliptic fibration $\varphi: Y^+\to B$ obtained by the crepant resolution. 
Denoting $C_a$ as the irreducible components of the fiber over the generic point $\eta$ of $S$, we have
\begin{equation}\nonumber
\varphi^*(\eta)=C_0+C_1+2C_2+C_3.
\end{equation}
This curve is a scheme with  respect to the residue field $\kappa(\eta)$. 
The curve $C_0$ is the one touching the section of the elliptic fibration.
These curves are generic fibers for the fibral divisors $D_a$, which are defined as the irreducible components of 
\begin{equation}\nonumber
\varphi^*(S)=D_0+D_1+2D_2+D_3.
\end{equation}
The fibral divisors are
\begin{subequations}
$$
\begin{cases}
D_0:& \quad  s= e_2 y^2 -e_1 e_3^2 x^3=0\\
D_1:& \quad  e_3= e_2(y^2 -a_{6,4} e_1^2 s^4) - a_{4,2} s^2 e_1 x =0\\
D_2:& \quad  e_1=e_2=0\\
D_3: &\quad \frac{e_2}{e_1}=e_3^2 x^2 + a_{2,1} se_3 x + a_{4,2} s^2=0 .
\end{cases}
$$
\end{subequations}

We observe that D$_0$ and D$_2$   are   respectively isomorphic to the projective bundles $\mathbb{P}_S [\mathscr{O}_S\oplus\mathscr{L}]$ and  $\mathbb{P}_S [\mathscr{L}^2\oplus\mathscr{S}]$, while 
the fibers of the divisors D$_1$ and D$_3$ can degenerate  over higher codimension loci. 
The generic fiber (over $S$) of the divisor D$_3$  is not  geometrically irreducible; after a field extension, it splits into two rational curves. 
The divisor $D_3$ is  a double cover of $\mathbb{P}_S[\mathscr{L}^3 \oplus \mathscr{S}^2]$ branched along $V(s,a_{2,1}^2-4 a_{4,2})$.  
We note that $a_{2,1}^2-4 a_{4,2}$  is one of the components of the discriminant $\Delta(P)$ of the associated polynomial $P(T)$.

The only fiber components that can degenerate are $C_1$ and $C_3$. 
The degeneration of the generic fiber are on the components of $\Delta(P)$, and are given by
\begin{subequations}
\begin{equation}\nonumber
V(a_4)
\begin{cases}
C_3\to C_{13}+C_3'\\
C_1 \to C_{13}+C_1'
\end{cases},
\quad
V(a_2^2-4a_4)
\begin{cases}
C_3\to  2C_3''\\
C_1 \to C_1
\end{cases},
\end{equation}

\begin{equation}\nonumber
V(a_2,a_4) 
\begin{cases}
C_3\to 2C_{13}\\
C_1 \to C_{13}+C_1'
\end{cases},\quad 
V(a_4, a_6)
\begin{cases}
& C_3 \to C_{13}+C_3'\\
& C_1\to  C_{13}+2C_{1}''
\end{cases}, \quad
V(a_2,a_4, a_6)
\begin{cases}
& C_3 \to 2C_{13}\\
& C_1\to  C_{13}+2C_{1}''
\end{cases},
\end{equation}

where 
\begin{equation}\nonumber
\begin{cases}
& C_{13}:e_2=e_3=0,\\
& C_3':e_2=e_3 x+a_2 s =0,\\
& C_3'': e_2=e_3 x + \frac{1}{2}a_2 s=0,\\
& C_1':e_3=y^2 -a_6 e_1^2 s^4=0\quad C_1'': e_3= y =0 .
\end{cases}
\end{equation}
\end{subequations}
Since the divisor $D_0$, $D_i$ ($i=1,2,3$) satisfy the linear relation 
$D_0 + D_1 +2 D_2  + D_3\cong 0$,
it is enough to consider the weights with respect to $D_i$   ($i=1,2,3$). 
Using the fiber structure, it is easy to evaluate the weights for each curve by solving the  linear relations below at the level of intersection numbers. 
$$
C_{13}\cong C_3'' \cong C_3'\cong  \frac{1}{2}C_3, \quad  
C_1'\cong C_1-\frac{1}{2}C_{3}, \quad  C_1''\cong \frac{1}{2} C_1-\frac{1}{4}C_{3}. 
$$
\begin{equation}\nonumber
\begin{matrix}
\\
D_0\\
D_1\\
D_2\\
D_3
\end{matrix}
\begin{pmatrix}
C_0 & C_{1} & C_{2} & C_{3} \\
\hline
2  & 0 & -1 & 0 \\
0 & 2 & -1 & 0 \\
-1 & -1& 2 &  -2\\
0 & 0 & -2 & 4
\end{pmatrix}
\quad
\begin{pmatrix}
C_{13}\\
\hline
0 \\
0\\
-1\\
2\\
\end{pmatrix}
\quad
\begin{pmatrix}
C_{3}'\\
\hline
0 \\
0\\
-1\\
2\\
\end{pmatrix}
\quad 
\begin{pmatrix}
C_{3}''\\
\hline
0 \\
0\\
-1\\
2\\
\end{pmatrix}
\quad 
\begin{pmatrix}
C_{1}'\\
\hline
0 \\
2\\
0\\
-2\\
\end{pmatrix}
\quad
\begin{pmatrix}
C_{1}''\\
\hline
0 \\
1\\
0\\
-1\\
\end{pmatrix}.
\end{equation}
The new weight is $\boxed{1\  0  \   -1}$.

\subsection{Second crepant resolution of the Spin($7$)-models}
 In this section, we construct a flop of the crepant resolution obtained in the previous subsection. 
The flop appears after the second blowup. 
The proper transform of $\mathscr{E}_0$ after the second blowup is $\mathscr{E}_2$, which can be suggestively rewritten as 
\begin{equation}\nonumber
Y^- 
\begin{cases}
e_2 (y^2-a_6  e_1 s^{4}) -e_1 x Q &=0\\
Q-( x^2 + a_2 s  x + a_4 s^{2}) &=0.
\end{cases} 
\end{equation}
The first equation emphasizes that $\mathscr{E}_2$ has double point singularities, while the  second equation defines $Q$, which is used in  our next blowup so that it formally resembles  a blowup of a monomial ideal.  
The singular scheme of $\mathscr{E}_2$ is supported along $(e_2,e_1 x, Q, y^2-a_6 e_1 s^4)$, which is in the patch $s e_1\neq 0$. 
Recall that for the first resolution, the center of the blowup is the ideal $(e_2,x)$. In the spirit of Atiyah's flop, this time we blowup the  non-Cartier Weil divisor $(e_2, Q)$, i.e. $D_3$: 
\begin{equation}\nonumber
  \begin{tikzcd}[column sep=huge] 
  & & & X_3^+ \arrow[dashed,leftrightarrow]{dd} \\
  X_0  \arrow[leftarrow]{r} {(x,y,s|e_1)} & \arrow[leftarrow]{r}{(y,e_1|e_2)} X_1 & \arrow[leftarrow]{ru}{(e_2, x|e_3)} \arrow[leftarrow]{rd} [left,midway]{(e_2, Q|e_3)} X_2  &\\
  & & & X_3^-
  \end{tikzcd}
  \end{equation}
The proper transform is 
\begin{equation}\nonumber
Y^- 
\begin{cases}
e_2 (y^2 -a_6 s^{4})-e_1 x Q &=0\\
Q e_3 -(x^2 + a_2 s x + a_4 s^2) &=0;
\end{cases}
\end{equation}
the relative projective coordinates of  $X_{i}\rightarrow X_{i-1}$ ($i=1,2,3$) are
\begin{equation}\nonumber
[e_1 e_2 e_3 x:e_1 e_2^2e_3^2  y: z=1] \quad [x:e_2e_3y:s]    \quad    [y:e_1]  \quad [Q:e_2]; 
\end{equation}
and the  irreducible components of the generic fibers are
\begin{subequations}
\begin{equation}\nonumber
\begin{aligned}
&C_0 : \quad s=e_2 y^2 -e_1 x Q=Q e_3 -x^2=0, \quad  C_1:\quad e_1=e_2=0, \\
& C_2:\quad\frac{e_2}{e_1}=x=Q e_3 - a_4 s^2=0, \quad C_3: \quad e_3=x^2+a_2 s x + a_4 s^2= e_2(y^2 -a_6 e_1 s^4 )-e_1 x Q=0.
\end{aligned}
\end{equation}
They degenerate as follows 
\begin{equation}\nonumber
\begin{aligned}
 V(a_4) &   
\begin{cases}
C_2\to C_{23}\\
C_3 \to C_{23} +C^{(1)}_{3}+C^{(2)}_3
\end{cases},
\quad 
V(a_2^2-4a_4)   
\begin{cases}
C_2\to C_{2}\\
C_3 \to 2C_{3}'
\end{cases},
\quad
V(a_2,a_4) 
\begin{cases}
C_2\to C_{23}\\
C_3 \to 2C_{23}+2C^{(1)}_{3}
\end{cases},
\\
 V(a_4, a_6)   &
\begin{cases}
C_2\to C_{23}\\
C_3 \to C_{23} +2C^{(1')}_{3}+C^{(2)}_3
\end{cases},
\quad
V(a_2,a_4,a_6)   
\begin{cases}
C_2\to C_{23}\\
C_3 \to 2C_{23}+4C^{(1')}_3
\end{cases},
\end{aligned}
\end{equation}
where
\begin{equation} \nonumber
\begin{aligned}
& C_{23}: \quad e_2=e_3=x=0, \quad 
C^{(1)}_{3}: \quad  e_3=x=y^2-a_6 e_1 s^4=0,  
 \quad 
 C^{(1')}_3: \quad  e_3=x=y=0,
\\
& C^{(2)}_3:\quad  e_3=x+a_2 s=e_2 (y^2 -a_6 e_1 s^4)+a_2 se_1 Q=0, \\
& C'_3: e_3=x+\frac{1}{2}a_2 s=e_2(y^2 -a_6 e_1 s^4 )-e_1 x Q=0.
\end{aligned}
\end{equation}

The weights of each of these curves with respect to the divisors $D_i$ $(i=0,1,2,3)$ are
\def\arraystretch{1}
\begin{equation}\nonumber
\begin{matrix}
\\
D_0\\
D_1\\
D_2\\
D_3
\end{matrix}
\begin{pmatrix}
C_0 & C_{2} & C_{1} & C_{3} \\
\hline
2  & 0 & -1 & 0 \\
0 & 2 & -1 & 0 \\
-1 & -1& 2 &  -2\\
0 & 0 & -2 & 4
\end{pmatrix}
\quad
\begin{pmatrix}
C_{23}\\
\hline
0 \\
2\\
-1\\
0\\
\end{pmatrix}
\quad
\begin{pmatrix}
C'_{3}\\
\hline
0 \\
0\\
-1\\
2\\
\end{pmatrix}
\quad 
\begin{pmatrix}
C^{(2)}_{3}\\
\hline
0 \\
0\\
-1\\
2\\
\end{pmatrix}
\quad
\begin{pmatrix}
C^{(1)}_{3}\\
\hline
0 \\
-2\\
0\\
2\\
\end{pmatrix}
\quad 
\begin{pmatrix}
C_{3}^{(1')}\\
\hline
0 \\
1\\
0\\
-1\\
\end{pmatrix}
\end{equation}
\end{subequations}

\subsection{Weights and representations}
Letting $(a,b,c)$ denote a weight expressed in the basis of simple roots and $\boxed{a\   b\    c}$ denote a weight expressed in the basis of fundamental weights, the weights of the representations $\mathbf{7}$ and $\mathbf{8}$ are 
given below. 
\begin{center}
\begin{tabular}{c|c}
Representation $\mathbf{7}$ of B$_3$ & Representation $\mathbf{8}$  of B$_3$ \\
\hline 
{\begin{tabular}{lr}
\boxed{\  \  1\  \  \   0\  \  0}   & (1,1,1)\\ \boxed{-1 \  \   \  1\  \ \,  0}   & (0,1,1) \\ \boxed{\  \   0 -1\  \  2} & (0,0,1)\\ \boxed{\   \ 0\   \  \   0\   \    0} & (0,0,0) \\  \boxed{\  \   0\  \  1- 2}&    (1,-1,0) \\ \boxed{\  \   1    -1\  \  0} & (0,-1,-1) \\ \boxed{-1\  \  \  0\   \   0} & (0,0,-1)
\end{tabular}}
&
{\begin{tabular}{c c}
\boxed{\ \ 0\ \ 0\ \ 1} & (1/2, 1, 3/2) \\
\boxed{\ \ 0\ \ 1-1} &(1/2, 1, 1/2) \\
\boxed{\ \ 1-1\ \ 1}&(1/2, 0, 1/2) \\
\boxed{-1\ \ 0\ \ 1}\boxed{\ \ 1\ \ 0-1} & (-1/2, 0, 1/2)(1/2, 0, -1/2) \\
\boxed{-1\ \ 1-1}& (-1/2, 0, -1/2)\\
\boxed{\ \ 0-1\ \ 1} &(-1/2, -1, -1/2) \\
\boxed{\ \ 0\ \ 0-1}& (-1/2, -1, -3/2)
\end{tabular}}
\end{tabular}
\end{center}

For the first crepant resolution, we obtained the following two weights from the fiber degenerations: 
\begin{equation}\nonumber
\boxed{0\  -1  \ 2}\quad \text{ and}\quad \boxed{1\  0   \ -1}.
\end{equation} 
The weight \boxed{0\  -1  \ 2} is a  weight of the  representation $\mathbf{7}$ (the vector representation of $B_3$). 
The  weight \boxed{1\  0   \ -1}  is weight of the  representation $\mathbf{8}$ (the spin representation of $B_3$). 
The vector representation of $B_3$ is quasi-miniscule while the spin representation is minuscule; 
as their name indicates, they  are of  dimensions seven and eight, respectively.

\begin{center}
\begin{tabular}{|c|c|c|c|c|}
\hline 
& \multicolumn{2}{|c|}{Spin($7$)  with $v_S(a_2)=1$} &\multicolumn{2}{|c|}{ Spin($7$)  with $v_S(a_2)\geq 2$} \\
\hline 
Locus &  $V(a_{4,2})$  & $V(a_{2,1}^2-4a_{4,2})$ &\multicolumn{2}{|c|}{$V(a_{4,2})$}   \\
\hline 
Weights &  $\boxed{0\  1\  -2}$  & $\pm\boxed{1\  0\  -1}$ &\multicolumn{2}{|c|}{$\boxed{0\ 1\  -2}$  and $\boxed{-1\  0\  1}$}   \\
\hline 
Representations &  $\mathbf{7}$  & $\mathbf{8}$  &\multicolumn{2}{|c|}{  $\mathbf{7}$  and $\mathbf{8}$} \\
\hline
\end{tabular}
\end{center}

\section{Spin($8$)-models}\label{Sec:Spin8}

The Weierstrass model of a  Spin($8$)-model is described by Step $6$ of Tate's algorithm with the additional arithmetic condition that $P(T)$ has three distinct $\kappa$-rational solutions, where $\kappa$ is the residue field of the generic point $\eta$ of $S=V(s)$ and $s$ is a section of $\mathscr{O}_B(S)=\mathscr{S}$.
\begin{thm}[Canonical form for Spin($8$)-models] \label{Thm:Spin8Normal}
The Weierstrass model for a \textnormal{Spin($8$)}-model can be written as
\begin{equation}\label{Eq:Spin8Normal}
\mathscr{E}_0:\quad 
zy^2=(x-s x_1 z)(x-s x_2 z)(x+s x_3z)- s^{2+\alpha} Qz, \quad Q= r  x^2 +q s x z-s^2 t z^2,\quad \alpha\in \mathbb{Z}_{\geq 0},
\end{equation}
where $(r,q,t)\neq (0,0,0)$ on the divisor $S$ and the  coefficients $s$, $x_i$, $r$, $q$, and $t$ are sections of the line bundles given below.
$$
\begin{tabular}{|c|c|c|c|c|}
\hline 
$s$& $x_i$ &$ r$ & $q$ & $t$ \\
\hline
$\mathscr{S}$ & $\mathscr{L}^{\otimes 2} \otimes \mathscr{S}^{-1}$ &$ \mathscr{L}^{\otimes 2} \otimes \mathscr{S}^{\otimes(-2-\alpha)} $& $\mathscr{L}^{\otimes 4} \otimes \mathscr{S}^{\otimes(-3-\alpha)}$ & $\mathscr{L}^{\otimes 6} \otimes \mathscr{S}^{\otimes(-2\alpha-4)}$\\
\hline 
\end{tabular}
$$
 $Q$ cannot be identically zero since otherwise, the Mordell-Weil group will contain at least a torsion subgroup  $\mathbb{Z}/2\mathbb{Z}\oplus \mathbb{Z}/2\mathbb{Z}$.
 Moreover,  $d=(x_1-x_2) (x_2-x_3) (x_3-x_1)$ cannot be identically zero on $S$. 
\end{thm}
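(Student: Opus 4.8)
The plan is to start from the reduced Step-6 data and normalize successively. By Néron's criterion (the inequalities \eqref{Neron.val.I0*} together with $v_S(\Delta(P))=0$) the model already has a fiber of Kodaira type I$^*_0$ over $S=V(s)$, and the Spin($8$) hypothesis is precisely that the auxiliary cubic $P(T)$ of \eqref{Eq:Cubic} factors over the residue field $\kappa$ into three \emph{distinct} linear factors. First I would complete the square in $y$ (legitimate over $\mathbb{C}$), which is a biregular change of coordinates preserving the fiber types, the $j$-invariant, and the Mordell--Weil group; this kills $a_1,a_3$ and brings the equation to $zy^2=x^3+A_2x^2z+A_4xz^2+A_6z^3$ with $A_2=b_2/4$, $A_4=b_4/2$, $A_6=b_6/4$, which inherit the valuations $v_S(A_2)\ge 1$, $v_S(A_4)\ge 2$, $v_S(A_6)\ge 3$. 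The substitution $x=sT$ shows that the reduced cubic equals $s^3P(T)$, so its behaviour modulo $s$ is governed entirely by $P$.

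Next I would use the separability of $P$ modulo $s$. Since the three roots are distinct in $\kappa$, they lift order by order in $s$ (Hensel), and I would take $x_1,x_2,x_3$ to be sections of $\mathscr{L}^{\otimes 2}\otimes\mathscr{S}^{-1}$ reducing to these roots (the sign chosen so that the third factor reads $x+sx_3z$). Matching the elementary symmetric functions against $A_{2,1},A_{4,2},A_{6,3}$, the difference
\[
\big(x^3+A_2x^2z+A_4xz^2+A_6z^3\big)-(x-sx_1z)(x-sx_2z)(x+sx_3z)
\]
is a quadratic in $x$ whose three coefficients have $v_S\ge 2,3,4$ respectively; factoring out the maximal common power $s^{2+\alpha}$ yields exactly the shape $-\,s^{2+\alpha}Qz$ with $Q=rx^2+qsxz-s^2tz^2$ and $(r,q,t)\not\equiv(0,0,0)$ on $S$. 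The line-bundle assignments in the table are then forced by demanding that each monomial be a section of $\mathscr{O}(3)\otimes\pi^*\mathscr{L}^{\otimes 6}$.

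It remains to establish the two non-degeneracy statements. For $d$, I would use that the residue $\Delta(P)|_{s=0}$ equals, up to a sign and a unit, the square of the product of the pairwise differences of the roots of $P$, i.e. $d^2$; since Néron's condition for type I$^*_0$ is exactly $v_S(\Delta(P))=0$, the class $d$ cannot vanish on $S$ --- equivalently, were $d\equiv 0$ two nodes of the $\widetilde{\mathrm{D}}_4$ fiber would collide and the generic fiber over $S$ would degenerate below I$^*_0$. For $Q$, I would argue by contradiction: if $Q\equiv 0$ the cubic factors completely, so the three half-period points $(x:y:z)=(sx_1:0:1),\,(sx_2:0:1),\,(-sx_3:0:1)$ are regular rational sections, each of order two, and together with the zero section they generate a subgroup $\mathbb{Z}/2\mathbb{Z}\oplus\mathbb{Z}/2\mathbb{Z}$ of the Mordell--Weil group, contradicting the standing assumption that the Mordell--Weil group is trivial; hence $Q\not\equiv 0$.

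The main obstacle is the second step: making precise that the roots, which a priori live only in the function field $\kappa$ of $S$, can be taken as genuine global sections $x_i$ of $\mathscr{L}^{\otimes 2}\otimes\mathscr{S}^{-1}$, and that the order of the residual term is unambiguously $s^{2+\alpha}$. Under the paper's running assumption that the coefficients are general subject only to their valuations, one lifts the Hensel factorization as far as the sections allow and defines $\alpha$ so that $(r,q,t)$ is not identically zero on $S$; the remaining verifications are the bookkeeping of valuations and line bundles sketched above.
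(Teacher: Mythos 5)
Your proposal is correct and follows essentially the same route as the paper's proof: both use the Step 6 valuations, the complete factorization of $P(T)$ over $\kappa$ with distinct roots (equivalently $d\not\equiv 0$, forced by $v_S(\Delta(P))=0$), a completion of the square to kill $a_1,a_3$, the matching of symmetric functions to isolate a residual quadratic from which $s^{2+\alpha}$ is factored out maximally, and the identical two-torsion argument showing $Q\equiv 0$ would force $\mathbb{Z}/2\mathbb{Z}\oplus\mathbb{Z}/2\mathbb{Z}$ inside the Mordell--Weil group. The only differences are cosmetic (you normalize $a_1,a_3$ before factoring rather than after), and your explicit flagging of the lift of the residue-field roots to sections is a point the paper's proof also passes over without comment.
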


\begin{proof} 
By Step $6$ of Tate's algorithm, the Weierstrass coefficients have  following valuations along $S$: 
 \begin{equation}\nonumber
 v_{S}(a_1)\geq 1, \quad v_{S}(a_2)\geq 1 , \quad v_{S}(a_3)\geq 2 , \quad v_{S}(a_4)\geq 2, \quad v_{S}(a_6)\geq 3.
 \end{equation}
By definition,  a Spin($8$)-model is such that the cubic polynomial $P(T)=T^3+ a_{2,1} T^2+a_{4,2} T + a_{6,3}$ factorizes in   $\kappa$. That is,
\begin{equation}\nonumber
P(T)=(T-x_1)(T-x_2)(T-x_3), \quad x_i\in \kappa.
\end{equation}
The discriminant of $P(T)$ is  a perfect square $\Delta(P)=d^2$ with 
\begin{equation}\nonumber
d=(x_1-x_2) (x_2-x_3) (x_3-x_1).
\end{equation}
The polynomial $P(T)$  has three distinct roots in $\kappa$ if and only if  $d$ is nonzero modulo $s$. 
Working backwards, from $P(T)$, we can then compute the Weierstrass coefficients $a_2$, $a_4$, and $a_6$ to be \\
$$a_2=-(x_1+x_2+x_3) s + s^{2} r',\quad  a_4= s^2 (x_1 x_2 + x_1 x_3  + x_2 x_3)+ s^3 q',\quad  a_6= s^3 x_1 x_2 x_3 + s^{4} t'.$$
We  complete the square in $T$ and obtain  $a_1=a_3=0$; this modifies  $r'$, $q'$, and $t'$ accordingly. 
We then define $\alpha$ to be  the highest power of $s$ that we can factor out of $r'$, $q'$, and  $t'$, i.e. $r'= s^\alpha r$, $q'= s^\alpha q$, and $t'=t^\alpha q$.  
This explains the  canonical  form of the equation. 
We require $Q$ to be nonzero, as otherwise, the Mordell-Weil group is non-trivial. 
\end{proof}

\subsection{Crepant resolution of singularities}
\begin{thm}[Crepant resolutions for Spin($8$)-models]\label{Thm:Spin8Res}
Assuming that $V(x_i)$ are smooth varieties intersecting two by two transversally,  the following sequence of  blowups defines a   crepant resolution of the normal form of a Spin($8$)-model given by Theorem 
\ref{Thm:Spin8Normal}: 
\begin{equation}\nonumber
   \begin{tikzcd}[column sep=huge] 
  X_0  \arrow[leftarrow]{r} {(x,y,s|e_1)} & \arrow[leftarrow]{r}{(y,e_1|e_2)} X_1 & X_2 \arrow[leftarrow]{r}{(x-x_i s z,e_2|e_3)} & X_3 \arrow[leftarrow]{r}{(x-x_j s z,e_2|e_4)} & X_4,
  \end{tikzcd}
\end{equation}
where $ X_0 =\mathbb{P}_B[\mathscr{O}_B\oplus\mathscr{L}^{\otimes 2}\oplus \mathscr{L}^{\otimes 3}]$.  
 The proper transform of $\mathscr{E}_0$ is
\begin{equation}\nonumber
Y^{(i+1,j+1)}: \quad 
\begin{cases}
e_2\big( y^2  -e_4^\alpha e_3^\alpha e_2^\alpha e_1^{2+\alpha s^{2+\alpha}} Q\big)=e_1 u_i u_j  (x-x_k s)\\   e_3u_i= (x-x_i s)\\
  e_4u_j= (x-x_j s),
  \end{cases}
\end{equation}
where the relative projective coordinates are
\begin{equation}\nonumber
[e_1 e_2  e_3^2e_4 x:e_1 e_2^2 e_3^2 e_4^2 y: z=1][e_3x:e_3e_2e_4y:s][y:e_1][u_i:e_2 e_4] [u_j:e_2].
\end{equation}
\end{thm}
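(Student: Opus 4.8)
The plan is to follow the chart-by-chart strategy used in the proofs of Theorems~\ref{Thm:G2Res} and~\ref{Thm:Spin7Res1}: track the proper transform of $\mathscr{E}_0$ through the four blowups, exhibit its defining equations in every affine chart of the iterated blowup, and read off smoothness and crepancy locally. Working in the patch $z=1$ (the section $x=z=0$ is manifestly smooth and untouched by the blowups), I would first record how each monomial of
\[
y^2-(x-sx_1)(x-sx_2)(x+sx_3)+s^{2+\alpha}Q
\]
scales under the blowup of $(x,y,s)$ and then under $(y,e_1)$, factoring out the appropriate power of each exceptional class. As in the $\mathrm{G}_2$ and Spin($7$) cases, this yields after the first two blowups an equation of the shape $e_2 y^2 = e_1\big[(x-sx_1)(x-sx_2)(x+sx_3)\mp e_1^{1+\alpha}e_2^{\alpha}s^{2+\alpha}Q\big]$. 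The third and fourth blowups, centered at $(x-x_i s,e_2)$ and $(x-x_j s,e_2)$, introduce the coordinates $u_i,u_j$ through the blowup relations $e_3u_i=x-x_is$ and $e_4u_j=x-x_js$; splitting off the two resolved linear factors leaves the third root $(x-x_k s)$ intact, which is precisely the bookkeeping that produces the three displayed equations for $Y^{(i+1,j+1)}$. The translation freedom $x\mapsto x+(\text{const})\,s$ lets one normalize a chosen root, so the collision loci $V(s,x_i-x_j)$ are governed by the stated hypothesis that $V(x_1),V(x_2),V(x_3)$ are smooth and pairwise transverse.

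For crepancy I would invoke the standard discrepancy count for a blowup of a smooth center. Each $X_{m}\to X_{m-1}$ is the blowup of a smooth complete intersection $Z_m$ of codimension $c_m$ inside the smooth ambient $X_{m-1}$, so $K_{X_m}=f_m^*K_{X_{m-1}}+(c_m-1)E_m$; writing $f_m^*\mathscr{E}_{m-1}=\mathscr{E}_m+\mu_m E_m$ with $\mu_m=\operatorname{mult}_{Z_m}\mathscr{E}_{m-1}$ and applying adjunction on the hypersurface, the discrepancy of the proper transform along $E_m$ is $(c_m-1)-\mu_m$. The first center $(x,y,s)$ has codimension $3$ and multiplicity $\mu_1=2$, since $v_S(a_2)\ge1,\,v_S(a_4)\ge2,\,v_S(a_6)\ge3$ force every monomial except $y^2$ to order $\ge 3$; the remaining three centers have codimension $2$ and $\mu_m=1$, because each meets the proper transform in a single lowest-order term. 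Hence every blowup has vanishing discrepancy and the composite $Y^{(i+1,j+1)}\to\mathscr{E}_0$ is crepant. That the four centers are genuine smooth complete intersections, so that Theorem~\ref{Thm:Push} applies at each stage, is exactly where the transversality of the $V(x_i)$ is used: it guarantees that $x-x_is$ cuts out a smooth divisor meeting $V(e_2)$ transversally.

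Smoothness is then verified patch by patch. Away from the exceptional loci the morphism is an isomorphism, so the work concentrates in the charts covering $V(s)$ and, within it, the collision loci $V(s,x_i-x_j)$ where the generic $\widetilde{\mathrm{D}}_4$ fiber degenerates. In each such chart one of $s,\;x-x_is,\;x-x_js,\;x-x_ks,\;r,\;q,\;t$ enters the defining system linearly and can be promoted to a local coordinate; differentiating then exhibits a unit partial derivative, exactly as the parameters $b,c,g$ were used in the proofs of Theorems~\ref{Thm:G2Res} and~\ref{Thm:Spin7Res1}. At the deepest collision $V(s,x_1-x_2,x_2-x_3)$, pairwise transversality of the $V(x_i)$ ensures the relevant functions form part of a regular system of parameters, so the Jacobian criterion holds there too.

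The step I expect to be the main obstacle is the smoothness check in the chart where $e_1 s\neq0$ and the equation collapses to the ``$y^2-(\text{unit})\,s^{2+\alpha}Q$'' locus: this is the analogue of the chart that produced terminal singularities for Spin($7$) when the valuation was too large. The key point to establish --- and the reason Spin($8$) behaves uniformly well in $\alpha$ --- is that after blowing up \emph{both} factors $x-x_is$ and $x-x_js$, the surviving factor $x-x_ks$ is a local unit here, since $x_k\not\equiv x_i,x_j \pmod s$ by $d\neq0$; this unit, together with $Q\neq0$ providing a further local coordinate, makes the proper transform a smooth hypersurface regardless of $\alpha$. Confirming this independence of $\alpha$, and that no higher-codimension collision spoils it, is the crux; everything else is routine chart arithmetic.
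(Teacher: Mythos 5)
Your crepancy argument is sound, and in fact more explicit than the paper's (which leaves crepancy implicit): the discrepancy count $(c_m-1)-\mu_m=0$, with $\mu_1=2$ for the codimension-three center and $\mu_m=1$ for the three codimension-two centers, is exactly right, and adjunction then gives $K_{Y^{(i+1,j+1)}}=\varphi^*K_{\mathscr{E}_0}$. The genuine gap is in the smoothness verification, and it sits precisely at the step you flagged as the crux. You claim that at the residual singularities the surviving factor $x-x_ks$ is a local unit ``since $x_k\not\equiv x_i,x_j \pmod s$ by $d\neq 0$.'' This confuses a generic condition with a pointwise one: $d\neq 0$ only says $d$ is not identically zero on $S$, whereas the singular locus of the partial resolution $\mathscr{E}_2$ consists of the three loci
$S^{(3)}_i=V\big(y^2-e_2^{\alpha}e_1^{2+\alpha}s^{2+\alpha}Q,\;e_2,\;x-x_js,\;x_j-x_k\big)$,
which lie exactly over the collision divisors $S\cap V(x_j-x_k)$ where $d$ \emph{does} vanish. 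At a point of $S^{(3)}_i$ the two factors $x-x_js$ and $x-x_ks$ vanish simultaneously while $x-x_is$ is a unit; hence for $Y^{(i+1,j+1)}$ the blowup along $(e_2,x-x_is)$ is trivial near such a point, only the blowup along $(e_2,x-x_js)$ acts, and the ``surviving'' factor $x-x_ks$ is \emph{not} a unit there. Over the deepest stratum $V(x_1-x_2,x_2-x_3)\cap S$ (nonempty in general once $\dim B\geq 3$) all three factors vanish. So your proposed mechanism fails exactly where the singularities actually are.

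The paper's proof uses a different, and correct, mechanism: in the patch $ze_1s\neq 0$, the variety $\mathscr{E}_2$ is identified with the binomial variety $V(u_1u_2-w_1w_2w_3)$, with $u_1=e_2$, $u_2=y^2-e_2^{\alpha}e_1^{2+\alpha}s^{2+\alpha}Q$, $w_i=x-x_is$, and the last two blowups are the two small toric blowups of this binomial variety (the subdivision of the triangular prism into tetrahedra, following \cite{EY}). After blowing up $(u_1,w_i)$ and then $(u_1,w_j)$, the nontrivial chart has local equation $u_1u_2-w_k=0$, which is smooth because $w_k$ appears \emph{linearly} with unit coefficient---not because it is a unit; this is what handles the points where $w_k$ vanishes, including the triple collision. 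Your appeal to ``$Q\neq 0$ providing a further local coordinate'' has the same flaw: $Q$ is only generically nonzero, and what is really used (this is where the hypothesis on coefficients enters) is that a coefficient of $Q$, e.g. $t$, serves as a base parameter so that $u_2$ has nonvanishing differential along the singular locus. Note that this last point is precisely where the dependence on $\alpha$ enters---at $s=y=0$ one has $du_2=2y\,dy+O(s^{\alpha})$, so the differential only survives through the $Q$-term when $\alpha=0$---hence the uniformity in $\alpha$ you hoped to get for free is in fact the delicate issue and cannot be extracted from a unit argument. Finally, by avoiding the binomial model you also lose the structural payoff of the paper's proof: the six small resolutions of $V(u_1u_2-w_1w_2w_3)$ are exactly what produce the six crepant resolutions $Y^{(i,j)}$ and the hexagonal flop diagram.
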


\begin{proof}
The Kodaira fiber I$^{*\text{s}}_0$ over the generic point of $S$ is seen after the first two blowups:
\begin{subequations}
\begin{equation}\nonumber
  \begin{tikzcd}[column sep=huge] 
  X_0 =\mathbb{P}_B[\mathscr{O}_B\oplus\mathscr{L}^{\otimes 2}\oplus \mathscr{L}^{\otimes 3}] \arrow[leftarrow]{r} {(x,y,s|e_1)} & \arrow[leftarrow]{r}{(y,e_1|e_2)} X_1 & X_2 \ .
  \end{tikzcd}
\end{equation}
The proper transform of $\mathscr{E}_0$ is
 \begin{equation}\nonumber
\mathscr{E}_2: \quad
e_2 (y^2- e_2^\alpha e_1^{2+\alpha} s^{2+\alpha} Q )=e_1  (x-x_1s)(x-x_2s)(x-x_3s).
\end{equation}
In $X_2$, the projective coordinates of the fiber of $X_0$ and the successive blowup maps are 
\begin{equation}\nonumber
[e_1 e_2  x:e_1 e_2^2  y: z=1][x:e_2y:s][y:e_1].
\end{equation}
\end{subequations}

After the second blowup, the variety is smooth up to codimension two. The generic fibers of the fibral divisors of this partial resolution are
\begin{equation}\nonumber
\begin{aligned}
C_0:& \quad s=e_2 y^2 -e_1 x^3=0 \\
C_{1}:&\quad  e_1=e_2=0 \\
C_2^{(i)}:& \quad {e_2}=(x-x_is)=0, \quad i=1,2,3.
\end{aligned}
\end{equation}
Their dual graph is the affine Dynkin diagram  $\widetilde{\text{D}}_4$. The component $C_1$ has multiplicity two, where one comes from  the exceptional divisor $e_1=0$, and  the other  from the exceptional divisor $e_2=0$.  The $C_0$ component 
 is the proper transform of the original elliptic fiber and is the only one touching the section $x=z=0$ of the elliptic fibration. 
 The $C_{1}$ component is the central node of the affine Dynkin diagram $\widetilde{\text{D}}_4$s, and $C_2^{(i)}$ are the remaining nodes. 
Over $V(x_i-x_j)$, the components of $C_2^{(i)}$ and $C_2^{(j)}$ coincide. These three subvarieties intersect along the subvariety  $V(x_1-x_2, x_2-x_3)$, over which  the three external nodes $C_2^{(i)}$ coincide. 
There are leftover families of  double point  singularities in codimension three. 
$\mathscr{E}_2$ has terminal singularities in codimension three along the three loci
\begin{align}\nonumber
S^{(3)}_i=V\Big({y^2- e_2^\alpha e_1^{2+\alpha} s^{2+\alpha} Q, e_2, x-x_j s,x_j-x_k }\Big)=V(y^2- e_2^\alpha e_1^{2+\alpha} s^{2+\alpha} Q)\cap C_2^{(j)}\cap C_2^{(k)},
\end{align}
where $(i,j,k)$ is a permutation of $(1,2,3)$. These singularities are located in the patch $ze_1 s\neq 0$. 

These singularities have crepant resolutions obtained by blowing up two of the three Weil divisors $C_2^{(i)}$. Thus, there are six possible choices. 
 Since the singularities are located in the patch $zs e_1\neq 0$, we note  that $\mathscr{E}_2$  has the structure of the binomial variety 
\begin{equation}\nonumber
V(v_1 v_2-w_1 w_2 w_3), 
\end{equation}
which was studied in \cite{EY}. This binomial variety  has six small resolutions whose flop diagram is a hexagon (a Dynkin diagram of type $\widetilde{\text{A}}_5$) \cite{EY}. 
\end{proof}

The following is a proof that we have a resolution by inspecting the singularities chart by chart. 
\begin{proof}[Proof in charts] 
\begin{equation}\nonumber
F=y^2 -(x-x_1 s)(x-x_2s)(x-x_3s) + s^{2+\alpha} (p x^2 + q s x + s^2 t).
\end{equation}
We assume that $V(x_1)$, $V(x_2)$, and $V(x_3)$ are smooth varieties intersecting two by two transversally. 
The idea of the proof is the following. Working in charts. 
The first blowup has center $(x,y,s)$ and requires three charts. If we call the exceptional divisor $E_1=V(e_1)$, the second blowup is centered at $V(y,e_1)$ and requires two charts.
We will show that after two blowups, the proper transform of $F$ describes a smooth variety or a binomial variety of the type 
$V(u_1 u_2-w_1 w_2 w_3)$, which will require two more blowups that can be done in six different ways.  

\begin{enumerate}
\item  $(x,y,s)\to (xy, y,sy)$
\begin{equation}\nonumber
F_{(1)}=1 -y (x-x_1 s)(x-x_2s)(x-x_3s) + s^{2+\alpha}y^{2+\alpha} (p x^2 + q s x + s^2 t).
\end{equation}
which is smooth since the system of equations $\partial_x F=\partial_y F=\partial_s F=F=0$ has no solutions. 

\item  $(x,y,s)\to (x, yx,sx)$
\begin{equation}\nonumber
F_{(2)}=y^2 -x(1-x_1 s)(1-x_2s)(1-x_3s) + s^{2+\alpha} x^{2+\alpha} (p  + q s  + s^2 t).
\end{equation}
The exceptional divisor is $V(x)$. Hence the second blowup is centered at $(x,y)$ and requires two charts. 
\begin{enumerate}
\item $(x,y)\to (x, yx)$
\begin{equation}\nonumber
F_{(2,1)}=x(y^2 +s^{2+\alpha} x^{\alpha} (p  + q s  + s^2 t))-(1-x_1 s)(1-x_2s)(1-x_3s).
\end{equation}
This has the singularities of the binomial variety $V(u_1 u_2-w_1 w_2 w_3)$.
\item $(x,y)\to (xy, y)$
\begin{equation}\nonumber
F_{(2,2)}=y\Big(1+ s^{2+\alpha} x^{2+\alpha}y^{\alpha} (p  + q s  + s^2 t)\Big)-x(1-x_1 s)(1-x_2s)(1-x_3s). 
\end{equation}
When $\alpha>0$, there are no singularities left. However, when $\alpha=0$, we still have double point singularities in the patch $x\neq 0$, and $F_{(2,2)}$ can be replaced by $V(u_1 u_2- w_1 w_2 w_3)$. 
\end{enumerate}
\item $(x,y,s)\to (xs, ys,s)$
\begin{equation}\nonumber
F_{(3)}=y^2 -s(x-x_1 )(x-x_2)(x-x_3) + s^{2+\alpha} (p x^2 + q  x + t).
\end{equation}
The exceptional divisor is $V(s)$. Hence, the second blowup is centered at $(y,s)$ in this chart.
We then blowup $(y,s)$, which requires two charts. 
\begin{enumerate}
\item $(y,s)\to (y, sy)$.
 \begin{equation}\nonumber
	F_{(3,1)}= y(1+ s^{2+\alpha} y^{\alpha}(p x^2 + q  x + t))-(x-x_1 )(x-x_2)(x-x_3).
\end{equation}
$F_{(3,1)}$ has the singularities of the binomial variety $V(u_1 u_2- w_1 w_2 w_3)$.
\item $(y,s)\to (ys, s)$.
 \begin{equation}\nonumber
F_{(3,2)}=s (y^2 + s^{\alpha} (p x^2 + q  x + t)) -(x-x_1 )(x-x_2)(x-x_3),
\end{equation}
which is again of the binomial variety $V(u_1 u_2- w_1 w_2 w_3)$.
\end{enumerate}
\end{enumerate}
After two blowups, if there are singularities left, they are those of the binomial variety
\begin{equation}\nonumber
V(u_1 u_2- w_1 w_2 w_3),
\end{equation}
whose toric description is a triangular prism. A crepant resolution of this binomial variety is given by  a sequence of two blowups corresponding to the subdivision of the triangular prism into two tetrahedrons \cite{EY}. 
Blowup of $(u_1,w_i)$ with $(u_1,w_i)\to (u_1 w_i,w_i)$ gives
\begin{equation}\nonumber
u_1 u_2  - w_j w_k=0.
\end{equation}
The other patch  $(u_1,w_i)\to (u_1,w_i u_1)$ is trivially smooth. 
Likely, blowing up $(u_1,w_j)$ with $(u_1,w_j)\to (u_1w_j,w_j)$ gives
\begin{equation}\nonumber
u_1 u_2 - w_k=0,
\end{equation}
which is smooth. The other patch is also trivially smooth. This resolution is $Y^{i+1,j+1}$. 
The graph of their flops is an affine Dynkin diagram $\widetilde{A}_5$ (a hexagon). 
\end{proof}

\subsection{Fiber structure and degenerations}

The fibral divisors of  the elliptic fibration $Y^{(i+1,j+1)}$  for $\alpha=0$ are 
\begin{equation}
\begin{cases}
D_0\quad C_0 : \quad s=ze_2 y^2 -e_1 x^3=0 \\
D_1\quad C_{1}:\quad  e_1=e_2=e_3u_1- (x-x_1 s)=e_4u_2- (x-x_2 s)=0 \\
D_{i+1}\quad C_2^{(i)}: \quad e_3=x-x_i s=e_4 u_j-(x_i-x_j) s= ze_2(y^2 -e_1^2 s^2 Q)-e_1 u_i u_j (x-x_k s)=0 \\
D_{j+1}\quad C_2^{(j)}: \quad e_4=x-x_js=e_3 u_i -(x_j-x_i) s=ze_2(y^2 -e_1^2 s^2 Q)-e_1 u_i u_j (x-x_k s)=0 \\
D_{k+1}\quad C_2^{(k)} \quad e_2=x-x_ks=e_3u_i-(x_k-x_i) s= e_4u_2- (x_k-x_j)s=0.
\label{spin8.div}
\end{cases}
\end{equation}
 If $\alpha>0$,  $Q$ is replaced by $e_4^\alpha e_3^\alpha e_2^\alpha e_1^{2+\alpha} s^{2+\alpha} Q$, which is zero for all the fibral divisors $D_{a}$. Note that this will carry through equations \eqref{spin8.div}, \eqref{Eq:lin}, \eqref{spin8.codim3} and \eqref{spin8.codim3.div}. Even though the generic fiber over $S$ has geometric components, this is not necessarily carried over to its degenerations.

 When $\alpha=0$,  the generic point of $V(x_i-x_j)\cap S$ (see equation \eqref{Eq:lin})  contains the irreducible component  
$C_2^{(i')}$, which is not geometrically irreducible but splits into  two geometrically irreducible curves after a quadratic field extension.  
 The fiber is of type I$^{*\text{ns}}_1$ with its dual graph of type $\widetilde{B}_4^t$.

Over codimension-two points (the three irreducible components of $V(d)$), we have 
\begin{equation}\label{Eq:lin}
\begin{array}{ccc}
 V(x_i-x_j)\cap S & V(x_i-x_k) \cap S& V(x_j-x_k)\cap S\\
 \begin{cases}
C_2^{(i)}\longrightarrow C_{2}^{(i,j)}+C^{(i')}_{2} \\ C_2^{(j)} \longrightarrow C_{2}^{(i,j)}
\end{cases}
& \begin{cases}
C_2^{(i)}\longrightarrow C_{2}^{(i,k)}+C^{(i'')}_{2} \\ C_2^{(k)} \longrightarrow C_{2}^{(i,k)}
\end{cases}
& \begin{cases}
C_2^{(j)}\longrightarrow C_{2}^{(j,k)}+C^{(j')}_{2} \\ C_2^{(k)} \longrightarrow C_{2}^{(j,k)}
\end{cases}.
\end{array}
\end{equation}

Over codimension-three points ( the common intersection of the three components of $V(d)$), we have 
\begin{equation}\label{spin8.codim3}
V(x_1-x_2, x_2-x_3)\cap S \begin{cases}
C_2^{(i)}\longrightarrow C_2^{(1,2,3)}+C_2^{(i,j)'}+C_2^{(i''')}\\
C_2^{(j)}\longrightarrow C_2^{(1,2,3)}+C_2^{(i,j)'}\\
C_2^{(k)}\longrightarrow C_2^{(1,2,3)}
\end{cases},
\end{equation}
with the components of the fiber defined as
\begin{equation}\label{spin8.codim3.div}
\begin{aligned}
\begin{cases}
C_2^{(i')}:& \quad e_3=x_i-x_j=x-x_i s= u_j= y^2 -e_1^2 s^2 Q=0 \\
C_2^{(i,j)}:& \quad e_3=e_4=x_i-x_j=x-x_i s= e_2(y^2 -e_1^2 s^2 Q)-e_1 u_i u_j (x_i-x_k)s=0 \\
C_2^{(i,j)'}:& \quad e_3=e_4=x_i-x_j=x_j-x_k=x-x_i s= y^2 -e_1^2 s^2 Q=0 \\
C_2^{(i'')}:& \quad e_3=x_1-x_k=x-x_1 s=e_4 u_j -(x_i-x_j) s= y^2 -e_1^2 s^2 Q=0 \\
C_2^{(i''')}:& \quad e_3=x_1-x_k=x_i-x_j=x-x_i s=u_2= y^2 -e_1^2 s^2 Q=0 \\
C_2^{(i,k)}:& \quad e_3=e_2=x_i-x_j=x-x_i s =
 e_4u_2- (x_1-x_2)s=0 \\
 C_2^{(j,k)}:& \quad e_4=e_2=x_j-x_k=x-x_js=e_3 u_i -(x_j-x_i) s=0 \\
 C_2^{(j')}:& \quad e_4=x_j-x_k=x-x_i s= y^2 -e_1^2 s^2 Q=0 \\
 C_2^{(1,2,3)} : & \quad e_2=e_3=e_4=x-x_k s  =x_i-x_j=x_j-x_k=0.
\end{cases}
\end{aligned}
\end{equation}

 Over $V(x_i-x_j)\cap S$, we have a fiber of type 
I$_1^{*\text{ns}}$ with dual graph of type $\widetilde{\text{B}}^t_4$. The non-geometrically irreducible node is 
$C_2^{(i')}$  whereas the  geometric fiber is a full $\widetilde{\text{D}}_5$. 
Over $V(x_1-x_2,x_2-x_3)\cap S$, the fiber is of type IV$^{*\text{ns}}$ with dual graph of type $\widetilde{\text{F}}^t_4$ and geometric dual graph of type 
 $\widetilde{\text{E}}_6$.
When $Q(x_i,s,z)=0$, the fibers  $I_1^{*\text{ns}}$ and $IV^{*\text{ns}}$ degenerate further along the codimension three locus  $V(s,x_i-x_j, x_i^2+ x_i r + t)$ in the base $B$ when the curve is $C_2^{(i')}$. 
The  degenerations are illustrated in Figure \ref{Fig:Spin8AlphaZero} and Figure  \ref{Fig.Spin8AlphaPos}, respectively, for $\alpha=0$ and $\alpha>0$.

\subsection{Flops and representations}
For $Y^{2,3}$, the curves obtained by analyzing the fiber structure have the following geometric weights.
\begin{equation}\label{Eq:Weights}
\begin{matrix}
\\
\alpha_0\\
\alpha_2\\
\alpha_1\\
\alpha_3\\
\alpha_4
\end{matrix}
\  \  
\begin{matrix}
\\
D_0\\
D_1\\
D_2\\
D_3\\
D_4
\end{matrix}
\begin{pmatrix}
C_0 & C_{1} & C_{2}^{(1)} & C_2^{(2)} & C_2^{(3)} \\
\hline
2  & -1 &  0 &  0 & 0 \\
-1& 2 &  -1 &  -1 & -1 \\
0  & -1 &  2 &  0 & 0 \\
0  & -1 &  0 &  2 & 0 \\
0  & -1 &  0 &  0 & 2
\end{pmatrix}
\quad
\begin{pmatrix}
 C_{2}^{(1,2)} & C_2^{(1')} & C_2^{(1,3)} &  C_2^{(1'')} &  C_2^{(2,3)}  &  C_2^{(2')} \\
\hline
  0 &  0 & 0 & 0&  0&0 \\
 -1 &  0 & -1 & 0& -1&0\\
  0 &  2 & 0 & 2& 0&0\\
  2 &  -2 & 0 & 0&0& 2\\
  0 &  0 & 2 & -2&2&-2
\end{pmatrix}.
\end{equation}
To express the intersection numbers with the fibral divisors, we introduce the  convention 
$$
\epsilon_0=(1,0,0,0,0), \quad \epsilon_1= (0, 1,0, 0,0),\quad  \epsilon_2= (0, 0, 1, 0,0), \quad \epsilon_3= (0, 0, 0,1,0), \quad \epsilon_4= (0, 0, 0,0,1).
$$
Since the central node of the D$_4$ diagram corresponds to the node $C_{1}$ of the resolution, in order to match the convention we use for weights, the intersection numbers  $(w_0, w_1, w_2,w_3)$ correspond to the weight $[w_2, w_1, w_3]$ of D$_4$: 
\begin{equation}\label{Eq:RulesD4}
(w_0, w_1, w_2,w_3, w_4)=\sum_{a=0}^4 w_a \epsilon_a \to \boxed{ w_2\  w_1\   w_3 \ w_4}.
\end{equation}
Since weights with their appropriate multiplicities sum to zero, i.e. $w_0+2w_1+w_2+w_3+w_4=0$, we have a bijection with the  inverse map
\begin{equation}\nonumber
 \boxed{ \varpi_1\  \varpi_2\   \varpi_3 \ \varpi_4}\to (-2\varpi_2-\varpi_1-\varpi_3-\varpi_4, \varpi_2, \varpi_1,\varpi_3,\varpi_4).
\end{equation}
For the resolution $Y^{(i+1,j+1)}$, by a direct generalization from equation \eqref{Eq:Weights}, we have the geometric weights 
$$
\begin{aligned}
& C_2^{(i,j)}\to -\epsilon_1+2\epsilon_{j+1}, \qquad C_2^{(i')}\to 2\epsilon_{i+1}-2\epsilon_{j+1}, \quad  C_2^{(i,k)}\to -\epsilon_1+2 \epsilon_{k+1},\\
& C_2^{(i'')}\to 2 \epsilon_{i+1}-2\epsilon_{k+1}, \quad C_2^{(j,k)}\to-\epsilon_1+2\epsilon_{k+1}, \qquad C_2^{(j')}\to 2 \epsilon_{j+1}-2\epsilon_{k+1} .
\end{aligned}
$$
  When $\alpha=0$,  these curves are not geometrically irreducible, as each curve splits into two irreducible curves, each having the same intersection numbers as  the fibral divisors corresponding to half of those of $ C_2^{(i')}$, $C_2^{(i'')}$, and $C_2^{(j')}$.  When $\alpha>0$, the curves  $ C_2^{(i')}$, $C_2^{(i'')}$, and $C_2^{(j')}$ are double curves and the intersection numbers of the corresponding reduced curves are also half of those of 
$ C_2^{(i')}$, $C_2^{(i'')}$, and $C_2^{(j')}$. Hence, for any  $\alpha$,  we end up with the following intersection numbers: 
\begin{align}\nonumber
 \epsilon_{i+1}-\epsilon_{j+1}, \quad  \epsilon_{i+1}-\epsilon_{k+1}, \quad  \epsilon_{j+1}-\epsilon_{k+1}.
\end{align}
These are, up to a sign, permutations of 
\begin{align}\nonumber
(0,0,0,1,-1) \quad  (0,0,1,0,-1), \quad (0,0,1,-1,0).
\end{align}
 Following the dictionary given by equation \eqref{Eq:RulesD4}, we get
 \begin{align}\nonumber
 \boxed{0\ 0\ 1\  -1},\quad \boxed{1\ 0 \  0\  -1}\ ,\quad  \boxed{1\ 0\  -1\  0}\ , 
\end{align}
which are the weights of  the minuscule  representations $\mathbf{8}_v$, $\mathbf{8}_c$, and $\mathbf{8}_s$, respectively. 
Hence, each resolution gives the representation $\mathbf{8}_v\oplus\mathbf{8}_c\oplus\mathbf{8}_s$.

The hexagon of crepant resolutions is isomorphic to the chamber structure 
of the hyperplane arrangement
\begin{equation}\nonumber
 \mathrm{I}(D_4, \mathbf{8}_v\oplus \mathbf{8}_c \oplus \mathbf{8}_s),
 \end{equation}
 where $\mathbf{8}_v$ is the vector representation, and $\mathbf{8}_c$ and $\mathbf{8}_s$ are the two irreducible spinor representations. Each of these three irreducible representations is minuscule of dimension eight, and
 their highest weights are respectively  $\boxed{1\ 0\ 0\  0}$, $\boxed{0\  0  \ 1  \  0}$, and $\boxed{0\   0\  0 \  1}$. Note that these  weights are related by simple involutions: $\mathbf{8}_v\leftrightarrow\mathbf{8}_c$ by the involution $(\varpi_1\leftrightarrow\varpi_3)$, and  $\mathbf{8}_v\leftrightarrow\mathbf{8}_c$ by the involution $(\varpi_3\leftrightarrow\varpi_4)$.

The weights of the representations $\mathbf{8}_v$, $\mathbf{8}_c$ and $\mathbf{8}_s$ are given below with the following conventions:  $(a,b,c,d)$ is a weight of D$_4$ expressed in the basis of simple roots while $\boxed{a\  b \ c \ d}$ is a weight of D$_4$ written in the basis of fundamental  weights. 

\begin{center}
\begin{tabular}{c|c|c}
Weight system of $\mathbf{8}_v$ of D$_4$&Weight system of $\mathbf{8}_c$ of D$_4$& Weight system of $\mathbf{8}_s$ of D$_4$\\
\hline
{\begin{tabular}{c}
\boxed{\   \  1\  \   0  \  \   0  \   \  0}\\
\boxed{-1\  \    1\     \  0\      \  0}\\
\boxed{\    \    0   -1  \  \  1 \   \  1}\\
\boxed{0\ 0\  -1\  1}  \boxed{0\ 0\  1  -1}  \\
\boxed{\    \    0   \   \   \   1   -1   - 1}\\
\boxed{ \  \    1     -1\     \  0\   \  0}\\
\boxed{ - 1\   \   \   0  \     \   0  \    \  0}
\end{tabular}}
&
{
\begin{tabular}{c}
\boxed{\   \  0\  \   0  \  \   1  \   \  0}\\
\boxed{\  \   0\  \    1-1  \      \  0}\\
\boxed{\    \    1   -1  \  \  0 \   \  1}\\
\boxed{-1\   0\   0 \  1}  \boxed{1\ 0\  0  -1}  \\
\boxed{ -1   \   \   \   1   \   \  0 - 1}\\
\boxed{ \  \    0     -1\     \  1\   \  0}\\
\boxed{ \    \    0\   \   \   0    -1  \    \  0}
\end{tabular}
}
&
{
\begin{tabular}{c}
\boxed{\   \  0\  \   0  \  \   0  \   \  1}\\
\boxed{\  \   0\  \    1\   \   0  -1}\\
\boxed{\    \    1   -1  \  \  1 \   \  0}\\
\boxed{-1\   0\   1 \  0}  \boxed{1\ 0 -1\   0}  \\
\boxed{ -1   \   \   \   1   -1 \   \  0 }\\
\boxed{ \  \    0     -1\     \  0\   \  1}\\
\boxed{ \    \    0\   \   \   0   \     \     0 -1}
\end{tabular}
}\\
& & \\
$
\begin{array}{c}
(1, 1, \frac{1}{2}, \frac{1}{2})\\
(0, 1, \frac{1}{2}, \frac{1}{2})\\
(0, 0, \frac{1}{2}, \frac{1}{2})\\
(0, 0, -\frac{1}{2}, \frac{1}{2})(0, 0, \frac{1}{2}, -\frac{1}{2})\\
(0, 0, -\frac{1}{2}, -\frac{1}{2})\\
(0, -1, -\frac{1}{2}, -\frac{1}{2})\\
(-1, -1, -\frac{1}{2}, -\frac{1}{2})
\end{array}
$
&
$
\begin{array}{c}
(\frac{1}{2}, 1, 1, \frac{1}{2})\\
(\frac{1}{2}, 1, 0, \frac{1}{2})\\
(\frac{1}{2}, 0, 0, \frac{1}{2})\\
(-\frac{1}{2}, 0, 0, \frac{1}{2})(\frac{1}{2}, 0, 0, -\frac{1}{2})\\
(-\frac{1}{2}, 0, 0, -\frac{1}{2})\\
(-\frac{1}{2}, -1, 0, -\frac{1}{2})\\
(-\frac{1}{2}, -1, -1, -\frac{1}{2})
\end{array}
$&
$
\begin{array}{c}
(\frac{1}{2}, 1, \frac{1}{2}, 1)\\
(\frac{1}{2}, 1, \frac{1}{2}, 0)\\
(\frac{1}{2}, 0, \frac{1}{2}, 0)\\
(-\frac{1}{2}, 0, \frac{1}{2}, 0)(\frac{1}{2}, 0, -\frac{1}{2}, 0)\\
(-\frac{1}{2}, 0, -\frac{1}{2}, 0)\\
(-\frac{1}{2}, -1, -\frac{1}{2}, 0)\\
(-\frac{1}{2}, -1, -\frac{1}{2}, -1)
\end{array}
$
\end{tabular}
\end{center}

\section{Application to five-dimensional supergravity theories }\label{Sec:Phys}
In this section, we use the information we gathered from the geometry of G$_2$, Spin($7$), and Spin($8$)-models to explore the corresponding gauge theories in M-theory and F-theory compactifications. 

We first consider  M-theory compactified on a Calabi--Yau threefold $Y$ elliptically fibered over a smooth rational surface $B$ of canonical class $K$. 
The divisor $S$ over which the generic fiber is of Kodaira type I$_0^*$ is now a smooth curve of genus $g$ and self-intersection $S^2$. 

The compactification of M-theory on a Calabi--Yau threefold $Y$ yields a  five-dimensional supergravity theory with eight supercharges coupled to  $h^{1,1}(Y)$  vector multiplets and $h^{2,1}(Y)+1$ neutral hypermultiplets \cite{Cadavid:1995bk}. 
The gravitation multiplet also contains a gauge field called the graviphoton. The dynamics of the vector multiplets and the graviphoton are derived from a real function called the prepotential.  
After integrating out massive charged vector and matter fields, the prepotential receives  a one-loop quantum correction protected from additional quantum corrections by supersymmetry. The vector multiplets transform in the adjoint representation of the gauge group while the  hypermultiplets transform in  representation $\mathbf{R}=\bigoplus_i \mathbf{R}_i$ of the gauge group, where $\mathbf{R}_i$ are irreducible representations. 

\begin{table}[htb]
\begin{center}
\begin{tabular}{|c|c|}
\hline
Multiplet & Fields \\
\hline 
Graviton &  $(g_{\mu\nu}, A_{\mu},\psi_\mu)$\\\hline
Vector  &  $(A_\mu,\varphi, \lambda)$\\\hline
Hyper &  $(q, \zeta)$  \\\hline
\end{tabular}
\end{center}
\caption{Matter content for five-dimensional $\mathcal{N}=1$ supergravity theories. The indices $\mu$ and $\nu$ refer to the five-dimensional spacetime coordinates. The tensor $g_{\mu\nu}$ is the metric of the five-dimensional spacetime.  The fields $\psi_\mu$, $\lambda$, $\zeta$  are symplectic Majorana spinors. The field $\psi_\mu$ is the gravitino and $A_\mu$ is the graviphoton.  The hyperscalar $q$ is a quaternion composed of four real scalar fields.  \label{Table:5DMatter}}
\end{table}

The Coulomb branches of the theory correspond to the chambers of the hyperplane arrangement I($\mathfrak{g},\mathbf{R}$). 
By matching the crepant resolutions with the chambers of I($\mathfrak{g},\mathbf{R}$), we determine which resolutions correspond to which phases of the Coulomb branch. 
 The triple intersection numbers of the fibral divisors correspond to the coefficient of the Chern--Simons couplings of the five-dimensional gauge theory and can be compared with the  Intrilligator--Morrison--Seiberg (IMS) prepotential, which is the  one-loop quantum contribution to the prepotential of the five-dimensional gauge theory. Since in field theory the Chern--Simons couplings are linear in the numbers $n_{\mathbf{R}_i}$ of hypermultiplets transforming in the  irreducible representation $\mathbf{R}_i$  such that $\mathbf{R}=\bigoplus_i \mathbf{R}_i$, computing the triple intersection numbers provides a way to determine the numbers  $n_{\mathbf{R}_i}$ from the topology of the elliptic fibration. 
We observe by direct computation in each chamber that the  numbers we find do not depend on the choice of the crepant resolution. 
This  idea of using the triple intersection numbers to determine the number of multiplets transforming in a given representation was used previously in \cite{ES} for SU($n)$-models  and most recently for F$_4$-models  in \cite{F4}. 
This technique has been advocated by Grimm and Hayashi in \cite{Grimm:2011fx}. 
The number of  representations we find using this procedure satisfy  the anomaly cancellation equations of a six-dimensional gauge theory with eight supercharges and the same matter content. 
 One can also determine them geometrically using either Witten's quantization formula for the G$_2$ and Spin($7$)-models, or  the usual intersecting brane methods for the Spin($8$)-model.

\subsection{Coulomb branches of G$_2$, Spin($7$), and Spin($8$) of 5d ${\cal N}=1$ gauge theories}
\label{Sec:IMS}

The Intrilligator--Morrison--Seiberg (IMS)  prepotential is the one-loop quantum contribution to the prepotential of a five-dimensional gauge theory with the matter fields in the representations $\mathbf{R}_i$ of the gauge group. Let   $\phi$ denote an element of the  Cartan subalgebra of the Lie algebra $\mathfrak{g}$, $\alpha$ the fundamental roots, $\varpi$ the weights of $\mathbf R_i$, and $\langle \varpi,\phi \rangle$ the evaluation of a  weight $\varpi$ on an element $\phi$ of the Cartan subalgebra. 
 The Intrilligator--Morrison--Seiberg (IMS) prepotential is \cite{IMS}
\begin{align}
6\mathscr{F}_{\text{IMS}} =&\frac{1}{2} \left(
\sum_{\alpha} |\langle \alpha, \phi \rangle|^3-\sum_{i} \sum_{\varpi\in \mathbf{R}_i} n_{\mathbf{R}_i} |\langle \varpi, \phi\rangle|^3 
\right).\nonumber
\end{align}
The full prepotential also contains a contribution proportional to the third Casimir invariant of the Lie algebra $\mathfrak{g}$; for simple groups, it is only nonzero for SU($N$) groups with $N\geq 3$. 

 For a given choice of a Lie algebra $\mathfrak{g}$, choosing a  dual  fundamental  Weyl chamber {resolves the  absolute values in the sum over the roots. 
 We then consider the arrangement of hyperplanes $\langle \varpi, \phi\rangle=0$, where $\varpi$ runs through all the weights of all the representations $\mathbf{R}_i$. 
 They define the hyperplane arrangement I($\mathfrak{g},\mathbf{R}=\bigoplus_i \mathbf{R}_i)$ restricted to the dual fundamental Weyl chamber. 
 If none of these hyperplanes intersect the interior of the dual fundamental Weyl chamber, we can safely remove the absolute values in the sum over the weights. 
 Otherwise, we have hyperplanes partitioning the fundamental Weyl chamber into subchambers. Each of these subchambers is defined by the signs of the linear forms $\langle \varpi, \phi\rangle$. 
 Two such subchambers are adjacent when they differ by the sign of a unique linear form.

Each of the subchambers is called a Coulomb phase of the gauge theory. 
 The transition from one chamber to an adjacent chamber is a phase transition that geometrically corresponds to a flop between different crepant resolutions of the same singular Weierstrass model.  
   The number of chambers of such a hyperplane arrangement is physically the number of phases of the Coulomb branch of the gauge theory.

For G$_2$ with the adjoint representation $\mathbf{14}$ and the fundamental representation $\mathbf{7}$, the one-loop prepotential is 
\begin{equation}\label{Eq:IMSG2}
6\mathscr{F}^{\text{IMS}}_{\text{G$_2$}}=-8 \phi _1^3 (n_{\mathbf{14}}+n_{\mathbf{7}}-1)+9 \phi _2 \phi _1^2 (-2 n_{\mathbf{14}}+n_{\mathbf{7}}+2)+3 \phi _2^2 \phi _1 (8n_{\mathbf{14}}-n_{\mathbf{7}}-8)-8 (n_{\mathbf{14}}-1) \phi _2^3.
\end{equation}
For Spin($7$) with the adjoint representation $\mathbf{21}$, the vector representation $\mathbf{7}$, and the spin representation $\mathbf{8}$, the prepotential depends on the choice of $sign(\phi_1-\phi_3)=\pm$, and is given by 
\begin{align}\label{Eq:IMSSpin7}
6\mathscr{F}^{\text{IMS}\pm}_{\text{Spin($7$)}} =&-\left(n_{\mathbf{8}}\pm n_{\bf 8}+8 n_{\mathbf{21}}-8\right) \phi _1^3-\left(8 n_{\mathbf{7}}+n_{\mathbf{8}}\mp n_{\mathbf 8}+8 n_{\mathbf{21}}-8 \right) \phi _3^3
 -3n_{\mathbf{8}}(1\mp 1) \phi _1^2 \phi _3-3n_{\mathbf{8}}(1\pm 1) \phi _1 \phi _3^2 \nonumber\\
& +3 \left(-n_{\mathbf{7}}+n_{\mathbf{8}}+n_{\mathbf{21}}-1\right) \phi _1^2 \phi _2 +3\left(n_{\mathbf{7}}-n_{\mathbf{8}}+n_{\mathbf{21}}-1\right) \phi _1\phi _2^2+6n_{\mathbf{8}} \phi _1 \phi _2\phi _3 \\
& -8 \left(n_{\mathbf{21}}-1\right) \phi _2^3+12 \left(n_{\mathbf{7}}-n_{\mathbf{21}}+1\right) \phi _2 \phi _3^2 -6 \left(n_{\mathbf{7}}-3 n_{\mathbf{21}}+3\right) \phi _2^2 \phi _3.\nonumber
\end{align}

Finally, for Spin($8$) with the adjoint representation $\mathbf{28}$, the vector representation $\mathbf{8}_v$,  and the two spin representations $\mathbf{8}_s$ and $\mathbf{8}_c$, we have six chambers. 
Each chamber is uniquely defined by the ordering of $(\phi_1, \phi_3,\phi_4)$.  The first chamber is defined by $\phi_1>\phi_3>\phi_4$, and the prepotential is 
\begin{align}\label{Eq:IMSSpin8Ch1}
\begin{split}
6\mathscr{F}^{\text{IMS}(2,3)}_{\text{Spin($8$)}}= & \ 2(4 (1-n_{\mathbf{28}})-n_{\mathbf{8_c}}- n_{\mathbf{8_s}})\phi _1^3 +8 (1-n_{\mathbf{28}}) \phi _2^3 +2 (4 (1-n_{\mathbf{28}})-n_{\mathbf{8_v}})\phi _3^3 +8 (1-n_{\mathbf{28}}) \phi _4^3 \\
& +3 \phi _2 \left( (n_{\mathbf{8_c}}+n_{\mathbf{8_s}}-n_{\mathbf{8_v}})\phi _1^2 +(-n_{\mathbf{8_c}}+n_{\mathbf{8_s}}+n_{\mathbf{8_v}})\phi _3^2 +(n_{\mathbf{8_c}}-n_{\mathbf{8_s}}+n_{\mathbf{8_v}})\phi _4^2 \right) \\
& +6 \phi _2 \left( n_{\mathbf{8_s}} \phi _1 \phi _3+n_{\mathbf{8_c}} \phi _1 \phi _4+n_{\mathbf{8_v}} \phi _3 \phi _4 \right) -6 \left( n_{\mathbf{8_s}} \phi _1 \phi _3^2+n_{\mathbf{8_c}} \phi _1 \phi _4^2+n_{\mathbf{8_v}} \phi _3 \phi _4^2 \right) \\
& +3 \phi _2^2 \left( (-2 (1-n_{\mathbf{28}})-n_{\mathbf{8_c}}-n_{\mathbf{8_s}}+n_{\mathbf{8_v}}) \phi _1  + (-2 (1-n_{\mathbf{28}})+n_{\mathbf{8_c}}-n_{\mathbf{8_s}}-n_{\mathbf{8_v}})\phi _3 \right. \\
&\left.+ (-2 (1-n_{\mathbf{28}})-n_{\mathbf{8_c}}+n_{\mathbf{8_s}}-n_{\mathbf{8_v}})\phi _4 \right) .
\end{split}
\end{align}
The second chamber is defined by $\phi_3>\phi_1>\phi_4$, and the prepotential is 
\begin{align}\label{Eq:IMSSpin8Ch2}
\begin{split}
6\mathscr{F}^{\text{IMS}(3,2)}_{\text{Spin($8$)}}= & \ 2 (4 (1-n_{\mathbf{28}}) -n_{\mathbf{8_c}})\phi _1^3 +8 (1-n_{\mathbf{28}}) \phi _2^3 +2(4 (1-n_{\mathbf{28}})-n_{\mathbf{8_v}}-n_{\mathbf{8_s}})\phi _3^3 ++8 (1-n_{\mathbf{28}}) \phi _4^3 \\
& +3 \phi _2 \left( (-n_{\mathbf{8_v}}+n_{\mathbf{8_s}}+n_{\mathbf{8_c}})\phi _1^2 +(n_{\mathbf{8_v}}+n_{\mathbf{8_s}}-n_{\mathbf{8_c}})\phi _3^2 +(n_{\mathbf{8_v}}-n_{\mathbf{8_s}}+n_{\mathbf{8_c}})\phi _4^2 \right) \\
& +6 \phi _2 \left( n_{\mathbf{8_s}} \phi _1 \phi _3+n_{\mathbf{8_c}} \phi _1 \phi _4+n_{\mathbf{8_v}} \phi _3 \phi _4 \right) -6 \left( n_{\mathbf{8_s}} \phi _1^2 \phi _3+n_{\mathbf{8_c}} \phi _1 \phi _4^2+n_{\mathbf{8_v}} \phi _3 \phi _4^2 \right) \\
& +3 \phi _2^2 \left( (-2 (1-n_{\mathbf{28}})+n_{\mathbf{8_v}}-n_{\mathbf{8_s}}-n_{\mathbf{8_c}})\phi _1 +(-2 (1-n_{\mathbf{28}})-n_{\mathbf{8_v}}-n_{\mathbf{8_s}}+n_{\mathbf{8_c}}) \phi _3 \right. \\
&\left.+ (-2 (1-n_{\mathbf{28}})-n_{\mathbf{8_v}}+n_{\mathbf{8_s}}-n_{\mathbf{8_c}})\phi _4 \right) .
\end{split}
\end{align}
The third chamber is defined by $\phi_3>\phi_4>\phi_1$, and the prepotential is 
\begin{align}\label{Eq:IMSSpin8Ch3}
\begin{split}
6\mathscr{F}^{\text{IMS}(3,4)}_{\text{Spin($8$)}}= & \ 8 (1-n_{\mathbf{28}}) \phi _1^3 +8 (1-n_{\mathbf{28}}) \phi _2^3 +2(4 (1-n_{\mathbf{28}})-n_{\mathbf{8_s}}-n_{\mathbf{8_v}})\phi _3^3 +2 (4 (1-n_{\mathbf{28}})-n_{\mathbf{8_c}})\phi _4^3  \\
& +3 \phi _2 \left( (n_{\mathbf{8_s}}-n_{\mathbf{8_v}}+n_{\mathbf{8_c}})\phi _1^2 + (n_{\mathbf{8_s}}+n_{\mathbf{8_v}}-n_{\mathbf{8_c}})\phi _3^2 +(-n_{\mathbf{8_s}}+n_{\mathbf{8_v}}+n_{\mathbf{8_c}})\phi _4^2 \right) \\
& +6 \phi _2 \left( n_{\mathbf{8_s}} \phi _1 \phi _3+n_{\mathbf{8_c}} \phi _1 \phi _4 +n_{\mathbf{8_v}} \phi _3 \phi _4 \right) -6 \left( n_{\mathbf{8_s}} \phi _1^2 \phi _3+n_{\mathbf{8_c}} \phi _1^2 \phi _4+n_{\mathbf{8_v}} \phi _3 \phi _4^2 \right) \\
& +3 \phi _2^2 \left( (-2 (1-n_{\mathbf{28}})-n_{\mathbf{8_s}}+n_{\mathbf{8_v}}-n_{\mathbf{8_c}})\phi _1 + (-2 (1-n_{\mathbf{28}})-n_{\mathbf{8_s}}-n_{\mathbf{8_v}}+n_{\mathbf{8_c}}) \phi _3  \right. \\
&\left.+(-2 (1-n_{\mathbf{28}})+n_{\mathbf{8_s}}-n_{\mathbf{8_v}}-n_{\mathbf{8_c}})\phi _4  \right) .
\end{split}
\end{align}
The fourth chamber is defined by $\phi_4>\phi_3>\phi_1$, and the prepotential is 
\begin{align}\label{Eq:IMSSpin8Ch4}
\begin{split}
6\mathscr{F}^{\text{IMS}(4,3)}_{\text{Spin($8$)}}= & \ 8 (1-n_{\mathbf{28}}) \phi _1^3 +8 (1-n_{\mathbf{28}}) \phi _2^3 +2 (4 (1-n_{\mathbf{28}})-n_{\mathbf{8_s}})\phi _3^3 +2(4 (1-n_{\mathbf{28}})-n_{\mathbf{8_c}}-n_{\mathbf{8_v}})\phi _4^3 \\
& +3 \phi _2 \left(  (n_{\mathbf{8_c}}-n_{\mathbf{8_v}}+n_{\mathbf{8_s}})\phi _1^2+(-n_{\mathbf{8_c}}+n_{\mathbf{8_v}}+n_{\mathbf{8_s}})\phi _3^2+(n_{\mathbf{8_c}}+n_{\mathbf{8_v}}-n_{\mathbf{8_s}})\phi _4^2  \right) \\
& +6 \phi _2 \left( n_{\mathbf{8_s}} \phi _1 \phi _3 + n_{\mathbf{8_c}} \phi _1\phi _4 +n_{\mathbf{8_v}} \phi _3 \phi _4 \right) -6 \left( n_{\mathbf{8_s}} \phi _1^2 \phi _3+n_{\mathbf{8_c}} \phi _1^2 \phi _4 +n_{\mathbf{8_v}} \phi _3^2 \phi _4 \right) \\
& +3 \phi _2^2 \left( (-2 (1-n_{\mathbf{28}})-n_{\mathbf{8_c}}+n_{\mathbf{8_v}}-n_{\mathbf{8_s}})\phi _1   + (-2 (1-n_{\mathbf{28}})+n_{\mathbf{8_c}}-n_{\mathbf{8_v}}-n_{\mathbf{8_s}})\phi _3\right. \\
&\left.+ (-2 (1-n_{\mathbf{28}})-n_{\mathbf{8_c}}-n_{\mathbf{8_v}}+n_{\mathbf{8_s}}) \phi _4 \right) .
\end{split}
\end{align}
The fifth chamber is defined by $\phi_4>\phi_1>\phi_3$, and the prepotential is 
\begin{align}\label{Eq:IMSSpin8Ch5}
\begin{split}
6\mathscr{F}^{\text{IMS}(4,2)}_{\text{Spin($8$)}}= & \ 2 (4 (1-n_{\mathbf{28}})-n_{\mathbf{8_s}})\phi _1^3 +8 (1-n_{\mathbf{28}}) \phi _2^3 +8 (1-n_{\mathbf{28}}) \phi _3^3 +2(4 (1-n_{\mathbf{28}})-n_{\mathbf{8_v}}-n_{\mathbf{8_c}})\phi _4^3 \\
& +3 \phi _2 \left( (-n_{\mathbf{8_v}}+n_{\mathbf{8_c}}+n_{\mathbf{8_s}})\phi _1^2 +(n_{\mathbf{8_v}}-n_{\mathbf{8_c}}+n_{\mathbf{8_s}})\phi _3^2 +(n_{\mathbf{8_v}}+n_{\mathbf{8_c}}-n_{\mathbf{8_s}})\phi _4^2 \right) \\
& +6 \phi _2 \left( n_{\mathbf{8_s}} \phi _1 \phi _3+n_{\mathbf{8_c}} \phi _1 \phi _4+n_{\mathbf{8_v}} \phi _3 \phi _4 \right) -6 \left( +n_{\mathbf{8_s}} \phi _1 \phi _3^2+n_{\mathbf{8_c}} \phi _1^2 \phi _4+n_{\mathbf{8_v}} \phi _3^2 \phi _4 \right) \\
& +3 \phi _2^2 \left( (-2 (1-n_{\mathbf{28}})+n_{\mathbf{8_v}}-n_{\mathbf{8_c}}-n_{\mathbf{8_s}})\phi _1 +(-2 (1-n_{\mathbf{28}})-n_{\mathbf{8_v}}+n_{\mathbf{8_c}}-n_{\mathbf{8_s}})\phi _3 \right. \\
&\left.+ (-2 (1-n_{\mathbf{28}})-n_{\mathbf{8_v}}-n_{\mathbf{8_c}}+n_{\mathbf{8_s}}) \phi _4  \right) .
\end{split}
\end{align}
The sisth chamber is defined by $\phi_1>\phi_4>\phi_3$, and the prepotential is 
\begin{align}\label{Eq:IMSSpin8Ch6}
\begin{split}
6\mathscr{F}^{\text{IMS}(2,4)}_{\text{Spin($8$)}}= & \ 2(4 (1-n_{\mathbf{28}})-n_{\mathbf{8_s}}-n_{\mathbf{8_c}})\phi _1^3 +8 (1-n_{\mathbf{28}}) \phi _2^3 +8 (1-n_{\mathbf{28}}) \phi _3^3 +2 (4 (1-n_{\mathbf{28}})-n_{\mathbf{8_v}})\phi _4^3 \\
& +3 \phi _2 \left( (n_{\mathbf{8_s}}+n_{\mathbf{8_c}}-n_{\mathbf{8_v}})\phi _1^2+(n_{\mathbf{8_s}}-n_{\mathbf{8_c}}+n_{\mathbf{8_v}})\phi _3^2 +(-n_{\mathbf{8_s}}+n_{\mathbf{8_c}}+n_{\mathbf{8_v}})\phi _4^2  \right) \\
& +6 \phi _2 \left( n_{\mathbf{8_s}} \phi _1 \phi _3+n_{\mathbf{8_c}} \phi _1 \phi _4+n_{\mathbf{8_v}} \phi _3 \phi _4 \right) -6 \left( n_{\mathbf{8_s}} \phi _1 \phi _3^2+n_{\mathbf{8_c}} \phi _1 \phi _4^2+n_{\mathbf{8_v}}  \phi _3^2 \phi _4 \right) \\
& +3 \phi _2^2 \left( (-2 (1-n_{\mathbf{28}})-n_{\mathbf{8_s}}-n_{\mathbf{8_c}}+n_{\mathbf{8_v}}) \phi _1  + (-2 (1-n_{\mathbf{28}})-n_{\mathbf{8_s}}+n_{\mathbf{8_c}}-n_{\mathbf{8_v}})\phi _3 \right. \\
&\left. + (-2 (1-n_{\mathbf{28}})+n_{\mathbf{8_s}}-n_{\mathbf{8_c}}-n_{\mathbf{8_v}})\phi _4 \right) .
\end{split}
\end{align}

\subsection{Counting hypermultiplets with triple intersection numbers}\label{Sec:CountingHypers}
\def\arraystretch{1}

\begin{prop}
\label{prop:ChargeNumber}
 In the case of a Calabi--Yau threefold, the number of each representations derived by matching the triple intersection numbers of a G$_2$, Spin($7$), or Spin($8$)-model and the one-loop prepotential does not depend on the choice of a crepant resolution and  are given by 
\begin{align}
\textnormal{G}_2: &\quad n_{\mathbf{7}}=3S^2-10(g-1), & n_{\mathbf{14}}=g, \nonumber \\
\textnormal{Spin($7$)}: &\quad n_{\mathbf{7}}= S^2-3 (g-1), \quad n_{\mathbf{8}} =2 S^2-8 (g-1), \quad & n_{\mathbf{21}}=g,  \nonumber \\
\textnormal{Spin($8$)}: &\quad n_{\mathbf{8_v}}=n_{\mathbf{8_s}}=n_{\mathbf{8_c}}=S^2-4(g-1), \quad & n_{\mathbf{28}}=g. \nonumber
\end{align}
\end{prop}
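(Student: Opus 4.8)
The plan is to prove Proposition \ref{prop:ChargeNumber} by matching, chamber by chamber, the two cubic polynomials in the Coulomb-branch coordinates that the geometry and the gauge theory each produce: the triple-intersection polynomial $\mathscr{F}$ of Lemma \ref{Lem:Triple}, built from the fibral divisors of a crepant resolution, and the one-loop IMS prepotential $6\mathscr{F}_{\text{IMS}}$ of Section \ref{Sec:IMS}, equations \eqref{Eq:IMSG2}, \eqref{Eq:IMSSpin7}, and \eqref{Eq:IMSSpin8}, whose coefficients are \emph{linear} in the unknown multiplicities $n_{\mathbf{R}_i}$. The guiding principle, already exploited for SU($n$)-models in \cite{ES} and for F$_4$-models in \cite{F4} (see also \cite{Grimm:2011fx}), is that the triple intersection numbers of the fibral divisors are exactly the cubic Chern--Simons couplings of the five-dimensional theory, so in each Coulomb phase the two cubic forms must agree identically as polynomials in the $\phi$'s. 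Treating the genus $g$ and the self-intersection $S^2$ as fixed parameters, this identity turns into a system of linear equations for the $n_{\mathbf{R}_i}$.

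The first concrete step is to fix the dictionary between the Kähler parameters $\phi_0,\dots,\phi_r$ weighting the fibral divisors $D_0,\dots,D_r$ in $\mathscr{F}$ and the Cartan coordinates $\phi_1,\dots,\phi_r$ of $\mathscr{F}_{\text{IMS}}$. Because a fibral curve $C$ satisfies $\sum_a m_a\,(D_a\cdot C)=\varphi^{*}S\cdot C=0$, the affine node $D_0$ is linearly dependent on the others; concretely I would eliminate $\phi_0$ (the coordinate attached to the node meeting the section) and read the weight of $C$ off the intersection numbers $-D_i\cdot C$, exactly as in Definition \ref{Def:WeightVerticalCurve}. One must then align the geometric node labels $D_1,\dots,D_r$ with the Dynkin labels used to write $\mathscr{F}_{\text{IMS}}$, including the correct placement of the multiplicity-two (central) node of the $\widetilde{\text{D}}_4$ fiber and the bookkeeping of long versus short simple roots for the non-simply-laced cases G$_2$ and B$_3$. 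Getting this correspondence right is what makes the ensuing coefficient comparison meaningful.

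With the dictionary in place I would expand both cubics in the monomial basis $\{\phi_i\phi_j\phi_k\}$ and equate coefficients. For each model this produces more equations than unknowns --- four monomials against two unknowns $(n_{\mathbf{7}},n_{\mathbf{14}})$ for G$_2$, ten monomials against three unknowns for Spin($7$), and twenty against four for Spin($8$) --- so the linear system is overdetermined, and its solvability is a genuine check rather than an automatic fact. Its unique solution yields the tabulated values; in particular one of the pure-cube coefficients fixes the adjoint multiplicity $n_{\mathbf{14}}=n_{\mathbf{21}}=n_{\mathbf{28}}=g$, matching the expectation that adjoint matter is counted by the genus of $S$. I would then repeat the computation in every chamber --- the two phases $Y^{\pm}$ of Spin($7$) via $\mathscr{F}^{\pm}$ and $6\mathscr{F}^{\pm}_{\text{IMS}}$, and the six phases of Spin($8$) via the $\mathscr{F}^{(i,j)}$ together with the permutations of \eqref{Eq:IMSSpin8} encoded in \eqref{Eq:Phases} and Table \ref{Table:Spin8Match} --- and verify that the same $n_{\mathbf{R}_i}$ emerge. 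This chamber-independence is expected physically, since a flop merely re-resolves the absolute values $|\langle\varpi,\phi\rangle|$ and leaves the matter content fixed, but it furnishes a resolution-by-resolution confirmation of the answer.

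I expect the linear algebra to be routine; the delicate part is the dictionary of the second paragraph. The geometric ordering of the fibral divisors need not coincide with the Dynkin ordering, and the cases G$_2$ and B$_3$ demand care with root normalizations and with the identification of the node that splits in codimension two. A mistake there would either spoil the consistency of the overdetermined system or introduce a spurious dependence on the chamber; conversely, the fact that a single dictionary renders all chambers simultaneously consistent is the strongest internal evidence that the matching, and hence the stated multiplicities, are correct. An independent external check is that these numbers satisfy the six-dimensional anomaly-cancellation conditions and reproduce the counts of Grassi and Morrison \cite{GM1}.
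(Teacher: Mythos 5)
Your proposal is correct and follows essentially the same route as the paper, whose proof is precisely a ``direct comparison'' of the IMS prepotentials \eqref{Eq:IMSG2}, \eqref{Eq:IMSSpin7}, \eqref{Eq:IMSSpin8} with the triple intersection polynomials of Lemma \ref{Lem:Triple} after imposing $\phi_0=0$; your elaboration --- setting up the node dictionary, equating monomial coefficients, solving the overdetermined linear system, and checking each chamber --- is exactly what that comparison amounts to in practice. Your emphasis on the alignment of geometric node labels with Dynkin labels is well placed, since (as one sees already for G$_2$, where the geometric $\phi_1,\phi_2$ must be swapped relative to the IMS conventions) the matching only closes with the correct identification.
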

\begin{proof} 
Direct comparison of the prepotentials given by equations \eqref{Eq:IMSG2}, \eqref{Eq:IMSSpin7}, \eqref{Eq:IMSSpin8Ch1}, \eqref{Eq:IMSSpin8Ch2}, \eqref{Eq:IMSSpin8Ch3}, \eqref{Eq:IMSSpin8Ch4}, \eqref{Eq:IMSSpin8Ch5}, \eqref{Eq:IMSSpin8Ch6} with triple intersection polynomials provided in Lemma \ref{Lem:Triple} after imposing $\phi_0=0$. 
\end{proof}

We checked that the numbers computed in Proposition \ref{prop:ChargeNumber} satisfy the genus formula of Aspinwall--Katz--Morrison,  here they are derived
from the triple intersection numbers. The same numbers were computed by Grassi and Morrison using Witten's genus formula. 

The advantage of our method is that we do not use the degeneration loci of the curves and hence the computation is the same even if the  fiber degeneration is not generic.
This method also  provides the number of charged hypermultiplets from a purely fivedimensional
point of view, thereby avoiding a six-dimensional argument based on cancellations of
anomalies  and the subtleties of the Kaluza-Klein circle compactification \cite{Grimm:2015zea}. 
The same method was used in \cite{ES} for SU(N)-models and in \cite{F4} for F$_4$-models.

The representation induced by the weights of vertical curves over codimension-two points is not always physical as it is possible that no hypermultiplet is charged
under that representation. In such a case, the representation is said to be ``frozen'' \cite{F4}.  

In all cases, the adjoint representation is always frozen when the curve $S$ is a smooth rational curve ($g=0$) \cite{Witten}.
For a G$_2$-model, the fundamental representation is frozen if and only if $g=3k+1$ and $S^2=10k$ with $k$ in  $\mathbb{Z}_{\geq 0}$. 
For a Spin($7$)-model, the vector representation is frozen when $S^2=3(g-1)$, whereas the spin representation is frozen when $S^2=4(g-1)$, and both representations are simultaneously frozen when $S^2=g-1=0$. 
For a Spin($8$)-model, all representations are frozen if and only if $g=0$ and $S^2=-4$.  In this case, one can check that there are  no curves  carrying the weights of the representations: this corresponds to the well-known  non-Higgsable model with a rational curve of self-intersection $-4$ \cite{Morrison:2012np}.

\section{Anomaly cancellation in six-dimensional ${\mathcal N}=(1,0)$ theories} \label{sec:anomaly6d}
In this section, we consider six-dimensional ${\cal N}=(1,0)$ supergravity theories derived from the compactifications of F-theory on  elliptically fibered Calabi--Yau threefolds  that are G$_2$, Spin($7$), or Spin($8$)-models. 

\begin{table}[htb]
\begin{center}
\begin{tabular}{|l|l|}
\hline 
Multiplets & Field content \\
\hline
gravity multiplet & $(g_{\mu\nu}, \psi_\mu^-, B_{\mu,\nu}^+)$ \\
tensor multiplet & $(B_{\mu\nu}^-, \lambda^+,\phi)$\\
vector multiplet&$(A_\mu, \lambda^-)$ \\
hypermultiplet &  $(\zeta^+, 4 q)$ \\
\hline 
\end{tabular}
\end{center}
\caption{Matter content of six-dimensional ${\cal N}=(1,0)$ supergravity theories. The symmetric tensor  $g_{\mu\nu}$ is the graviton, $\psi_\mu^-$ is 
the negative chirality
gravitino,  $B^+_{\mu\nu}$ is an antisymmetric two-form  with self-dual curvature.  The tensor
multiplet has an antisymmetric two-form $B^-_{\mu\nu}$  with anti-self-dual curvature,  the dilatino $\lambda^+$ and a dilaton $\phi$.  The vector multiplet contains a gauge field $A_\mu$  and its supersymmetric partner (the gaugino) $\lambda^-$. 
The hypermultiplet contains four pseudo-real scalars and a hyperino $\zeta^+$.
All the spinors are symplectic Majorana--Weyl spinors. 
The spinors $\psi_\mu^-$, $\lambda^+$, and $\lambda^-$ are also doublet with respect to the SU($2$)  R-symmetry group. 
The chirality of spinors and tensors is indicated by a superscript $\pm$. 
The tensor multiplet scalars parametrize the homogeneous symmetric space SO(1,$ n_T$)/SO($n_T$). The hypermultiplet  scalars parametrize locally a non-compact quaternionic-K\"ahler manifold.}\label{Table:Fields6D}
\end{table}

 Six-dimensional ${\cal N}=(1,0)$ supergravity theories  always have a unique gravity multiplet and can be coupled to $n_T$ chiral tensor multiplets, $n_V^{(6)}$ vector multiplets, and $n_H$ hypermultiplets \cite{Schwarz:1995zw,Sadov:1996zm}. Such theories are characterized by a choice of a non-compact quaternionic-K\"ahler manifold, a gauge group $G$ and the choice of a  representation $\bf{R}$ of the gauge group under which charged hypermultiplets transform. 
The vector multiplets always transform in the  adjoint representation of the gauge group $G$, thus  $n^{(6)}_V = \rm{dim}\ G$. 
 Models with $n_T>1$ would not have a covariant Lagrangian formulator, but will stil have  covariant equations of motion. 

Since we have chiral tensors and chiral fermions, the theory can have gauge, gravitational, and mixed local anomalies that have to be cancelled by the Green--Schwarz--Sagnotti mechanism. 
The condition for the pure gravitational anomalies in presence of an arbitrary number of tensor multiplets was derived in 
 \cite{Salam} and local anomalies in F-theory was first studied by Sadov  \cite{Sadov:1996zm} who determined conditions in terms of the intersection numbers of the divisors supporting the simple factors of the gauge group and the topology of the base of the elliptic fibration. In particular, the number of tensor multiplets $n_T$,  depends only on the square of the canonical class of the base (which we assume is a rational surface). 
 Using anomaly cancellation in six-dimensional theories  to determine the number of moduli fields of a theory can be traced to Seiberg \cite{Seiberg.1988} and  Erler \cite{Erler}.
 We use anomaly cancellation conditions of the six-dimensional theory to constrain the number of tensor multiplets, vector multiplets, and charged hypermultiplets transforming in the representations we observed geometrically for each of the models.

In our case,  the delicate part is to compute $n_H$, which has two main contributions:  the number of neutral hypermultiplets and the number  charged hypermultiplets. The neutral hypermultiplets depends only on  the Hodge number $h^{2,1}(Y)$ of the Calabi--Yau threefold. 
Following the techniques developed in  \cite{Euler} and  further explored  for various gauge groups in \cite{SU2SU3,SU2G2,SO4,EKY2,F4}, we can compute the Hodge numbers of an elliptically fibered Calabi--Yau  threefold defined by a crepant resolution of a Weierstrass model (see section \ref{sec:Euler}). 
This allows us in section \ref{sec:6dourmodels} to  derive explicitly the number of neutral hypermultiplets $n_H^0$ without studying the structure of all singular fibers of the elliptic fibration in contract to what is done for example in \cite{GM1}. 
Another advantage of the method we use for computing the Euler characteristic is that it produces at once a  generating function for  the Euler characteristic of a $G$-model over a base of arbitrary dimension and without imposing the Calabi--Yau condition \cite{Euler}. 
We can also compute the number of charged hypermultiplets  $n_{\bf{R}_i}$ transforming in given representation $\mathbf{R}_i$ of the gauge group  without using ad-hoc additional conditions on hypermultiplets.

\subsection{CPT and counting charged hypermultiplets in 6D}\label{CPT}

The most general representation  $\bf{R}$ of the gauge group $G$ can be written as follows 
\begin{equation}
{\mathbf{R}}=\mathbf{\rho}_c\oplus\mathbf{\overline{\rho}}_{c}\oplus \mathbf{\rho}_{r}\oplus \mathbf{\rho}_{pr}, 
\end{equation}
where $\mathbf{\rho}_c$ is a direct sum of irreducible complex representation (so that $\mathbf{\rho}_c\oplus\mathbf{\overline{\rho}}_c$ is a  real representation),  $\rho_r$ is a direct sum of real representations, 
and $\mathbf{\rho}_{pr}$ is a direct sum of pseudo-real representations. 

The most familiar case is when
\begin{equation}
{\mathbf{R}}=\mathbf{\rho}_c\oplus\mathbf{\overline{\rho}}_c.
\end{equation}
In that case, the  CPT-invariance requires that for each hypermultiplet transforming in a complex representation $\rho_c$, there is another hypermultiplet (its CPT-image) transforming in the dual representation $\overline{\mathbf{\rho}}_c$. 
The first example of a case with hypermultiplets charged under pseudo-real representation  in anomaly-free six-dimensional supergravities were constructed in \cite{Salam} where the pseudo-real representation $\bf{972}$ of E$_7$ was gauged.
In presence of pseudo-real representations, the CPT invariance is respected only when the pseudo-real representations are carried by half-hypermultiplets.\footnote{ If a hypermultiplet transform under a  pseudo-real representation  of the gauge group, it is  possible to impose an additional reality condition on the hypermultiplet  leaving only half of the degree of freedom \cite{Witten.Small}. Such a hypermultiplet is called a   {\em half-hypermultiplet} and is the smallest CPT self-conjugate supersymmetric multiplet in a  ${\cal N}=(1,0)$ six-dimensional supergravity theory.  
Some important examples with pseudo-real presentations are the SU($2$)-model with matter in the fundamental representation $\bf{2}$, the E$_7$-model with matter in the representation $\bf{56}$ \cite{Esole:2018vnm}, the SU($2$)$\times$G$_2$ model with matter in the representation $(\bf{2},\bf{7})$ \cite{SU2G2}. 

In the case of G$_2$, Spin($7$), and Spin($8$)-models, all the irreducible representations are real. Thus, there is no need to have half-hypermultiplets. 
Using the six-dimensional anomaly cancellation conditions, we find that without imposing CPT nor triality, the number of hypermultiplets of the representation $\mathbf{8_s}$ and $\mathbf{8_c}$  of a Spin($8$)-model can only be identified as their sum  $n_{\bf{8}_s}+n_{\bf{8}_c}$, but not individually. This is a striking difference from what we observe in a compactification of M-theory to five-dimensional supergravity theory on the same manifold  where we achieved the complete counting of charged hypermultiplets  ($n_{\bf{28}}$, $n_{\bf{8}_v}$, $n_{\bf{8}_s}$, $n_{\bf{8}_c}$)  by  matching the triple intersection polynomials of a Spin($8$)-model  with the one-loop prepotential of a five-dimensional gauge theory coupled to  hypermultiplets transforming  in the  representations $\bf{8}_v$, $\bf{8}_s$, and $\bf{8}_c$. In particular, we were able to compute the number of charged hypermultiplets transforming in the representationss $\mathbf{8_s}$ and $\mathbf{8_c}$ individually without imposing any additional conditions. This shed lights on  the differences between the nature of five and six dimensional supergravity theories with eight supercharges. The numbers of charged hypermultiplets in each representation  match those found by Grassi and Morrison \cite{GM1} where additional conditions on hypermultiplets are considered and the Hodge numbers and Euler characteristic are computed by a painful analysis of the Euler characteristic of all the singular fibers. }

\subsection{Generality on anomalies in 6d ${\cal N}=(1,0)$ theories with a simple Lie group } \label{sec:general6d}
The action of the gauge group on a hypermultiplet is characterized by a weight vector determining the charge of the hypermultiplet. 
It follows that hypermultiplets are organized into representations of the Lie algebra $\mathfrak{g}$. 
In our convention, a neutral hypermultiplet  has a  zero weight. 
The number of zero weights of a representation $\mathbf{R}_i$ is denoted by $\rm{dim}\, \mathbf{R}_{i,0}$. The difference 
$$(\rm{dim}\,\mathbf R_i -\rm{dim}\,\mathbf R_{i,0})$$
 is called the charge dimension of the representation $\mathbf{R}$ \cite{GM1}. 
We denote by $n_{\mathbf{R}_i}$ the multiplicity of the representation  $\mathbf{R}_i$ and $n_H^0$ the number of neutral hypermultiplets.  
The total number of charged hypermultiplets is \cite{GM1}
\begin{equation}
n_H^{ch} =\sum_{i} n_{{\mathbf R}_i} \left( \rm{dim}\,{\mathbf R}_i -\rm{dim}\,{\mathbf R}_{i,0} \right),
\end{equation}
where $\rm{dim}\,R_{i,0}$ is the number of zero weights in the representation $\mathbf{R}_i$. The total number of hypermultiplets is then
 \begin{equation}
 n_H=n_H^0+n_H^{ch}.
 \end{equation}
We can compute  $n_H^0$ and $n_T$ from the Hodge numbers of the Calabi--Yau threefold and its base $B$ of canonical class $K$, which we assume is a rational surface \cite{Sadov:1996zm,GM1}:  
\begin{equation}
n_V^{(6)}=\dim\, G, \quad    n_H^0=h^{2,1}(Y)-1, \quad n_T=h^{1,1}(B)-1=9-K^2.\end{equation}
The pure gravitational anomaly $\mathrm{tr}\  R^4$ is canceled by the vanishing of its coefficient \cite[Footnote 3]{Salam}: 
\begin{equation}
n_H-n_V+29n_T-273=0.
\end{equation}
 In six-dimensional theories, local anomalies are due to quadrangle graphs and their contributions are summarized into an eight-form polynomial in the Riemann curvature $R$ and the field strength $F_a$ of the gauge fields.
If the gauge group is a simple group $G$, the remainder of the anomaly polynomial is \cite{Sadov:1996zm} 
 \begin{equation}\label{eqn.I8def}
 I_8= \frac{9-n_T}{8} (\mathrm{tr} \   R^2)^2+\frac{1}{6}\  X^{(2)} \mathrm{tr}\ R^2-\frac{2}{3} X^{(4)},
 \end{equation}
where 
\begin{align}
X^{(n)}=\mathrm{tr}_{\mathbf{adj}}\  F^n -\sum_{i}n_{{\mathbf R}_i} \mathrm{tr}_{{\mathbf R}_i}\  F^n.
\end{align}

Choosing a reference representation $\mathbf F$,  we have
\begin{align}
X^{(2)}=\Big(A_{\mathbf{ adj}}   -\sum_i n_{{\bf R}_i} A_{{\bf R}_i}\Big) \mathrm{tr}_{\mathbf F}  F^2,\quad
X^{(4)}=\Big(B_{\mathbf{adj}}   -\sum_{i}n_{{\bf R}_i} B_{{\bf R}_i}    \Big) \mathrm{tr}_{\mathbf F}  F^4 +
\Big(C_{\bf{ adj}}   -\sum_{i}n_{{\bf R}_i} C_{{\bf R}_i} \Big)( \mathrm{tr}_{\mathbf F} F^2)^2
,
\end{align}
where $F$ is the Young--Mills curvature two-form of the gauge group $G$,  the coefficients $A_{{\bf R}_i}$, $B_{{\bf R}_i}$, and $C_{{\bf R}_i}$ are defined by the trace identities with respect to a reference representation ${\bf F}$:
\begin{equation} 
\mathrm{tr}_{{\bf R}_i}\  F^2 = A_{{\bf R}_i} \mathrm{tr}_{\bf F}\  F^2 , \quad 
\mathrm{tr}_{{\bf R}_i}\  F^4 =  B_{{\bf R}_i} \mathrm{tr}_{\bf F}\   F^4  +  C_{{\bf R}_i} (\mathrm{tr}_{\bf F}\ F^2)^2.
\end{equation}
In a theory with at least two quartic Casimirs, to  satisfy the anomaly cancellation conditions, the coefficient of $\mathrm{tr}_{\mathbf F}  F^4$ must vanish,\footnote{
In the case of a Lie algebra  with only one quartic Casimir, we simply have $B_R=0$. 
That is the case for a Lie algebra of type A$_1$, A$_2$, G$_2$, F$_4$, E$_6$, E$_7$, or E$_8$. 
}
\begin{align}
 B_{\mathbf{ adj}}   -\sum_i n_{{\bf R}_i} B_{{\mathbf R}_i}=0,
 \end{align}
and the anomaly cancellation polynomial in equation \eqref{eqn.I8def} needs to factorize.

Local anomalies are canceled by the Green-Schwarz mechanism when the polynomial I$_8$ factorizes \cite{Green:1984bx,Sagnotti:1992qw,Schwarz:1995zw}. 
The modification of the field strength $H$ of  the antisymmetric tensor $B$ is 
\begin{equation}
H= dB + \frac{1}{2} K \omega_{3L}+ \frac{2}{\lambda}S\omega_{a,3Y}, 
\end{equation}
where  $\omega_{3L}$ and $\omega_{a,3Y}$ are respectively the gravitational and Yang-Mills Chern--Simons  terms. 
 If I$_8$ factors as 
 \begin{equation}
 \text{I}_8= X\cdot  X,
 \end{equation}
  then the anomaly is canceled by adding the following Green-Schwarz counter-term 
\begin{equation}
\Delta L_{GS}\propto \frac{1}{2} B \wedge X.
\end{equation}
Following Sadov \cite{Sadov:1996zm}, to cancel the local anomalies, we consider
\begin{equation}
X= \frac{1}{2} K  \mathrm{tr} R^2 + \frac{2}{\lambda} S \mathrm{tr}_{\bf F} F^2,
\label{eq:Xfactor}
\end{equation} where the traces involving $R$ are in the 
six-dimensional representation of the Lorentz group SO$(1,5)$, 
 the  coefficient $\lambda$ is a constant normalization factor chosen such that the  smallest
topological charge of an embedded SU($2$) instanton in the gauge group G is one \cite{Kumar:2010ru, Park, Bernard}. 
This forces $\lambda$ to be the Dynkin index of the fundamental representation of  $G$ as listed in Table \ref{tb:normalization} \cite{Park}. 
\begin{table}[h!]
\begin{center}
\begin{tabular}{|c|c|c|c|c|c|c|c|c|c|}
\hline
 $\mathfrak{g}$ & A$_n$ & B$_n$ & C$_n$ & D$_n$ & E$_8$ & E$_7$ & E$_6$&  F$_4$ & G$_2$ \\
 \hline
 $\lambda$ & $1$ & $2$  & $1$ & $2$ & $60$ & $12$ & $6$ & $6$ & $2$ \\
 \hline  
\end{tabular}
\caption{The normalization factors for each simple gauge algebra. See \cite{Kumar:2010ru}.}
\label{tb:normalization}
\end{center}
\end{table}

If the simple group $G$ is supported on a divisor $S$, the local anomaly cancellation conditions are
\begin{subequations}
\begin{align}
n_T&=9-K^2 , \\
n_H-n_V^{(6)}+29n_T-273 &=0,\\
\left(B_{\bf{adj}}-\sum_{i}n_{\bf{R}_{i}}B_{\bf{R}_{i}}\right)& = 0, \\
\lambda  \left(A_{\bf{adj}}-\sum_{i}n_{\bf{R}_{i}}A_{\bf{R}_{i}}\right) & =6  K\cdot S, \\
\lambda^2 \left(C_{\bf{adj}}-\sum_{i}n_{\bf{R}_{i}}C_{\bf{R}_{i}}\right) & =-3 S ^2.
\end{align}
\end{subequations}

\subsection{Local anomalies for G$_2$, Spin($7$), and Spin($8$)-models}\label{sec:6dourmodels}

In this paper, the reference representation will always be the vector representation: the $\bf{7}$ of G$_2$, the $\bf{7}$ of Spin($7$), and the $\bf{8}_v$ of Spin($8$). 
Trace identities for G$_2$, Spin($7$), and Spin($8$) are obtained from the tables of reference  \cite{Avramis:2005hc} and summarized in Table \ref{Table:Trace}. 
\begin{table}[htb]
\begin{center}
$
\begin{array}{|c|c|c|c|c|}
\hline
G& R& A_R & B_R & C_R\\
\hline
& \bf{7} & 1 &0 &1/4 \\
\cline{2-5}
\text{G}_2 & \bf{14} &4 & 0&5/2 \\
\hline
& \bf{7} &1 & 1&0 \\
\cline{2-5}
\text{Spin($7$)}& \bf{8} &1 &-1/2 &3/8\\
\cline{2-5}
& \bf{21} &5 & -1&3\\
\hline 
& \bf{8}_v & 1& 1&0 \\
\cline{2-5}
\text{Spin($8$)}& \bf{8}_s, \  \bf{8}_c &1 &-1/2 &3/8\\
\cline{2-5}
& \bf{28} &6 & 0 & 3 \\
\hline 
\end{array}
$
\end{center}
\caption{
Trace identities for the smallest representations of G$_2$, Spin($7$), and Spin($8$).  See   \cite{Avramis:2005hc}. 
Given a representation $\bf{R}$ of a Lie group $G$,  the coefficients A$_{\bf R}$,
 B$_{\bf R}$, and C$_{\bf R}$ are defined by the two trace identities 
$\mathrm{tr}_{{\bf R}}\  F^2 = A_{{\bf R}} \mathrm{tr}_{\bf F}\  F^2$  and $
\mathrm{tr}_{{\bf R}}\  F^4 =  B_{{\bf R}} \mathrm{tr}_{\bf F}\   F^4  +  C_{{\bf R}} (\mathrm{tr}_{\bf F}\ F^2)^2$, where  the reference representation ${\bf F}$ is the $\bf{7}$ for G$_2$ and Spin($7$), and the $\bf{8}_v$ for Spin($8$). Since G$_2$ has a unique quartic Casimir invarant, we choose $B_{\bf R}=0$. 
 }\label{Table:Trace}
\end{table}

The group G$_2$ has a unique quartic Casimir, so the equation involving $B_{\bf{R}_i}$ is automatically satisfied. 
The non-gravitational conditions for G$_2$ with matter in the representations $\bf{14}$ and ${\bf 7}$ are 
\begin{align}
4(1-n_{\bf{14}})-n_{\bf{7}}  =3  K\cdot S, \quad 
10(1-n_{\bf{14}})-n_{\bf{7}} =-3 S ^2. 
\end{align}
Keeping in mind that  $2-2g= -K\cdot S-S^2$, the unique solution to the previous two equations is 
\begin{equation}
n_{\bf{14}}=g, \quad n_{\bf{7}}=10(1-g)+3 S^2,
\end{equation}
which reproduces what we had from the triple intersection numbers.

For a Spin($7$)-model, we have matter in the representation $\bf{21}$, $\bf{8}$, and $\bf{7}$, the corresponding non-gravitational anomaly equations are 
 \begin{equation}
-(1-n_{\bf{21}}) +\frac{1}{2} n_{\bf{8}}-n_{\bf{7}}=0
, \quad 5(1-n_{\bf{21}}) -n_{\bf{8}}-n_{\bf{7}}=3 K\cdot S, \quad     4(1-n_{\bf{21}})-\frac{1}{2}n_{\bf{8}}=-S^2.
\end{equation}
Once again, we can completely solve the equations and obtain the unique solution: 
\begin{equation}
n_{\mathbf{7}}= S^2-3 (g-1), \quad n_{\mathbf{8}} =2 S^2-8 (g-1), \quad  n_{\mathbf{21}}=g,
\end{equation}
where we used  $2-2g=-KS -S^2$. 

For a Spin($8$)-model, the non-gravitational anomaly equations are 
\begin{equation}
-n_{\bf{8}_v}+\frac{1}{2} (n_{\bf{8}_s}+ n_{\bf{8}_c})=0, \quad 
6 (1-n_{\bf{28}})-n_{\bf{8}_v}-(n_{\bf{8}_s}+ n_{\bf{8}_c})=3 K\cdot S, \quad 
4  (1-n_{\bf{28}})-\frac{1}{2} (n_{\bf{8}_s}+n_{\bf{8}_c})=- S^2, 
\end{equation}
We can solve these equations for $n_{\bf{8}_v}$, $n_{\bf{28}}$, and $n_{\bf{8}_s}+n_{\bf{8}_c}$: 
 \begin{equation}
 n_{\bf{28}}=g,\quad n_{\bf{8}_v}=4(1-g) + S^2, \quad n_{\bf{8}_s}+n_{\bf{8}_c}=8(1-g)+2 S^2,
 \end{equation}
 where we used  $2-2g=-KS -S^2$. 
Imposing triality, will require  $n_{\bf{8}_v}=n_{\bf{8}_c}=n_{\bf{8}_s}$, which would give 
\begin{equation}
 n_{\bf{28}}=g,\quad n_{\bf{8}_v}=n_{\bf{8}_s}=n_{\bf{8}_c}=4(1-g) + S^2.
\end{equation}
\begin{rem}[Number of charged hypermultiplets in 5d and 6d for the Spin($8$)-model]
We notice that for a Spin($8$)-model,  comparing the triple intersection polynomial with the 5d prepotential does completely resolve the values of the number of charged hypermultiplets:  
 $n_{\bf{28}}=g$, $n_{\bf{8}_v}=n_{\bf{8}_s}=n_{\bf{8}_c}=4(1-g) + S^2$. 
In contrast, the  6d anomaly cancellation conditions are less constrained as they only impose   
$n_{\bf{28}}=g$,    $n_{\bf{8}_v}=4(1-g) + S^2$, and $n_{\bf{8}_s}+n_{\bf{8}_c}=8(1-g)+2 S^2$. 
To retrieve the result found in 5d, we have to impose by hand triality. 
Since $\bf{8}_s$ and $\bf{8}_c$ are dual to each other, we can also retrieve $n_{\bf{8}_s}=n_{\bf{8}_c}$ by referring to the CPT invariance.
\end{rem}

We are left with the pure gravitational  anomaly that requires checking  equation.

From Theorem \ref{lem:Hoddge}, we have the Hodge numbers
\
$$
\begin{array}{|c|c|c|c|}
\hline
&  h^{1,1} &h^{2,1} \\ \hline
\textnormal{G}_2   &13-K^2 & 13+29 K^2-18 S^2+48 (g-1)  \\\hline
\textnormal{Spin($7$) }& 14-K^2& 14+29 K^2-22 S^2+64 (g-1)\\\hline
\textnormal{Spin($8$)} & 15-K^2& 15+29 K^2-24 S^2+72 (g-1) \\\hline
\end{array}
$$
We also need the following data on the representations, their dimensions, and the number of their zero weights: 
$$
\begin{array}{|c||c|c||c|c|c||c|c|c|c|}
\cline{1-10}
\multicolumn{1}{|c||}{\mathfrak{g}}& \multicolumn{2}{c||}{\text{G$_2$}} & \multicolumn{3}{c||}{\text{Spin}(7)} & \multicolumn{4}{c|}{\text{Spin($8$)}}\\
\hline
\multicolumn{1}{|c||}{n_V=\text{dim}\ G}& \multicolumn{2}{c||}{$14$} & \multicolumn{3}{c||}{21} & \multicolumn{4}{c|}{28}\\
\hline
\mathbf{R}& \mathbf{14} & \mathbf{7} & \mathbf{21} &\mathbf{7} &\mathbf{8}  &\mathbf{28}&  \mathbf{8}_v &\mathbf{8}_s &\mathbf{8}_c\\
\hline 
\rm{dim}{\mathbf R} -\rm{dim} {\mathbf R}_{0}&12 & 6 & 18& 6 & 8 & 24& 8 & 8 & 8 \\
\hline
\end{array}
$$
The six-dimensional anomaly conditions gave the following information on the number of charged hypermultiplets: 
\begin{align}
\label{eq:nRg2}
\textnormal{G}_2: &&\quad  n_{\mathbf{7}}&=-10(g-1)+3S^2,    &     &n_{\mathbf{14}}=g, \\
\textnormal{Spin($7$)}: &&\quad  n_{\mathbf{7}}&= S^2-3 (g-1),  \quad n_{\mathbf{8}} &=2 S^2-8 (g-1), \quad & n_{\mathbf{21}}=g, \\
\textnormal{Spin($8$)}: &&\quad  n_{\mathbf{8_v}}&=S^2-4(g-1),  \quad n_{\mathbf{8_s}}+n_{\mathbf{8_c}}&=2S^2-8(g-1), \quad & n_{\mathbf{28}}=g. 
\end{align}

We recall that 
\begin{align}
n_{\text T}=9-K^2, \quad 
n_{\text H}=n_{\text H}^0 +n_{\text H}^{\text{ch}}, \quad n_{\text H}^0 =h^{2,1}(Y)+1, \quad 
 n^{\text{ch}}_{\text H} =
  \sum_{i} ( \mathrm{dim}\ \mathbf{R}_i-\mathrm{dim} \mathbf{R}_{i,0}) n_{\mathbf R_i}\nonumber
  \end{align}
We then observe that in each case, the gravitational anomaly is satisfied: 
\begin{equation}\nonumber
n_H-n_V+29n_T-273=0.
\end{equation}

When all anomaly cancellations are satisfied, we have 
 $$
\begin{aligned}
X^{(2)} 
=
3K \cdot S\  \mathrm{tr}_{\mathbf F}  F^2 , \quad 
X^{(4)} 
=-\frac{3}{4}S^2 \ (\mathrm{tr}_{\mathbf F}\ F^2)^2, 
\end{aligned}
$$
and the  anomaly cancellation polynomial factorizes as a perfect square \cite{Sadov:1996zm} 
$$
\begin{aligned}
I_8 &=\frac{K^2}{8} (\mathrm{tr} \   R^2)^2+\frac{1}{2}  K \cdot S\  \mathrm{tr}_{\mathbf F}  F^2\  (\mathrm{tr}\ R^2)+\frac{1}{2}S^2 \ (\mathrm{tr}_{\mathbf F}\ F^2)^2\\
&=\frac{1}{2} \Big( \frac{1}{2  }c_1(B) \ \mathrm{tr} \   R^2 -  S\  \mathrm{tr}_{\mathbf F}\ F^2 \Big)^2.
\end{aligned}
$$
Since I$_8$ factorizes, the anomaly can be canceled by the Green-Schwarz mechanism.

\subsection{Global anomaly for G$_2$-models} \label{sec:6dgravG2}

The non-triviality of the homotopy group $\pi_6(\text{G$_2$})=\mathbb{Z}/3\mathbb{Z}$ implies that G$_2$ gauge theories in  six-spacetime dimension can have global anomalies. 
The absence of these global anomalies require the following arithmetic condition 
\begin{equation}
1- \sum_i n_{\bf{R}_i} A_{\bf{R}_i}=0 \quad (\mathrm{mod}\ 3).
\end{equation}
This condition is automatically satisfied with the values we have obtained for $n_{\bf{14}}$ and $n_{\bf{7}}$, given by equation \eqref{eq:nRg2}. The global anamalies were investigated for the case of \text{SU(2)$\times$G$_2$}-models in a similar manner in \cite{SU2G2}.

\section*{Acknowledgements}
The authors are grateful to Paolo Aluffi, Lara Anderson, Patrick Jefferson, Shu-Heng Shao, Washington Taylor,  and Shing-Tung Yau for helpful discussions. 
M.E. and M.J.K. are grateful to  the Field Institute  and Virginia Tech for their hospitality during part of this research. 
M.J.K. 
would like to thank the organizers of the 2018 Summer Workshop at the Simons Center
for Geometry and Physics for their hospitality and support during part of
this work.  
The authors are thankful to all the participants of the workshop  ``A Three-Workshop Series on the Mathematics and Physics of F-theory'' supported by the National Science Foundation (NSF) grant DMS-1603247. 
M.E. is supported in part by the National Science Foundation (NSF) grant DMS-1406925  and DMS-1701635 ``Elliptic Fibrations and String Theory''.
Part of this research was conducted while Jagadeesan was an Economic Design Fellow at the Harvard Center of Mathematical Sciences and Applications. Jagadeesan gratefully acknowledges the support of a National Science Foundation Graduate Research Fellowship under Grant No. DGE-1745303.
M.J.K. was supported by the National Science Foundation (NSF) grant PHY-1352084.

\bibliography{mboyoBib}

\begin{thebibliography}{10}



\bibitem{AE1}
P.~Aluffi and M.~Esole,
\newblock {Chern class identities from tadpole matching in type IIB and
  F-theory}.
\newblock {\em JHEP}, 03:032, 2009.

\bibitem{AE2}
P.~Aluffi and M.~Esole,
\newblock {New Orientifold Weak Coupling Limits in F-theory}.
\newblock {\em JHEP}, 02:020, 2010.

\bibitem{Anderson:2015cqy} 
  L.~B.~Anderson, J.~Gray, N.~Raghuram and W.~Taylor,
  Matter in transition, 
  JHEP {\bf 1604}, 080 (2016)


\bibitem{Arras:2016evy} 
  P.~Arras, A.~Grassi and T.~Weigand,
  Terminal Singularities, Milnor Numbers, and Matter in F-theory,
  arXiv:1612.05646 [hep-th].

\bibitem{Aspinwall:1996nk} 
  P.~S.~Aspinwall and M.~Gross,
 The SO(32) heterotic string on a K3 surface,
  Phys.\ Lett.\ B {\bf 387}, 735 (1996)

\bibitem{Aspinwall:2000kf} 
  P.~S.~Aspinwall, S.~H.~Katz and D.~R.~Morrison,
    Lie groups, Calabi--Yau threefolds, and F theory,
  Adv.\ Theor.\ Math.\ Phys.\  {\bf 4}, 95 (2000)
  [hep-th/0002012].
  
  
 \bibitem{Avramis:2005hc} 
  S.~D.~Avramis and A.~Kehagias,
  A Systematic search for anomaly-free supergravities in six dimensions,
  JHEP {\bf 0510}, 052 (2005).

\bibitem{Batyrev.Betti}
V.~V. Batyrev.
\newblock Birational {C}alabi-{Y}au {$n$}-folds have equal {B}etti numbers.
\newblock In {\em New trends in algebraic geometry ({W}arwick, 1996)}, volume
  264 of {\em London Math. Soc. Lecture Note Ser.}, pages 1--11. Cambridge
  Univ. Press, Cambridge, 1999.
%
\bibitem{Bernard} 
  C.~W.~Bernard, N.~H.~Christ, A.~H.~Guth and E.~J.~Weinberg,
   Instanton Parameters for Arbitrary Gauge Groups,
  Phys.\ Rev.\ D {\bf 16}, 2967 (1977).

\bibitem{Bershadsky:1996nh}
M.~Bershadsky, K.~A. Intriligator, S.~Kachru, D.~R. Morrison, V.~Sadov, and
  C.~Vafa,
\newblock {Geometric singularities and enhanced gauge symmetries}.
\newblock {\em Nucl. Phys.}, B481:215--252, 1996.
%

\bibitem{Bourbaki.GLA79} N.~Bourbaki, {\it Groups and Lie Algebras. Chap. 7-9.}, 
Translated from the 1975 and 1982 French originals by A.~Pressley. Elements of Mathematics (Berlin).
Springer-Verlag, Berlin, 2005.   


\bibitem{Braun:2013cb} 
  A.~P.~Braun and T.~Watari,
  On Singular Fibres in F-Theory,
  JHEP {\bf 1307}, 031 (2013)
  doi:10.1007/JHEP07(2013)031
  [arXiv:1301.5814 [hep-th]].

\bibitem{Cadavid:1995bk} 
  A.~C.~Cadavid, A.~Ceresole, R.~D'Auria and S.~Ferrara,
  Eleven-dimensional supergravity compactified on Calabi--Yau threefolds,
  Phys.\ Lett.\ B {\bf 357}, 76 (1995)
  


\bibitem{Carter}
R.~W. Carter.
\newblock {\it Lie algebras of finite and affine type}, 
  Cambridge Studies in Advanced Mathematics 96.
\newblock Cambridge University Press, Cambridge, 2005.
%
\bibitem{Cattaneo}
A.~Cattaneo. Crepant resolutions of Weierstrass threefolds and non-Kodaira fibers.
\href{https://arxiv.org/abs/1307.7997}{\tt  arXiv:1307.7997v2   [math.AG]}.


\bibitem{CDE}
A.~Collinucci, F.~Denef, and M.~Esole.
\newblock {D-brane Deconstructions in IIB Orientifolds}.
\newblock {\em JHEP}, 02:005, 2009.


\bibitem{Deligne.Formulaire}
P.~Deligne, Courbes \'elliptiques: formulaire d'apr\`es J. Tate. (French) {\it Modular functions of one variable}, IV ({\it Proc. Internat. Summer Schoo}l, Univ. Antwerp, Antwerp, 1972), pp. 53--73. Lecture Notes in Math., Vol. 476, Springer, Berlin, 1975.



\bibitem{Diaconescu:1998cn} 
  D.~E.~Diaconescu and R.~Entin,
  Calabi--Yau spaces and five-dimensional field theories with exceptional gauge symmetry, 
  Nucl.\ Phys.\ B {\bf 538}, 451 (1999).

\bibitem{Erler}
 J.~Erler,Anomaly Cancellation in Six Dimensions,  J. Math. Phys. 35 (1994) 1819. 

\bibitem{EFY}
M.~Esole, J.~Fullwood, and S.-T. Yau.
\newblock {$D_5$ elliptic fibrations: non-Kodaira fibers and new orientifold
  limits of F-theory}.
\newblock Commun.\ Num.\ Theor.\ Phys.\  {\bf 09}, no. 3, 583 (2015).
%
\bibitem{EJJN1}
M.~Esole, S.~G. Jackson, R.~Jagadeesan, and A.~G. No{\"e}l.
\newblock {Incidence Geometry in a Weyl Chamber I: GL$_n$}, 
\newblock arXiv:1508.03038 [math.RT].



\bibitem{EJJN2} 
  M.~Esole, S.~G.~Jackson, R.~Jagadeesan and A.~G.~No{\"e}l,
  Incidence Geometry in a Weyl Chamber II: $SL_n$,
  arXiv:1601.05070 [math.RT].

\bibitem{SU2SU3} 
  M.~Esole, R.~Jagadeesan and M.~J.~Kang,
``48 Crepant Paths to $\text{SU}(2)\!\times\!\text{SU}(3)$,''
  arXiv:1905.05174 [hep-th].


\bibitem{SO6} 
  M.~Esole and P.~Jefferson,
  The Geometry of SO(3), SO(5), and SO(6) models,
  arXiv:1905.12620 [hep-th].

\bibitem{Euler} 
  M.~Esole, P.~Jefferson and M.~J.~Kang,
  Euler Characteristics of Crepant Resolutions of Weierstrass Models, 
  Commun. Math. Phys. (2019). https://doi.org/10.1007/s00220-019-03517-1,
  arXiv:1703.00905 [math.AG].

\bibitem{F4} 
  M.~Esole, P.~Jefferson and M.~J.~Kang,
  The Geometry of F$_4$-Models,
  arXiv:1704.08251 [hep-th].
  
\bibitem{SU2G2}
 M.~Esole and M.~J.~Kang,
  The Geometry of the SU(2)$\times$ G$_2$-model,
  JHEP {\bf 1902}, 091 (2019)

\bibitem{SO4} 
  M.~Esole and M.~J.~Kang,
  Flopping and Slicing: SO(4) and Spin(4)-models,
  arXiv:1802.04802 [hep-th].

\bibitem{EKY2} 
  M.~Esole, M.~J.~Kang and S.~T.~Yau,
  Mordell-Weil Torsion, Anomalies, and Phase Transitions,
  arXiv:1712.02337 [hep-th].

\bibitem{EKY} 
  M.~Esole, M.~J.~Kang and S.~T.~Yau,
  A New Model for Elliptic Fibrations with a Rank One Mordell-Weil Group: I. Singular Fibers and Semi-Stable Degenerations,
  arXiv:1410.0003 [hep-th].
  
  
  \bibitem{SU2U1} 
  M.~Esole and S.~Pasterski,
  The suspended pinch point and SU($2$)$\times$U($1$) gauge theories,
  arXiv:1906.07157 [hep-th].
  
  \bibitem{Esole:2018vnm} 
  M.~Esole and S.~Pasterski,
  D$_4$-flops of the E$_7$-model,
  arXiv:1901.00093 [hep-th].
  
  
  \bibitem{Esole:2012tf} 
  M.~Esole and R.~Savelli,
 `Tate Form and Weak Coupling Limits in F-theory,
  JHEP {\bf 1306}, 027 (2013)

\bibitem{ES} 
  M.~Esole and S.~H.~Shao,
  ``M-theory on Elliptic Calabi--Yau Threefolds and 6d Anomalies,''
  arXiv:1504.01387 [hep-th].


\bibitem{ESY1}
M.~Esole, S.-H. Shao, and S.-T. Yau.
\newblock {Singularities and Gauge Theory Phases}.
\newblock {\em Adv. Theor. Math. Phys.}, 19:1183--1247, 2015.

\bibitem{ESY2} 
  M.~Esole, S.~H.~Shao and S.~T.~Yau,
  Singularities and Gauge Theory Phases II,
  Adv.\ Theor.\ Math.\ Phys.\  {20}, 683 (2016)



\bibitem{EY} 
  M.~Esole and S.~T.~Yau,
  Small resolutions of SU(5)-models in F-theory,
  Adv.\ Theor.\ Math.\ Phys.\   {17}, no. 6, 1195 (2013)

\bibitem{Fulton}
W.~Fulton.
\newblock {\em Intersection theory}, Ergebnisse der Mathematik
  und ihrer Grenzgebiete. 3. Folge,vol. 2,  Springer-Verlag, Berlin; Second edition, 1998.

\bibitem{Fullwood:SVW}
J.~Fullwood.
\newblock {On generalized Sethi-Vafa-Witten formulas}.
\newblock {\em J. Math. Phys.}, 52:082304, 2011.

\bibitem{GM1}
A.~Grassi and D.~R. Morrison.
\newblock Group representations and the {E}uler characteristic of elliptically
  fibered {C}alabi-{Y}au threefolds.
\newblock {\em J. Algebraic Geom.}, 12(2):321--356, 2003.
%
\bibitem{Green:1984bx} 
  M.~B.~Green, J.~H.~Schwarz and P.~C.~West,
  Anomaly Free Chiral Theories in Six-Dimensions,
  Nucl.\ Phys.\ B {\bf 254}, 327 (1985).

\bibitem{Grimm:2011fx} 
  T.~W.~Grimm and H.~Hayashi,
  F-theory fluxes, Chirality and Chern--Simons theories.
  JHEP {\bf 1203}, 027 (2012)



\bibitem{Grimm:2015zea} 
  T.~W.~Grimm and A.~Kapfer,
  ``Anomaly Cancelation in Field Theory and F-theory on a Circle,''
  JHEP {\bf 1605}, 102 (2016)
.






\bibitem{Hayashi:2014kca}
H.~Hayashi, C.~Lawrie, D.~R. Morrison, and S.~Sch\"afer-Nameki.
\newblock {Box Graphs and Singular Fibers}.
\newblock {\em JHEP}, 1405:048, 2014.


\bibitem{Humphreys}
J.~Humphreys, {\it Introduction to Lie Algebras and Representation Theory}, Graduate Texts in Mathematics 9, Springer-Verlag, New York, 1972. 




\bibitem{IMS}
K.~A. Intriligator, D.~R. Morrison, and N.~Seiberg.
\newblock {Five-dimensional supersymmetric gauge theories and degenerations of
  Calabi--Yau spaces}.
\newblock {\em Nucl.Phys.}, B497:56--100, 1997.
%
\bibitem{Katz:2011qp}
S.~Katz, D.~R. Morrison, S.~Sch\"afer-Nameki, and J.~Sully.
\newblock {Tate's algorithm and F-theory}.
\newblock {\em JHEP}, 1108:094, 2011.

\bibitem{Klevers:2016jsz} 
  D.~Klevers and W.~Taylor,
  Three-Index Symmetric Matter Representations of SU(2) in F-Theory from Non-Tate Form Weierstrass Models,
  JHEP {\bf 1606}, 171 (2016)

\bibitem{Kodaira}
K.~Kodaira.
\newblock On compact analytic surfaces. {II}, {III}.
\newblock {\em Ann. of Math. (2) 77 (1963), 563--626; ibid.}, 78:1--40, 1963.

\bibitem{Kumar:2010ru} 
  V.~Kumar, D.~R.~Morrison and W.~Taylor,
  Global aspects of the space of 6d N = 1 supergravities,
  JHEP {\bf 1011}, 118 (2010)



\bibitem{Lang} 
S.~Lang, {\it Algebra. Revised Third Edition}, Graduate Texts in Mathematics 211,  Springer-Verlag, New York,  2002. 

\bibitem{Lawrie:2012gg} 
  C.~Lawrie and S.~Sch\"afer-Nameki,
  The Tate Form on Steroids: Resolution and Higher Codimension Fibers,'
  JHEP {\bf 1304}, 061 (2013).

\bibitem{Liu}
Q.~Liu, {\it Algebraic Geometry and Arithmetic Curves}, Oxford Graduate Texts in Mathematics 6, Translated from French originals by R.~ Ern\'e, Oxford University Press, Oxford, 2002. 


\bibitem{Marsano}
J.~Marsano and S.~Sch\"afer-Nameki.
\newblock {Yukawas, G-flux, and Spectral Covers from Resolved Calabi--Yau's}.
\newblock {\em JHEP}, 11:098, 2011.


\bibitem{Matsuki.Weyl}
K.~Matsuki,  Weyl groups and birational transformations among minimal
  models, \href{http://dx.doi.org/10.1090/memo/0557}{{\em Mem. Amer. Math.
  Soc.} {\bfseries 116} no.~557, (1995) vi+133}.
  
\bibitem{Miranda.smooth}
R.~Miranda.
\newblock Smooth models for elliptic threefolds.
\newblock In {\em The birational geometry of degenerations ({C}ambridge,
  {M}ass., 1981)}, volume~29 of {\em Progr. Math.}, pages 85--133.
  Birkh{\"a}user Boston, Mass., 1983.
%
\bibitem{Morrison:2011mb} 
  D.~R.~Morrison and W.~Taylor,
  ``Matter and singularities,''
  JHEP {\bf 1201}, 022 (2012)
 
%
\bibitem{Morrison:2012np} 
  D.~R.~Morrison and W.~Taylor,
  ``Classifying bases for 6d F-theory models,''
  Central Eur.\ J.\ Phys.\  {\bf 10}, 1072 (2012)
 %
\bibitem{Morrison:1996pp}
D.~R. Morrison and C.~Vafa.
\newblock {Compactifications of F theory on Calabi--Yau threefolds. 2.}
\newblock {\em Nucl. Phys.}, B476:437--469, 1996.
%
\bibitem{MumfordSuominen} 
D.~Mumford and K.~Suominen, {\it Introduction to the theory of moduli}, in Algebraic Geometry, Oslo 1970, Proceedings of the 5th Nordic 
summer school in Math, Wolters-Noordhoff, 1972, 171-222. 

\bibitem{Nakayama.OWM}
N.~Nakayama. On Weierstrass models. In Algebraic geometry and commutative algebra, Vol. II, pp. 405--431. Kinokuniya, Tokyo, 1988.

\bibitem{Neron}A. N\'eron, Mod\`eles Minimaux des Vari\'et\'es Abeliennes sur les Corps Locaux et
Globaux, Publ. Math. I.H.E.S. 21, 1964, pp. 361--482.



\bibitem{Papadopoulos} I.~Papadopoulos, {\it Sur la classification de N\'eron des courbes elliptiques en caract\'eristique r\'esiduelle $2$ et $3$ } (French), Journal of Number Theory {44}, 119-152 (1993).

\bibitem{Park} 
  D.~S.~Park,
  Anomaly Equations and Intersection Theory,
  JHEP {\bf 1201}, 093 (2012)

\bibitem{Salam} 
  S.~Randjbar-Daemi, A.~Salam, E.~Sezgin and J.~A.~Strathdee,
  An Anomaly Free Model in Six-Dimensions,
  Phys.\ Lett.\  {\bf 151B}, 351 (1985)

\bibitem{Sadov:1996zm} 
  V.~Sadov,
  Generalized Green-Schwarz mechanism in F theory,
  Phys.\ Lett.\ B {\bf 388}, 45 (1996).

\bibitem{Sagnotti:1992qw} 
  A.~Sagnotti,
  A Note on the Green-Schwarz mechanism in open string theories,
  Phys.\ Lett.\ B {\bf 294}, 196 (1992)

\bibitem{Schwarz:1995zw} 
  J.~H.~Schwarz,
  Anomaly-free supersymmetric models in six-dimensions,
  Phys.\ Lett.\ B {\bf 371}, 223 (1996)

\bibitem{Shanks}
D.~Shanks, The simplest Cubic Fields, Math.\ Comp.\  28, 1974, pp. 1137--1152.


\bibitem{Seiberg.1988}
N. Seiberg, Observations on the moduli space of superconformal field theories,  Nucl. Phys. B303 (1988) 286.
\bibitem{Silverman.II}
J.~H.~Silverman, {\em Advanced Topics in the Arithmetic of Elliptic Curves}, Graduate Texts
in Mathematics 151, Springer-Verlag, New-York, 1994.

\bibitem{Szydlo.Thesis}
M.~G. Szydlo.
\newblock {\em Flat regular models of elliptic schemes}.
\newblock ProQuest LLC, Ann Arbor, MI, 1999.
\newblock Thesis (Ph.D.)--Harvard University.

\bibitem{Tate.Alg}J.~Tate, 
Algorithm for determining the type of a singular fiber in an elliptic pencil. {\it Modular functions of one variable}, IV ({\it Proc. Internat. Summer School}, Univ. Antwerp, Antwerp, 1972), pp. 33--52. Lecture Notes in Mathematics 476, Springer-Verlag, Berlin, 1975.

\bibitem{Witten}
E.~Witten, Phase transitions in M theory and F theory, Nucl. Phys. B {\bf 471}, 195 (1996).

%

\bibitem{Witten.Small} 
  E.~Witten,
  Small instantons in string theory,
  Nucl.\ Phys.\ B {\bf 460}, 541 (1996)
\end{thebibliography}

\end{document}